\newcommand{\longversion}[1]{#1}
\newcommand{\shortversion}[1]{}
\newcommand{\myparagraph}[1]{\smallskip\noindent{\textbf{\sffamily #1} \ }}
\newtheorem{theorem}{Theorem}
\newtheorem{lemma}{Lemma}[section]
\newtheorem{claim}{Claim}[section]
\newtheorem{corollary}{Corollary}
\newtheorem{definition}{Definition}[section]
\newtheorem{observation}{Observation}[section]
\newtheorem{redrule}{Reduction Rule}
\newcommand{\defparproblem}[4]{
  \vspace{1mm}
\noindent\fbox{
  \begin{minipage}{0.96\textwidth}
  \begin{tabular*}{\textwidth}{@{\extracolsep{\fill}}lr} #1  & {\bf{Parameter:}} #3 \\ \end{tabular*}
  {\bf{Input:}} #2  \\
  {\bf{Question:}} #4
  \end{minipage}
  }
  \vspace{1mm}
}
\newcommand{\FPT}{\text{\normalfont FPT}}
\newcommand{\NPH}{\textsf{NP}-hard}
\newcommand{\delqbackdoor}{{\sc Deletion \qhorn\ Backdoor Set Detection}}
\newcommand{\skewsymmc}{$d$-{\sc Skew-Symmetric Multicut}}
\newcommand{\oneskewsymmc}{$1$-{\sc Skew-Symmetric Multicut}}
\newcommand{\threeskewsymmc}{$3$-{\sc Skew-Symmetric Multicut}}
\newcommand{\edgebip}{{\sc Edge Bipartization}}
\newcommand{\qhorn}{\text{\normalfont{$q$-Horn}}}
 \newcommand{\twocnf}{2-\text{\normalfont{CNF}}}
\newcommand{\almsat}{{\sc Almost} $2$-{\sc SAT}$(v)$}
\newcommand{\almsatc}{{\sc Almost} $2$-{\sc SAT}}
\newcommand{\agvcfull}{{\sc Above Guarantee Vertex Cover}}
\newcommand{\agvc}{{\sc AGVC}}
\newcommand{\mwc}{{\sc  Multiway Cut}}
\newcommand{\oct}{{\sc OCT}}
\newcommand{\octfull}{{\sc Odd Cycle Transversal}}
\newcommand{\bigoh}{{\mathcal{O}}}
\newcommand{\No}{{\sc No}}
\newcommand{\Yes}{{\sc Yes}}
\newcommand{\SB}{\{\,}
\newcommand{\SE}{\,\}}
\newcommand{\Card}[1]{|#1|}
\newcommand{\D}[1]{\mbox{D$($#1$)$}}
\newcommand{\var}{\mbox{var}}
\newcommand{\lit}{\mbox{lit}}
\newcommand{\tail}{\mbox{\bf Tail}}
\newcommand{\head}{\mbox{\bf Head}}
\newcommand{\CCC}{\mathcal{C}}
\newcommand{\hy}{\hbox{-}\nobreak\hskip0pt}
\newcommand{\justminus}{-}
\title{\bf Linear Time Parameterized Algorithms via Skew-Symmetric Multicuts\thanks{Supported by Parameterized Approximation, ERC Starting Grant 306992.}}
\author{
{\large\sc {M. S. Ramanujan\thanks{Institute of Mathematical Sciences, India. \texttt{\{msramanujan|saket\}@imsc.res.in}}}}\addtocounter{footnote}{-1}
\and {\large\sc Saket Saurabh\footnotemark~\thanks{University of Bergen, Norway.}}}
\date{}
\begin{document}

\maketitle

\begin{abstract} 
A skew-symmetric graph $(D=(V,A),\sigma)$ is a directed graph $D$ with an involution $\sigma$ on the set of vertices and arcs. Flows on skew-symmetric graphs have been used to generalize maximum flow and maximum matching problems on graphs, initially by Tutte [1967], and later by 
Goldberg and Karzanov [1994, 1995]. In this paper, we introduce a separation problem, \skewsymmc{}, where we are given a skew-symmetric graph $D$, a family of $\cal T$ of $d$-sized subsets of vertices and an integer $k$. The objective is to decide if there is a set $X\subseteq A$ of $k$ arcs such that every set $J$ in the family has a vertex 
$v$ such that $v$ and $\sigma(v)$ are in different connected components of $D'=(V,A\setminus (X\cup \sigma(X))$. 
In this paper, we give an algorithm for \skewsymmc{} which runs in time $\bigoh((4d)^{k}(m+n+\ell))$, where $m$ is the number of arcs in the graph, $n$ the number of vertices and $\ell$ the length of the family given in the input. 

This problem, apart from being independently interesting, also abstracts out and captures the main combinatorial obstacles towards 
solving numerous classical problems. Our algorithm for \skewsymmc{} paves the way for the first {\em linear time parameterized 
algorithms} for several problems. We demonstrate its utility by obtaining following linear time parameterized algorithms. 

\begin{itemize}
 \setlength{\itemsep}{-2pt}

  \item We show that {\almsatc} is a special case of {\oneskewsymmc}, resulting in an algorithm for {\almsatc} which runs in time $\bigoh(4^kk^4\ell)$\footnote{We have recently learnt that independently to our work and using very different techniques, Iwata, Oka and Yoshida have also given a $\bigoh(4^k\ell)$ parameterized algorithm for {\almsatc}.} where $k$ is the size of the solution and $\ell$ is the length of the input formula. Then, using linear time parameter preserving reductions to {\almsatc}, we obtain algorithms for  {\sc Odd Cycle Transversal} and {\sc Edge Bipartization} which run in time $\bigoh(4^kk^4(m+n))$  and $\bigoh(4^kk^5(m+n))$ respectively where $k$ is 
size of the solution, $m$ and $n$ are the number of edges and vertices respectively. This resolves an open problem posed by Reed, Smith and Vetta [Operations Research Letters, 2003] and improves upon the earlier almost linear time algorithm of Kawarabayashi and Reed [SODA, 2010]. 
  
\item We show that {\delqbackdoor} is a special case of {\threeskewsymmc}, giving us an algorithm for {\delqbackdoor} which runs in time $\bigoh(12^kk^5\ell)$ where $k$ is the size of the solution and $\ell$ is the length of the input formula. This gives the first fixed-parameter tractable 
algorithm for this problem answering a question posed in a paper by a superset of the authors [STACS, 2013]. 
 Using this result, we get an algorithm for {\sc Satisfiability} which runs in time $\bigoh(12^kk^5\ell)$ where $k$ is the size of the smallest $\qhorn$ deletion backdoor set, with $\ell$ being the length of the input formula.  

\end{itemize}


 \end{abstract}

\newpage
\section{Introduction}

A skew-symmetric graph is a digraph $D=(V,A)$ along with an involution $\sigma:V\cup A\rightarrow V\cup A$ where for every $x\in V\cup A$, $\sigma(x)\neq x$ and $\sigma(\sigma(x))=x$ and for every $x\in V$ ( $x\in A$), $\sigma(x)\in V$ (respectively $\sigma(x)\in A$).
Skew-symmetric graphs were introduced under the name of \emph{antisymmetrical digraphs} by Tutte~\cite{Tutte67} along with a notion of self-conjugate flows as a generalization of maximum flows in networks and matchings in graphs and subsequently by Zelinka~\cite{Zelinka76} and Zaslavsky~\cite{Zaslavsky91}. Goldberg and Karzanov~\cite{GoldbergK96,GoldbergK04} revisited the work of Tutte and gave unified proofs for the analogues of the flow-decomposition and max-flow min-cut theorems on these graphs.


In this paper we use skew-symmetric graphs and an appropriate notion of separators on them as a model to abstract out ``cut properties'' underlying 
several problems in parameterized complexity. In parameterized complexity each problem instance comes with a parameter $k$ and  a central notion in parameterized complexity is {\em fixed parameter tractability} ({\FPT}). This means, for a given instance $(x,k)$, solvability in time $f(k)\cdot p(|x|)$, where $f$ is an arbitrary function of $k$ and $p$ is a polynomial in the input size.

We now  introduce the main problem studied in this paper -- a variant of the {\sc Multicut} problem on skew-symmetric graphs.

\medskip 

 \defparproblem{{\skewsymmc}}{A skew-symmetric graph $D=((V,A),\sigma)$, a family ${\cal T}$ of $d$-sets of vertices, integer $k$.}{$k$}{Is there a set $S\subseteq A$ such that $S=\sigma(S)$, $\vert S\vert\leq 2k$, and for any $d$-set $\{v_1,\dots,v_d\}$ in $\cal T$, there is a vertex $v_i$ such $v_i$ and $\sigma(v_i)$ lie in distinct strongly connected components of $D\setminus S$?} \vspace{10 pt}
 
\medskip 
 

\noindent 
The set $S$ is the above definition is called a \emph{skew-symmetric multicut} for the given instance.
Our main result is a \FPT{} algorithm for the above problem where the dependence of the running time of the algorithm on the input size is linear. Formally,
 
 \begin{theorem}\label{thm:skewsymmc}
  There is an algorithm that, given an instance $(D=(V,A,\sigma),{\cal T},k)$ of
  {\skewsymmc}, runs in time $\bigoh((4d)^{k}k^4(\ell+m+n))$ and either returns a skew-symmetric multicut of size at
  most $2k$ or correctly concludes that no such set exists, where $m=\vert A\vert$ and $n=\vert V\vert$, and $\ell$, the length of the family ${\cal T}$, is defined as $d\cdot\vert{\cal T}\vert$.
\end{theorem}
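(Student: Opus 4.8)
The plan is to follow the now-standard template for FPT cut problems—iterative compression combined with "important separators" / branching—adapted to the skew-symmetric setting. First I would reduce to a \emph{compression} variant: using iterative compression on the arc set (or on the family $\cal T$), we may assume we are given a skew-symmetric multicut $S_0$ of size at most $2k+2$ (or $2k+O(1)$), and we must decide whether a multicut of size at most $2k$ exists. The overhead of iterative compression is linear in the number of arcs (we add arcs/sets one at a time), which is compatible with the target running time, provided each compression step costs $\bigoh((4d)^k k^{O(1)}(\ell+m+n))$.

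For a single compression step, the key point is to understand what it means to separate $v_i$ from $\sigma(v_i)$ in a skew-symmetric graph by a \emph{skew-symmetric} arc set $S$ (i.e.\ $S=\sigma(S)$). Here I expect the main obstacle, and the main place where the skew-symmetric structure is essential: in an ordinary digraph one would guess, for each $d$-set, which coordinate $v_i$ is "cut" and then solve a min-cut; but the self-conjugacy constraint $S=\sigma(S)$ couples the cut around $v_i$ with the cut around $\sigma(v_i)$, so a naive min-cut does not respect skew-symmetry, and the Goldberg–Karzanov theory of \emph{regular} (self-conjugate) cuts must be invoked to guarantee that a minimum skew-symmetric $v$–$\sigma(v)$ separating arc set exists and can be computed in linear time via a single (skew-symmetric) flow computation. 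I would set up, for each $d$-set $J=\{v_1,\dots,v_d\}\in\cal T$, the requirement that \emph{some} $v_i$ has $v_i$ and $\sigma(v_i)$ in distinct strong components of $D\setminus S$, and branch on which index $i\in\{1,\dots,d\}$ witnesses this—this is the source of the factor $d^k$ (amortized: one branch per unit of budget spent).

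The second ingredient is a bounded-depth branching that spends budget. Having fixed, for a violated $d$-set, the witness coordinate $i$, we must force $v_i$ and $\sigma(v_i)$ into distinct strong components. Using the given approximate solution $S_0$ and the standard "pushing" argument, I would argue that there is an optimal solution that is "closest" to $S_0$ in a suitable sense, and that the arcs of any minimum skew-symmetric $v_i$–$\sigma(v_i)$ separator that is "farthest" from $v_i$ (the analogue of the important-separator/extremal-separator argument, transported to skew-symmetric graphs so that the separator is automatically self-conjugate) can be branched over: either an arc of such an extremal separator is in the solution (decreasing $k$ and the size of the separator needed), or it is not, in which case we may contract it and shrink the instance. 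The factor $4$ in $(4d)$ should come from the two-way branch together with the fact that each unit of decrease in $k$ corresponds to two arcs $a,\sigma(a)$ being committed, exactly as in the analogous $4^k$ bound for {\almsatc}; the $k^4$ factor absorbs the polynomial overhead (bounded recursion depth $\bigoh(k)$, times per-node flow computations and bookkeeping), and linearity in $\ell+m+n$ follows because each recursive call does only a constant number of linear-time flow/reachability computations on a graph no larger than the original (after contractions).

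Finally I would assemble the pieces: the iterative-compression loop runs $\bigoh(m)$ times; each invocation of the compression algorithm runs the branching tree of size $\bigoh((4d)^k)$, at each node performing $\bigoh(1)$ skew-symmetric flow computations (each $\bigoh(m+n)$ by Goldberg–Karzanov) plus handling of the family, which is $\bigoh(\ell)$; a careful implementation sharing flow computations across the tree, plus the $k^{O(1)}$ slack in the statement, yields the claimed $\bigoh((4d)^k k^4(\ell+m+n))$ bound. Correctness reduces to: (i) soundness of iterative compression (routine), (ii) the existence of a minimum self-conjugate separator and its linear-time computability (Goldberg–Karzanov, assumed), and (iii) the extremal/pushing lemma guaranteeing the branching is exhaustive—this last being the technical heart, proved by a submodularity/uncrossing argument on self-conjugate cuts. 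I expect step (iii), i.e.\ formulating and proving the correct notion of "extremal skew-symmetric separator" and the accompanying pushing argument, to be the hardest part.
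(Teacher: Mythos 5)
Your proposal has a genuine gap at its very first step: iterative compression cannot yield a linear-time algorithm. The compression loop runs $\Theta(m)$ (or $\Theta(\vert{\cal T}\vert)$) times, and each compression step must at least read the graph, costing $\Omega(m+n)$; the total is therefore $\Omega(m(m+n))$, i.e.\ quadratic, no matter how cheap the branching inside each step is. This is precisely why all previous algorithms for {\octfull} and {\almsatc} (which are built on iterative compression) have at least quadratic dependence on the input, and why the paper abandons compression entirely. Instead, the paper runs a single direct branching algorithm governed by the measure $\mu = 2k - \lambda(L,L')$, where $L$ is a growing regular vertex set: when no $L$-$L'$ path exists it finds a violated $d$-set and branches $2d$ ways on which vertex $v_i$ or $\sigma(v_i)$ to add to $L$; otherwise it branches two ways on an arc leaving an $(L,k)$-component. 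The $(4d)^k$ bound comes from a refined analysis pairing each measure-1 branch with the subsequent two-way branch, not from an amortization over budget units.

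The second missing idea is the treatment of \emph{irregular} minimum separators, which the paper identifies as the central obstacle. You assume that a minimum self-conjugate $v$--$\sigma(v)$ separator "exists and can be computed in linear time via a single skew-symmetric flow computation" by Goldberg--Karzanov; but the difficulty is exactly that a minimum $L$-$L'$ separator may contain an arc together with its conjugate (be irregular), submodularity of cuts breaks down, and one cannot simply take the self-conjugate closure without losing minimality. The paper's key new tool is a parameter-decreasing reduction rule: if some minimum $L$-$L'$ separator $S$ is irregular, then $\delta^+(R(L,S\cup S'))$ is again a minimum separator containing a conjugate pair $y,y'$, and (via the pushing lemma for $(L,k)$-components) some optimal solution contains $y,y'$, so one may delete them and decrement $k$. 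Your "extremal/pushing lemma" intuition is in the right spirit for step (iii), but without the irregular-separator reduction rule and the linear-time routine (Lemma~\ref{lem:find imp set skewsym}) that either finds an $(L,k)$-component or an irregular minimum separator, the branching you describe is not known to be exhaustive or implementable in the claimed time.
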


\myparagraph{Overview of our algorithm.} The main obstacle to applying existing digraph algorithms on skew-symmetric graphs comes from the fact that standard arguments heavily based on sub-modularity of cuts break down, allowing only approximations, (see for example~\cite{MarxR09,GaspersORSS13}).
Our first contribution is a reduction rule which overcomes this obstacle by allowing us to essentially (and correctly) think of \emph{local} parts of an irreducible instance as a normal digraph. The reduction rule is essentially the following.

\begin{quote}
``{Given two vertex sets $X$ and $Y$  which satisfy certain properties, if there is a minimum $X$-$Y$ separator which contains an arc $a$ and its image $\sigma(a)$, then some arc ({\em not necessarily $a$}) in this separator  is part of an optimal solution and therefore, we find this arc, delete it from the instance, reduce the budget and continue.}''

\end{quote}

\noindent
Though simple to state, the soundness of this reduction rule is far from obvious, requires certain structural observations specific to separators in skew-symmetric graphs. Observe that this is a \emph{parameter decreasing rule} which ensures that the number of applications of this rule is bounded linearly in the parameter. We believe that this rule could prove to be of independent interest for data reduction (or kernelization) for this as well as other similar problems.

Given this reduction rule, the next obstacle we need to overcome is that of applying this rule in linear time. For this, we start from the Ford-Fulkerson algorithm for computing maximum flows and by coupling it carefully with structural properties of skew-symmetric graphs, show that in linear time, we can either apply the reduction rule or locate a part of the graph which is already ``reduced'' for the next step of our algorithm. 
In the final step of our algorithm, having found a reduced part of the graph, we show that it is sufficient for us to consider arcs emanating from this part whose number is linearly bounded in the parameter and move ahead by performing an exhaustive branching on this bounded set of arcs. Finally, by a combination of these three subroutines, we obtain a linear time {\FPT} algorithm for {\skewsymmc}. 

An additional feature of the algorithm we present is that it does not require the family ${\cal T}$ to be explicitly given as part of the input. It is sufficient for our algorithm to have access to a linear time violation oracle for ${\cal T}$, i.e, an algorithm which, in linear time returns a violated set in ${\cal T}$. This feature increases the utility of this algorithm substantially and we demonstrate this in the case of {\delqbackdoor} where even though a direct reduction does not run in linear time, we can use a linear time violation oracle to obtain a linear time {\FPT} algorithm for {\delqbackdoor}.

\medskip

\myparagraph{Applications.} Here we provide the list of main applications (see Figure~\ref{fig:zoo}) that can be derived from our
algorithm (Theorem~\ref{thm:skewsymmc}) together with a short overview of previous work on each application.

%
%
%
%

\myparagraph{{\octfull}, {\almsatc} and related problems.} Graph bipartization is a classical {\NPH} problem with several applications~\cite{choi1989graph,rizzi2002practical,panconesi2004fast}. In the field of parameterized complexity, this problem is better known as {\octfull}, which is formally defined as follows.

\medskip

\defparproblem{\octfull({\oct})}{A graph $G$, positive integer $k$}{$k$}{Does there exist a set $S$ of at most $k$ vertices such that $G\setminus S$ is a bipartite graph?}\vspace{10 pt}

\medskip 

The parameterized complexity of {\sc Odd Cycle Transversal} was a well known open problem for a long time. In 2003,  in a breakthrough paper,  Reed et al.~\cite{ReedSV04}  showed that {\oct}  is {\FPT} by developing an algorithm for the problem running in time $\bigoh(3^kmn)$.  
We use $n$ and $m$ to denote the number of vertices and edges of the input graph respectively.  
In fact this was the first time that the iterative compression technique was used. This technique has been useful in resolving several other open problems in the area of parameterized complexity, including {\sc Directed Feedback Vertex Set}, {\almsatc}, {\sc Multicut}~\cite{ChenLLOR08,RazSul09,MarxRazmulticut}. However, the algorithm for {\sc OCT} had seen no further improvements in the last $9$ years, though reinterpretations of the algorithm have been published~\cite{Huffner09,LokshtanovSS09}. Only recently,  Lokshtanov et al.~\cite{LNRRS12} obtained an algorithm with an improved dependence on the parameter $k$. This algorithm is based on a branching guided by linear programming and runs in time $\bigoh(2.32^kn^{\bigoh(1)})$.
In a parallel line of research, Fiorini et al.~\cite{FioriniHRV08} showed that when the input is restricted to planar graphs there is an $\bigoh(f(k)n)$ time algorithm-- a linear time algorithm, for {\oct}. Continuing this line of research, recently, Kawarabayashi and Reed~\cite{KawarabayashiR10} obtained an algorithm for {\oct} on general graphs with an improved dependence on the input size. This algorithm uses tools from graph minors and odd variants of graph minors and runs in time $\bigoh(f(k)m \cdot \alpha(m,n))$. Here the function $\alpha(m,n)$ is the inverse of the Ackermann function (see by Tarjan~\cite{Tarjan75}) and $f(k)$ is at least a triple exponential function. However, an algorithm on general graphs with a linear dependence on the input size has so far proved elusive.
In this work, we obtain the first linear time algorithm for {\oct} running in time $\bigoh(4^kk^4(m+n))$. This resolves an open problem posed by Reed et al. in 2003~\cite{ReedSV04}.

In the \emph{edge} version of {\oct}, namely {\sc Edge Bipartization}, the objective is to test if there is a set of at most $k$ edges whose deletion makes the input graph bipartite. Using a known linear time parameter preserving reduction from {\edgebip} to {\oct}, we also get a similar result for {\edgebip}. 
In fact, both these problems have linear time parameter preserving reductions to the more general problem of {\almsatc}. In fact, our algorithms for {\edgebip} and {\oct} are obtained via reductions to {\almsatc}, which in turn is solved using our algorithm for {\skewsymmc} (Theorem~\ref{thm:skewsymmc}).

 \begin{figure}[t] \label{fig:zoo}
\begin{center}
\includegraphics[height=300 pt, width=300 pt]{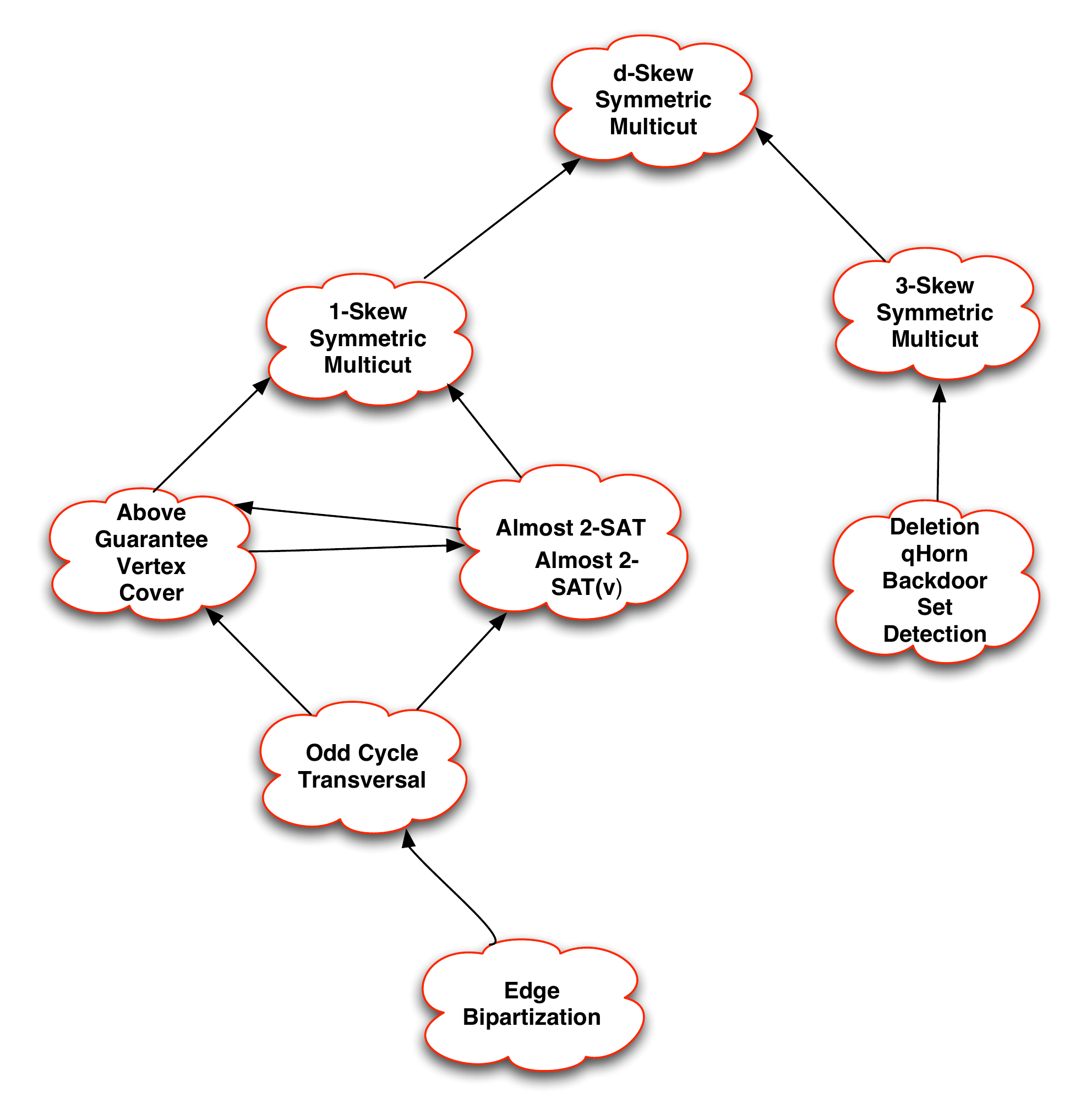}
 \caption{Problems with a linear time parameter preserving reduction to {\skewsymmc}}
\end{center}
\end{figure}


The {\almsatc} problem is formally defined as follows.
\medskip 

\defparproblem{\almsatc}{A \twocnf{} formula $F$, positive integer $k$}{$k$}{Does there exist a set $S_c$ of at most $k$ clauses of $F$ such that $F\setminus S_c$ is satisfiable? }\vspace{10 pt}


\noindent
It was introduced by Mahajan and Raman~\cite{MahajanRaman99} in 1999 and its parameterized complexity status remained open until 2008 when Razgon and O'Sullivan~\cite{RazgonOSullivan08} gave an algorithm running in time $\bigoh(15^kkm^3)$ on formulas with $m$ clauses. 
More recently, there have been a series of improved algorithms ($\bigoh(9^kn^{\bigoh(1)}$)\cite{RamanRS11}, $\bigoh(4^kn^{\bigoh(1)}$)\cite{CyganPPW11}, $\bigoh(2.618^kn^{\bigoh(1)}$)\cite{NarayanaswamyRRS12}) with the current best algorithm running in time $\bigoh(2.32^kn^{\bigoh(1)})$~\cite{LNRRS12}. However, none of these algorithms have linear dependence on the input size.
We show that {\almsatc} is a special case of {\oneskewsymmc}, resulting in an algorithm for {\almsatc} which runs in time $\bigoh(4^kk^4\ell)$ where $k$ is the size of the solution and $\ell$ is the length of the input formula.

Another problem related to {\almsatc} is the {\sc Above Guarantee Vertex Cover} ({\sc agvc}) which is defined as follows.

\medskip

\defparproblem{\agvcfull\  (\agvc)}{$(G=(V,E),M,k)$, where $G$ is an undirected graph, $M$ is a maximum matching for $G$, $k$ a positive integer}{$k-\vert M\vert$}{Does $G$ have a vertex cover of size at most $k$?} 

\medskip
\noindent
This problem is linear time equivalent to {\almsatc} in a parameter preserving way and hence, our results imply an algorithm for {\agvc} with running time $\bigoh(4^{k-\vert M\vert}(k-\vert M\vert)^4(m+n))$. This equivalence has already proved useful as a linear programming based branching algorithm for {\agvc} led to algorithms for {\almsatc} and several other problems around it. However, the linear time reduction from {\agvc} to {\almsatc} crucially utilizes the fact that a maximum matching of the graph is also part of the input. Therefore, a natural goal would to obtain an algorithm running in time $f(k-\vert M\vert)(m+n)$ for {\agvc} even when a maximum matching is not given as part of the input. 

\medskip 

\myparagraph{{\delqbackdoor} and related problems.} A \emph{strong} \emph{$\CCC$\hy backdoor set} of a CNF formula $F$ is a
set $B$ of variables such that $F[\tau]\in \CCC$ for each assignment
$\tau : B \rightarrow \{0,1\}$, that is, for every instantiation of the variables in $B$, the reduced formula is in the class $\CCC$. A \emph{deletion $\CCC$\hy backdoor set} of $F$
is a set $B$ of variables such that $F \justminus B \in \CCC$.  Backdoor
sets were independently introduced by Crama et
al.~\cite{CramaEkinHammer97} and by Williams et al.~\cite{WilliamsGomesSelman03}, the latter authors coining the term ``backdoor''.
If we know a strong $\CCC$\hy backdoor set of $F$ of size $k$, we can
reduce the satisfiability of $F$ to the satisfiability of $2^k$ formulas
in $\CCC$. If $\CCC$ is
clause-induced, every deletion $\CCC$\hy backdoor set of $F$ is a strong
$\CCC$\hy backdoor set of $F$. For several base classes, deletion
backdoor sets are of interest because they are easier to detect than
strong backdoor sets. 

The parameterized complexity of finding small
backdoor sets was initiated by Nishimura et
al.~\cite{NishimuraRagdeSzeider04-informal} who showed that for the base
classes of Horn formulas and Krom formulas, the detection of strong
backdoor sets is fixed-parameter tractable. For base classes other than Horn and Krom,
strong backdoor sets can be much smaller than deletion backdoor sets,
and their detection is more difficult. 
For more recent results, the reader is
referred to a  survey on the parameterized complexity of backdoor
sets~\cite{GaspersSzeider12}.

The class $\qhorn$, introduced by Boros, Crama and Hammer~\cite{BorosHammerSun94}, is
  one of the largest known classes of propositional CNF formulas for
  which satisfiability can be decided in polynomial time. This class properly
  contains the fundamental classes of Horn and Krom formulas as well as
  the class of renamable (or disguised) Horn formulas. The parameterized complexity of finding small $\qhorn$ backdoor sets was studied by Gaspers et al.~\cite{GaspersORSS13} who showed that the {\delqbackdoor} problem is fixed-parameter approximable. Formally, the {\delqbackdoor}  problem is the following.

\medskip

\defparproblem{{\delqbackdoor}}{
  A CNF formula $F$ and a positive integer $k$}{$k$}{Does $F$ have a
  deletion $\qhorn$ backdoor set of size at most $k$?}

\medskip

We show that {\delqbackdoor} is a special case of {\threeskewsymmc}, giving us an algorithm for {\delqbackdoor} which runs in time $\bigoh(12^kk^5\ell)$ where $k$ is the size of the solution and $\ell$ is the length of the input formula. This gives the first fixed-parameter tractable 
algorithm for this problem.  Using this result, we get an algorithm for {\sc Satisfiability} which runs in time $\bigoh(12^kk^5\ell)$ where $k$ is the size of the smallest $\qhorn$ deletion backdoor set, with $\ell$ being the length of the input formula.

\myparagraph{Related Results on Cut problems.}  {\skewsymmc} is neither the first nor will it be the last cut problem studied in parameterized complexity. 
Marx~\cite{Marx06} was the first to consider graph separation problems 
in the context of parameterized complexity. He gave an algorithm for \mwc{} 
with a running time of $\bigoh(4^{k^3}n^{\bigoh(1)})$ which was later improved to $\bigoh(4^{k}n^{\bigoh(1)})$ by Chen at al.~\cite{ChenLL09}. The current fastest algorithm for \mwc{} runs in time $\bigoh(2^{k}n^{\bigoh(1)})$~\cite{CPPO11}. The notions 
used in the paper of Marx have been useful in settling the parameterized complexity, as well as obtaining improved {\FPT} algorithms for a wide variety of problems including 
{\sc Directed Feedback vertex Set}~\cite{ChenLLOR08}, {\almsatc}~\cite{RazSul09} and {\sc Above Guarantee Vertex Cover}~\cite{RazSul09,RamanRS11}. These sequence of results have led to an entirely new and very active subarea dealing with parameterized graph separation problems both due to independent interest in the problems themselves, as well as due to the fact that these problems seem to be able to capture the underlying properties of a large variety of seemingly unrelated problems.

\myparagraph{Organization of the paper.} In Section~\ref{sec:skewsym properties}, we define the notions of separators in skew-symmetric graphs, followed by structural results on separators in skew-symmetric graphs and the notions of $(L,k)$-components whose computation is at the core of our algorithm. In Section~\ref{sec:algo}, we prove an observation regarding the structure of optimal solutions, followed by a description and proof of correctness of our algorithm. In Section~\ref{sec:apps}, we give linear time parameterized algorithms for a number of problems using our result.

\section{Preliminaries}\label{sec:prelims}
In this section we give some basic definitions and set up the notations for the paper. 

\myparagraph{Parameterized Complexity.} Parameterized complexity is one of the ways to cope up with intractability of problems. 
The goal of parameterized complexity is to find ways of solving {\NPH} problems more efficiently than by brute force. 
Here, the aim is to restrict the combinatorial explosion of computational difficulty to a parameter that is hopefully much smaller than the input size.
Formally, a {\em parameterization} of a problem is the assignment of an integer $k$ to each input instance and  we say that a parameterized problem is {\em fixed-parameter tractable} ({\FPT}) if there is an algorithm that solves the problem in time $f(k)\cdot |I|^{\bigoh(1)}$, where $|I|$ is the size of the input instance and $f$ is an
arbitrary computable function depending only on the parameter $k$. 
For more background, the reader is referred to the monographs \cite{DF99,FG06,Nie06}.

\medskip 
\myparagraph{Digraphs.} Let $D=(V,A)$ be a directed graph. For an arc $(u,v)\in A$, we refer to $u$ as the \emph{tail} of this arc and denote it by $\tail(u,v)$ and we refer to $v$ as the \emph{head} of this arc and denote it by $\head(u,v)$. For a set of arcs $P$, we denote by $\tail(P)$, the set $\bigcup_{(u,v)\in P} \{\tail(u,v)\}$ and we denote by $\head(P)$ the set $\bigcup_{(u,v)\in P} \{\head(u,v)\}$. For a set of vertices $V'$, we let $A[V']$ denote the set of arcs with both end points in the set $V'$. For a set of vertices $V'$, we let $\delta^+(V')$ denote the set of arcs which have their tail in $V'$ and their head in $V\setminus V'$. Similarly, we let $\delta^-(V')$ denote the set of arcs which have their head in $V'$ and their tail in $V\setminus V'$. We also use $N^+(V')$ to denote the set $\head (\delta^+(V'))$ and $N^-(V')$ to denote the set $\tail (\delta^-(V'))$. Given two disjoint vertex sets $X$ and $Y$, we define an $X$-$Y$ path as a directed path from a vertex $x\in X$ to a vertex $y\in Y$ whose internal vertices are disjoint from $X\cup Y$.

\medskip 
\myparagraph{Skew-Symmetric Graphs.} The notation is from \cite{GoldbergK04}. A \emph{skew-symmetric graph} is a digraph $D=(V,A)$ and an involution $\sigma:V\cup A\rightarrow V\cup A$ such that:

\begin{enumerate}
 \item for each $x\in V\cup A$, $\sigma(x)\neq x$ and $\sigma(\sigma(x))=s$.
 \item for each $v\in V$, $\sigma(v)\in V$
 \item for each $a=(v,w)\in A$, $\sigma(a)=(\sigma(w),\sigma(v))$
\end{enumerate}

We call $\sigma(x)$ \emph{symmetric} to $x$ and also refer to $x$ and $\sigma(x)$ as \emph{conjugates}.
For ease of description, we let $x'$ denote the conjugate of an element $x$ and we let $S'$ denote the set of conjugates of the elements in the set $S$. We say that a set $S$ is \emph{regular} if $S\cap S'=\emptyset$ and \emph{irregular} otherwise. A set $S$ is called \emph{self-conjugate} if $S=S'$.

\medskip 

\myparagraph{CNF Formulas and Satifiability.}  A \emph{literal} is a variable $x$ or a negated
variable $\bar x$; if $y=x$ or $y=\bar x$ is a literal for some variable
$x$, then we write $\bar y$ to denote $\bar x$ or $x$, respectively.
A \emph{clause} is a
finite set of literals and  a finite set of clauses is a \emph{CNF formula} where the clauses are considered as a disjunction of its literals and the CNF formula is considered a conjunction of its clauses.  By $\mathscr{C}(F)$ we denote the set of clauses of a CNF formula $F$. 
A formula is \emph{Horn} if each of its clauses
contains at most one positive literal, a formula is \emph{Krom} (or
\emph{2CNF}, or \emph{quadratic}) if each clause contains at most two
literals.  
The \emph{length} of a CNF
formula $F$
is defined as $\sum_{C\in F}
\Card{C}$.  If $F$ is a formula and $X$ a set of variables, then we
denote by $F\justminus X$ the formula obtained from $F$ after removing
all literals with a variable in $X$ from the clauses in $F$. 
Let $F$ be a formula and $X \subseteq \var(F)$.
A \emph{truth assignment} is a mapping $\tau:X\rightarrow \SB 0,1 \SE$
defined on some set $X$ of variables.
A truth assignment $\tau$ \emph{satisfies} a clause $C$ if $C$ contains some literal $x$ with $\tau(x)=1$; $\tau$ satisfies a formula $F$ if it satisfies all
clauses of $F$.  
A formula is
\emph{satisfiable} if it is satisfied by some truth assignment; 
otherwise it is \emph{unsatisfiable}.

\section{Skew-symmetric graphs, separators and components}\label{sec:skewsym properties}

The following observation is a direct consequence of the definition
of a skew-symmetric graph.

\begin{observation}\label{obs:implication graphs} Let $D=((V,A),\sigma)$ be a skew-symmetric graph and let $u,v\in V$. There is a path from $v$ to $u$ in $D$ iff there is a path from $u'$ to $v'$.
\end{observation}

\begin{definition} Let $D=(V,A)$ be a directed graph and let $X,Y$ be disjoint subsets of $V$. A set $S\subseteq A$ is an $X$-$Y$ separator if there is no directed path from $X$ to $Y$ in the graph $D\setminus S$. We say that $S$ is a minimal $X$-$Y$ separator if no proper subset of $S$ is an $X$-$Y$ separator. 
\end{definition}

\begin{definition}
Let $D=((V,A),\sigma)$ be a skew-symmetric graph and let $L$ be a regular set of vertices.
Let $X\subseteq A$ be a self-conjugate set of arcs of $D$.
We call $X$ an $L$-$L'$ {\bf self-conjugate separator} if $X$ is a (not necessarily minimal) $L$-$L'$  separator. We call $X$ a minimal $L$-$L'$ self-conjugate separator if there is no self-conjugate strict subset of $X$ which is also an $L$-$L'$ separator. 
\end{definition}

\begin{definition}
Let $D=((V,A),\sigma)$ be a skew-symmetric graph and let $L$ be a regular set of vertices. Let $X$ be an $L$-$L'$ self-conjugate separator. We denote by $R(L,X)$ the set of vertices of $D$ that can be reached from $L$ 
via directed paths 
in $D\setminus X$, and we denote by $\bar R(L,X)$ the set of vertices of $D$ which 
have a directed path to 
can reach $L$ in $D\setminus X$.
\end{definition}

\begin{observation}\label{obs:restate}
Let $D=((V,A),\sigma)$ be a skew-symmetric graph and let $L$ be a regular set of vertices. Let $X$ be an $L$-$L'$ self-conjugate separator. Then, the sets $R(L,X)$ and $\bar R(L',X)$ are also regular and $\sigma(R(L,X))=\bar R(L',X)$.
\end{observation}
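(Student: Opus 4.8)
The plan is to prove the three assertions—regularity of $R(L,X)$, regularity of $\bar R(L',X)$, and the identity $\sigma(R(L,X)) = \bar R(L',X)$—by establishing the last one first and deriving the first two as consequences, using Observation~\ref{obs:implication graphs} together with the fact that $X$ is self-conjugate. First I would note that since $X=X'$, the graph $D\setminus X$ is itself a skew-symmetric graph (deleting a set of arcs together with all their conjugates preserves the involution), so Observation~\ref{obs:implication graphs} applies verbatim to $D\setminus X$: for any $u,v$, there is a $v$--$u$ path in $D\setminus X$ iff there is a $u'$--$v'$ path in $D\setminus X$.

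For the identity, I would argue both inclusions via this reachability duality. A vertex $w$ lies in $R(L,X)$ iff there is some $s\in L$ with an $s$--$w$ path in $D\setminus X$; by Observation~\ref{obs:implication graphs} applied to $D\setminus X$, this is equivalent to the existence of a $w'$--$s'$ path in $D\setminus X$, i.e.\ $w'$ can reach some vertex $s'\in L'$ in $D\setminus X$, which is exactly the statement $w'\in \bar R(L',X)$. Hence $w\in R(L,X) \iff \sigma(w)\in \bar R(L',X)$, which gives $\sigma(R(L,X)) = \bar R(L',X)$. I would write this out as a clean chain of iff's so there is no ambiguity about the direction of the paths (the definition of $\bar R(L',X)$ in the excerpt has a slightly garbled phrasing—"have a directed path to can reach $L$"—so I would state at the outset that I read $\bar R(L',X)$ as the set of vertices that \emph{can reach} $L'$ in $D\setminus X$, matching the intended dual of $R$).

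For regularity of $R(L,X)$: suppose for contradiction that $w\in R(L,X)\cap \sigma(R(L,X))$, so both $w$ and $w'$ are reachable from $L$ in $D\setminus X$. By the identity just proved, $w'\in \sigma(R(L,X)) = \bar R(L',X)$ means $w'$ can reach $L'$ in $D\setminus X$; combined with $w'$ being reachable from $L$, we get a path from $L$ to $w'$ and then from $w'$ to $L'$, i.e.\ an $L$--$L'$ path in $D\setminus X$, contradicting that $X$ is an $L$--$L'$ separator. So $R(L,X)$ is regular. Regularity of $\bar R(L',X)$ is then immediate since $\bar R(L',X) = \sigma(R(L,X))$ and the $\sigma$-image of a regular set is regular (if $\sigma(R)$ contained a conjugate pair $y,y'$ then $\sigma(y)=y', \sigma(y')=y$ would both lie in $R$, contradicting regularity of $R$).

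The main obstacle—really the only subtle point—is making sure Observation~\ref{obs:implication graphs} is legitimately applicable to $D\setminus X$ rather than $D$, which hinges precisely on $X$ being self-conjugate; I would state this explicitly as the first line of the proof. Everything else is a routine unwinding of definitions, so I would keep the write-up to a short paragraph. One should also double-check that $L$ regular is not actually needed for the identity itself (it is only needed so that $L$ and $L'$ are disjoint and the notion of an $L$--$L'$ separator makes sense), and that the contradiction argument for regularity genuinely uses the separator property and not more.
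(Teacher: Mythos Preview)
Your proposal is correct and takes essentially the same approach as the paper: both arguments hinge on the fact that $D\setminus X$ is skew-symmetric (since $X$ is self-conjugate) so that the path-duality of Observation~\ref{obs:implication graphs} applies, and both derive the regularity of $R(L,X)$ by exhibiting an $L$--$L'$ path in $D\setminus X$ from a hypothetical conjugate pair in $R(L,X)$. The only cosmetic difference is that you establish the identity $\sigma(R(L,X))=\bar R(L',X)$ first and derive regularity from it, whereas the paper proves regularity directly and then states the identity; the content is the same.
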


\begin{proof} Since deleting a self-conjugate set of arcs from a skew-symmetric graph results in a skew-symmetric graph, we know that there is a path from $u$ to $v$ in $D\setminus X$ iff there is a path from $v'$ to $u'$ in $D\setminus X$. Therefore, if $R(L,X)$ is irregular, then there is a path from $L$ to $y$ and $y'$ for some vertex $y$, which is disjoint from $X$, which implies a path from $L$ to $L'$ in $D\setminus X$, which is a contradiction. Therefore, $R(L,X)$ and $\bar R(L',X)$ are regular and since $D\setminus X$ is a skew-symmetric graph, they are conjugates.
\end{proof}

\subsection{Minimum separators in skew-symmetric graphs}

\begin{lemma}\label{lem:lk component}
Let $D=((V,A),\sigma)$ be a skew-symmetric graph, $L$ be a regular set of vertices.
\begin{enumerate}
 \item Suppose that there is an $L$-$L'$ path in $D$ and let $X$ be a minimum $L$- $L'$ separator and let $Z=R(L,X\cup X')$. Then, $\delta^+(Z)$ is also a minimum $L$-$L'$ separator. 
\item An arc is part of a minimum $L$-$L'$ separator if and only if its conjugate is also part of a minimum $L$-$L'$ separator.
\end{enumerate}
\end{lemma}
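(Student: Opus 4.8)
The second part is almost immediate once we record a fact that both parts (and much of the paper) rest on: \emph{if $X$ is a minimum $L$-$L'$ separator, then so is $X' := \sigma(X)$}. Indeed $\sigma$ is an involution that maps each arc $(u,v)$ of $D$ to the arc $(\sigma(v),\sigma(u))$, so it restricts to a bijection between $D\setminus X'$ and $D\setminus X$ reversing every arc; and since $L$ is regular it maps $L$ to $L'$ and $L'$ to $L$. Hence an $L$-$L'$ path in $D\setminus X'$ would yield an $L$-$L'$ path in $D\setminus X$, contradicting that $X$ separates $L$ from $L'$, while $|X'|=|X|$. For part~2: if an arc $a$ lies in some minimum $L$-$L'$ separator $X$, then $\sigma(a)$ lies in the minimum $L$-$L'$ separator $X'$, and the converse is the same statement applied to $\sigma(a)$, using $\sigma(\sigma(a))=a$.

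For part~1 set $N:=X\cup X'$ (a self-conjugate $L$-$L'$ separator) and $Z:=R(L,N)$. I would first use two standard facts about \emph{minimum} cuts: writing $W:=R(L,X)$ one has $\delta^+(W)=X$ (an arc leaving $W$ not in $X$ would extend reachability from $L$ in $D\setminus X$, and $\delta^+(W)$ is itself an $L$-$L'$ separator, hence of size at least $|X|$), and likewise, writing $W':=\bar R(L',X')$, one has $\delta^-(W')=X'$, so $\delta^+(V\setminus W')=X'$; moreover $W'=\sigma(W)$ by the path‑reversal property above. Both $W$ and $V\setminus W'$ contain $L$, are disjoint from $L'$, and have $\delta^+$ of size $|X|=k$. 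Since the digraph cut function $f(S):=|\delta^+(S)|$ is submodular, and $W\cap(V\setminus W')$ and $W\cup(V\setminus W')$ again contain $L$ and avoid $L'$ (so have $f$-value at least $k$), we get $f(W\cap(V\setminus W'))=f(W\cup(V\setminus W'))=k$. Put $U:=W\setminus W'=W\cap(V\setminus W')$, so $\delta^+(U)$ is a minimum $L$-$L'$ separator.

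It then remains to compare $Z$ with $U$. First, $Z\subseteq U$: clearly $Z\subseteq W$; and if some $v\in Z\cap W'$, concatenating a path from $L$ to $v$ in $D\setminus N$ (which in particular avoids $X'$) with a path from $v$ to $L'$ avoiding $X'$ gives an $L$-$L'$ walk avoiding $X'$, contradicting that $X'$ separates $L$ from $L'$. Second — and this is the heart of the matter — $\delta^+(Z)\subseteq\delta^+(U)$. Note $\delta^+(Z)\subseteq N$, since an arc leaving $Z$ not in $N$ would extend reachability from $L$ in $D\setminus N$. Take any $e\in\delta^+(Z)$; as $Z\subseteq U$, the only way $e\notin\delta^+(U)$ is that the head of $e$ lies in $Y:=U\setminus Z$. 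But then both endpoints of $e$ lie in $U\subseteq W$, so $e\notin\delta^+(W)=X$; as $e\in N=X\cup X'$ this forces $e\in X'=\delta^+(V\setminus W')$, whose arcs all have head in $W'$ — impossible, since the head of $e$ lies in $Y\subseteq W\setminus W'$. Hence $\delta^+(Z)\subseteq\delta^+(U)$, so $|\delta^+(Z)|\le|\delta^+(U)|=k$; and $\delta^+(Z)$ is itself an $L$-$L'$ separator, so $|\delta^+(Z)|\ge k$, which shows it is a minimum $L$-$L'$ separator.

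The delicate step is the inclusion $\delta^+(Z)\subseteq\delta^+(U)$, that is, that $Z$ has no out‑arc into the ``slack'' set $Y=(W\setminus W')\setminus Z$; this is exactly where the skew‑symmetric structure is used in an essential way, through the identity $W'=\sigma(W)$ and the consequent description $X'=\delta^+(V\setminus W')$ of the mirror minimum cut, and one must resist assuming $Z=U$ (the vertices of $Y$ may genuinely be present, behaving as dead ends that do not change the out‑boundary of $Z$). Everything else is either routine min‑cut / reachable‑set bookkeeping or ordinary submodularity of digraph cut functions — tools which remain available here precisely because $X$ is an arbitrary, not necessarily self‑conjugate, cut.
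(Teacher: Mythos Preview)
Your proof is correct, but it takes a different route from the paper's. For part~1 the paper argues directly on $\delta^+(Z)$: writing $A=\delta^+(Z)\cap X$ and $B=\delta^+(Z)\setminus X\subseteq X'$, it shows that $A$ and $\sigma(B)$ are disjoint subsets of $X$ (the key point being that an arc $x\in B$ with $\sigma(x)\in A$ would go directly from $Z$ to $Z'=\sigma(Z)$, yielding an $L$-$L'$ walk avoiding $X$), whence $|\delta^+(Z)|=|A|+|B|=|A|+|\sigma(B)|\le|X|$. Your argument instead builds an auxiliary minimum cut via submodularity: with $W=R(L,X)$ and $W'=\sigma(W)$ you use submodularity of $|\delta^+(\cdot)|$ to certify that $U=W\setminus W'$ has $|\delta^+(U)|=k$, and then establish the inclusion $\delta^+(Z)\subseteq\delta^+(U)$. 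The paper's route is more elementary and self-contained (no appeal to submodularity); yours is perhaps more structural, making explicit the regular intermediate set $U$ sitting between $Z$ and $W$, and the inclusion $\delta^+(Z)\subseteq\delta^+(U)$ is a slightly stronger conclusion than the mere size bound. For part~2 both arguments are essentially the same observation that $\sigma$ maps $L$-$L'$ separators to $L$-$L'$ separators.
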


\begin{proof}
 Recall that $Z$ is regular (Observation~\ref{obs:restate}). Since $L$ is in $Z$ and $L'$ is disjoint from $Z$, $\delta^+(Z)$ is an $L$-$L'$ separator. It remains to show that it is a minimum such separator. Clearly, $\delta^+(Z)\subseteq X\cup X'$. Let $A=\delta^+(Z)\cap X$ and $B=\delta^+(Z)\setminus X$. 

 Since $B$ is disjoint from $X$, it must be the case that $B'\subseteq X$. We now claim that $A$ and $B'$ are disjoint. Suppose that this is not the case and let $x\in B$ such that $x'\in A$. Since $x'\in \delta^+(Z)$, it must be the case that $x\in \delta^-(Z')$. Since there is a path from $Z$ to $Z'$ via $x$ and $X$ is disjoint from $A[Z\cup Z']\cup \{x\}$, there is a path from $L$ to $L'$ disjoint from $X$, a contradiction. Therefore, we conclude that $A\cap B'=\emptyset$. We now have that $\vert \delta^+(Z)\vert=\vert A \cup B\vert=\vert A\vert +\vert B'\vert\leq \vert X\vert$ where the last inequality used the fact that $A$ and $B'$ are disjoint. Therefore, we conclude that $\delta^+(Z)$ is indeed a minimum $L$-$L'$ separator. Consequently, $\delta^-(Z)$ is also a minimum $L$-$L'$ separator. This concludes the proof of the first part of the lemma. 
 
We claim that if $X$ is an $L$-$L'$ separator, then $X'$ is an $L$-$L'$ separator as well. Suppose that this is not the case and let there be a path $v_1,\dots,v_r$ in $D\setminus X'$ where $v_1=l$ and $v_r=l'$. However, this implies that $X$ is disjoint from the path $\sigma(v_r),\dots,\sigma(v_1)$, which is a contradiction.
 This concludes the proof of the lemma. 
 \end{proof}
 
 \begin{lemma}\label{lem:crossing}(Crossing-Uncrossing)
 Let $D=((V,A),\sigma)$ be a skew-symmetric graph and let $B\subseteq V$. If $\delta^+(B)\cap \delta^+(B')=\emptyset$ then $\vert\delta^+(B\setminus B')\vert=\vert\delta^+(B)\vert$ and $\vert\delta^+(B\setminus B')\vert<\vert\delta^+(B)\vert$ otherwise.
\end{lemma}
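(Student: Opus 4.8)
The plan is to establish the sharper identity $|\delta^+(B\setminus B')| = |\delta^+(B)| - |\delta^+(B)\cap\delta^+(B')|$, from which both assertions of the lemma are immediate: the intersection is empty exactly when the two cut sizes coincide, and otherwise the right-hand side is strictly smaller than $|\delta^+(B)|$. First I would partition $V$ according to membership in $B$ and $B'$, writing $O = V\setminus(B\cup B')$, $P = B\setminus B'$, $Q = B'\setminus B$, and $R = B\cap B'$. Because $\sigma$ is an involution with $\sigma(B)=B'$ (hence also $\sigma(B')=B$), a direct check shows that $\sigma$ fixes $O$ and $R$ setwise and interchanges $P$ and $Q$; this region-permutation property is the only place the skew-symmetric structure is used.

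Next, let $n_{XY}$ denote the number of arcs with tail in region $X$ and head in region $Y$. Since $B = P\cup R$ and $B' = Q\cup R$, an arc lies in $\delta^+(B)$ iff its tail is in $P\cup R$ and its head in $O\cup Q$, so $|\delta^+(B)| = n_{PO}+n_{PQ}+n_{RO}+n_{RQ}$; likewise $\delta^+(B\setminus B') = \delta^+(P)$ consists of the arcs with tail in $P$ and head in $O\cup Q\cup R$, so $|\delta^+(B\setminus B')| = n_{PO}+n_{PQ}+n_{PR}$. Subtracting, $|\delta^+(B)| - |\delta^+(B\setminus B')| = n_{RO} + n_{RQ} - n_{PR}$. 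Now the defining identity $\sigma((u,v)) = (\sigma(v),\sigma(u))$ shows that $\sigma$ restricts to a bijection from the arcs going from region $X$ to region $Y$ onto the arcs going from $\sigma(Y)$ to $\sigma(X)$, so $n_{XY} = n_{\sigma(Y)\sigma(X)}$. Applying this with $\sigma(Q)=P$ and $\sigma(R)=R$ gives $n_{RQ} = n_{PR}$, so these two terms cancel and we are left with $|\delta^+(B)| - |\delta^+(B\setminus B')| = n_{RO}$. Finally, an arc lies in $\delta^+(B)\cap\delta^+(B')$ precisely when its tail lies in $B\cap B' = R$ and its head lies outside both $B$ and $B'$, i.e., in $O$; hence $n_{RO} = |\delta^+(B)\cap\delta^+(B')|$, which completes the identity.

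The only subtlety I anticipate is the bookkeeping in the cancellation step: one must make sure the arc types in the formulas for $|\delta^+(B)|$ and $|\delta^+(B\setminus B')|$ have been listed exhaustively and that $\sigma$ genuinely induces bijections between the relevant classes — in particular that no arc is simultaneously counted in two of them, which holds because the tail regions $O$, $P$, $R$ are pairwise disjoint. Once the region-permutation property of $\sigma$ is recorded, the remainder is a short case analysis requiring no inequalities beyond $n_{RO}\ge 0$.
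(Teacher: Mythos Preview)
Your proof is correct and follows essentially the same route as the paper: both partition $V$ into the four regions $B\setminus B'$, $B'\setminus B$, $B\cap B'$, and $V\setminus(B\cup B')$, classify the arcs of $\delta^+(B)$ and $\delta^+(B\setminus B')$ by the regions of their endpoints, and use the involution $\sigma$ to set up a bijection between the one non-matching class on each side (your $n_{RQ}=n_{PR}$ is exactly the paper's $Q_3^o=(B_3^o)'$). Your presentation via the exact identity $|\delta^+(B)|-|\delta^+(B\setminus B')|=|\delta^+(B)\cap\delta^+(B')|$ is in fact a bit cleaner than the paper's, but the underlying argument is the same.
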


\begin{proof}

 Let $Q=B\setminus B'$. We partition $\delta^+(Q)$ into the following sets (see Figure~\ref{fig:crossing}).

%
%

\begin{enumerate}
\setlength{\itemsep}{-2pt}
 \item $Q_1^{o}=\delta^+(Q)\cap \delta^-(V\setminus (B\cup B'))$, that is, those arcs with the tail in $Q$ and the head in $V\setminus (B\cup B')$.
 
 \item $Q_2^{o}=\delta^+(Q)\cap \delta^-(B\setminus B')$, that is, those arcs with the tail in $Q$ and the head in $B'\setminus B$.
 
 \item $Q_3^{o}=\delta^+(Q)\cap \delta^-(B\cap B')$), that is, those arcs with the tail in $Q$ and the head in $B\cap B'$.
\end{enumerate}

Similarly, we partition $\delta^+(B)$ as follows.

\begin{enumerate}
\setlength{\itemsep}{-2pt}
 \item $B_1^{o}=(\delta^+(B\setminus B')\cap \delta^-(V\setminus (B\cup B'))$, that is, those arcs with the tail in $B\setminus B'$ and the head in $V\setminus (B\cup B')$. 
 
 \item $B_2^{o}=(\delta^+(B)\cap \delta^-(B'\setminus B)$, that is, those arcs with the tail in $B\setminus B'$ and the head in $B'\setminus B$.
 
 \item $B_3^{o}=\delta^+(B)\cap \delta^+(B'\setminus B)$, that is, those arcs with the tail in $B\cap B'$ and the head in $B\setminus B'$.
 
 \item $B_4^{o}=\delta^+(B)\cap \delta^+(B')$ 
\end{enumerate}

\noindent
Observe that $Q_1^{o}=B_1^o$, $Q_2^{o}=B_2^o$ and $Q_3^{o}= (B_3^o)'$. Therefore, $\vert\delta^+(Q)\vert=\vert\delta^+(B)\vert$ if $B_4^o$ is empty and $\vert\delta^+(Q)\vert<\vert\delta^+(B)\vert$ otherwise. This completes the proof of the lemma. 
\end{proof}

\begin{figure}[t] \begin{center}
\includegraphics[height=200 pt, width=230 pt]{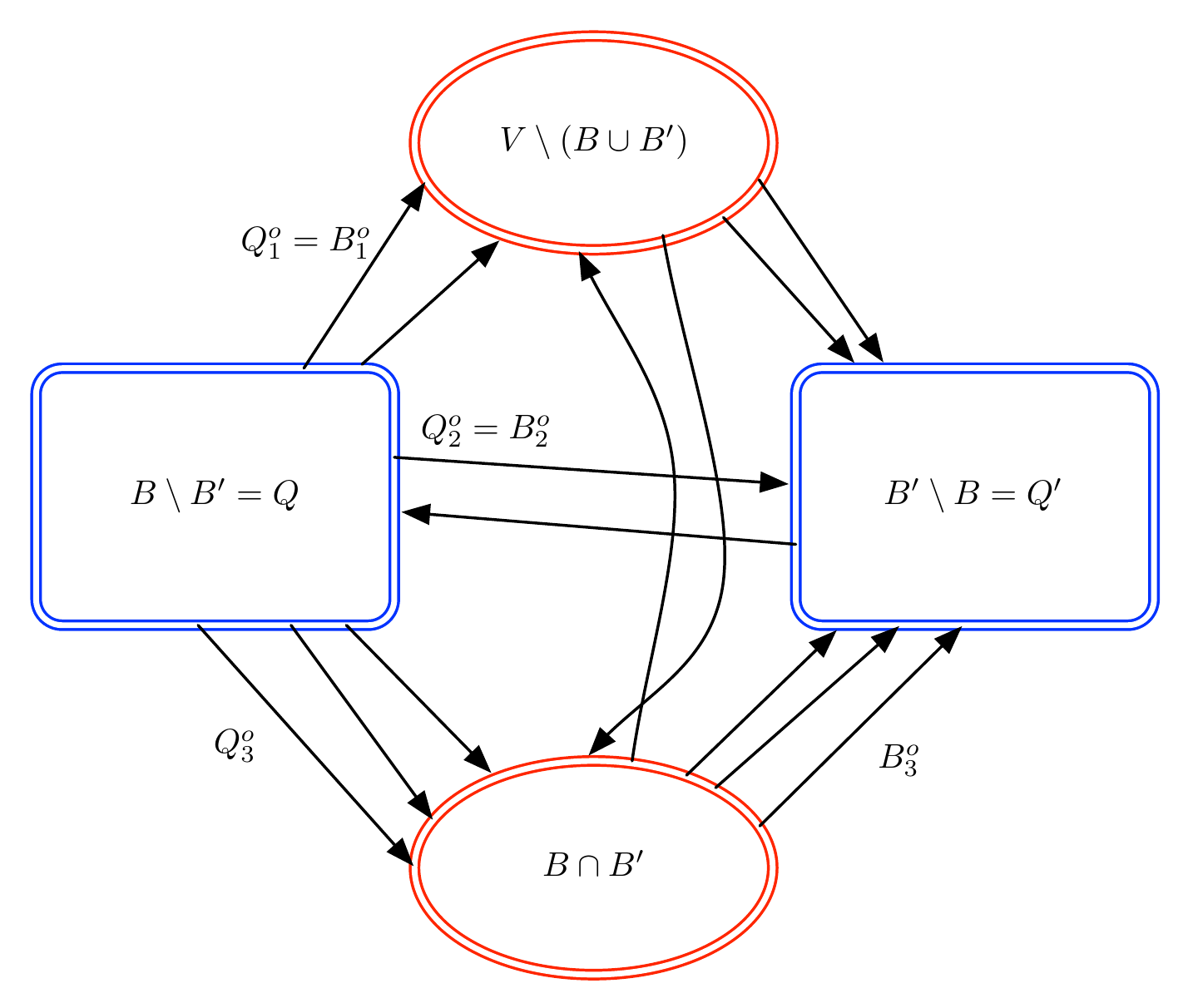}

 \caption{An illustration of the partitions described in the proof of Lemma~\ref{lem:crossing}}
\label{fig:crossing}
\end{center}
\end{figure}

\subsection{$(L,k)$-Components}

\begin{definition}
Let $D=((V,A),\sigma)$ be a skew-symmetric graph and $k\in \mathbb{N}$. Let $L\subseteq V$ be a regular set of vertices. A set of vertices $Z\subseteq V$ is called an $(L,k)$-component if it satisfies the following properties.

\begin{enumerate}
\setlength{\itemsep}{-2pt}
\item $L\subseteq Z$
\item $Z$ is regular
\item $Z$ is reachable from $L$ in $D[Z]$
\item The size of a minimum $Z$-$Z'$ separator is equal to the size of a minimum $L$-$L'$ separator and this size is at most $2k$.
\item $Z$ is inclusion-wise maximal among the sets satisfying the above properties.
\end{enumerate}
\end{definition}

\noindent
Lemma~\ref{lem:find imp set skewsym} gives an algorithm that in linear time either computes an $(L,k)$-component or finds a minimum $L$-$L'$ separator of a particular kind, which we later show can be used to reduce the instance. We first require the arc version of Lemma 2.4 from \cite{MarxOR10}.

\begin{lemma}\label{lem:separator layer}
 Let $s,t$ be two vertices in a digraph $D=(V,A)$ such that the minimum size of an $s$-$t$ separator is $\ell>0$. Then, there is a collection ${\cal X}=\{X_1,\dots,X_q\}$ of sets where $\{s\}\subseteq X_i\subseteq V\setminus \{t\}$ such that
 
 \begin{enumerate}
\setlength{\itemsep}{-2pt}
  \item $X_1\subset X_2\subset \cdots \subset X_q$,
  \item $X_i$ is reachable from $s$ in $D[X_i]$,
  \item $\vert \delta^+(X_i)\vert=\ell$ for every $1\leq i\leq q$ and 
  \item every $s$-$t$ separator of size $\ell$ is fully contained in $\bigcup_{i=1}^q \delta^+(X_i)$.
 \end{enumerate}
 
 \noindent
 Furthermore, there is an $\bigoh(\ell(\vert V\vert +\vert A\vert))$ time algorithm that produces the sets $X_1,X_2\setminus X_1,\dots, X_q\setminus X_{q-1}$ corresponding to such a collection $\cal X$.

\end{lemma}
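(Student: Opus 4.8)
The plan is to prove this as the "arc version" of Lemma 2.4 from \cite{MarxOR10}, following the standard max-flow/min-cut submodularity approach. First I would fix a maximum $s$-$t$ flow $F$ of value $\ell$ (an integral flow, hence decomposable into $\ell$ arc-disjoint $s$-$t$ paths by Menger/Ford-Fulkerson). The key structural fact is that for any $s$-$t$ separator $S$ of size exactly $\ell$, every one of the $\ell$ flow paths uses exactly one arc of $S$, so $S$ is "saturated" and the set $R_S$ of vertices reachable from $s$ in $D\setminus S$ satisfies $\delta^+(R_S)\subseteq S$, hence $\delta^+(R_S)=S$ by minimality of $\ell$. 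Thus every minimum separator is of the form $\delta^+(X)$ for some vertex set $X$ with $s\in X$, $t\notin X$, $X$ reachable from $s$ in $D[X]$, and $|\delta^+(X)|=\ell$. Call such an $X$ a \emph{closed minimum set}.

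Next I would establish that closed minimum sets form a lattice under union and intersection. This is where submodularity of $\delta^+$ enters: for any $X,Y$ we have $|\delta^+(X\cap Y)|+|\delta^+(X\cup Y)|\le |\delta^+(X)|+|\delta^+(Y)|$; if both $X,Y$ are closed minimum sets then both left-hand terms are at least $\ell$ (since $X\cap Y$ and $X\cup Y$ still separate $s$ from $t$, using $s\in X\cap Y$, $t\notin X\cup Y$), so equality holds throughout and $X\cap Y$, $X\cup Y$ are again minimum separators; the reachability property is inherited by $X\cup Y$ directly and can be enforced for $X\cap Y$ by passing to the reachable-from-$s$ subset without changing the out-cut. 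Consequently there is a unique inclusion-wise minimal closed minimum set $X_{\min}$ (intersection of all of them) and a unique maximal one $X_{\max}$ (union of all of them), and every minimum separator's "tight set" lies between them. I would then define the chain $X_1\subset\cdots\subset X_q$ by iteratively taking, starting from $X_{\min}$, a closed minimum set strictly containing the current one and minimal with that property, until reaching $X_{\max}$; by the lattice property each such step is well-defined, and property~(4) — every minimum separator is contained in $\bigcup_i\delta^+(X_i)$ — follows because any minimum separator $\delta^+(W)$ has its tight set $W$ sandwiched in the lattice, and the arcs of $\delta^+(W)$ each go from a vertex in the lattice-down-set to a vertex outside, which can be charged to the appropriate layer $X_i\setminus X_{i-1}$.

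For the algorithmic part, I would run the Ford-Fulkerson augmenting-path procedure to compute the value-$\ell$ flow and the residual graph $D_F$ in time $\Oh(\ell(|V|+|A|))$ ($\ell$ augmentations, each a linear-time search). The crucial observation is that $X$ is a closed minimum set iff $X$ is a "closed" set of the residual graph in the appropriate sense — precisely, $X_{\min}$ is the set of vertices reachable from $s$ in $D_F$, $X_{\max}$ is the complement of the set of vertices that can reach $t$ in $D_F$, and the intermediate layers correspond exactly to the strongly connected components of $D_F$ restricted to $X_{\max}\setminus X_{\min}$, taken in topological order. So after one linear-time reachability computation from $s$ and one to $t$ (on the reverse residual graph), and one linear-time strongly-connected-components computation (Tarjan), I obtain the condensation, and reading off its topological order gives the sets $X_1, X_2\setminus X_1,\dots,X_q\setminus X_{q-1}$ as unions of prefixes of SCCs. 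All of this is $\Oh(\ell(|V|+|A|))$ total.

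The main obstacle I anticipate is not any single deep step but getting the bookkeeping exactly right in two places: (i) proving cleanly that the layers produced are precisely the SCC-condensation of the residual graph between $X_{\min}$ and $X_{\max}$ — in particular that two closed minimum sets "adjacent" in the chain differ by exactly one SCC, and that no minimum separator can "split" an SCC — which requires relating tightness of $\delta^+(X)$ to $X$ being a union of residual SCCs; and (ii) handling property~(2) (reachability of $X_i$ from $s$ within $D[X_i]$) simultaneously with closure under intersection, since a naive intersection need not be reachability-closed and one must argue that restricting to the $s$-reachable part leaves the out-cut unchanged. Both are routine given submodularity and the residual-graph picture, but they are the load-bearing details.
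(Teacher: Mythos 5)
Your construction is essentially the paper's: run Ford--Fulkerson for $\ell$ augmentations, pass to the residual graph, take the strongly connected components in topological order between the component of $s$ and the component of $t$, and let the $X_i$ be (the $s$-reachable parts of) the nested suffix/prefix unions of these components. The submodularity/lattice scaffolding you add is not needed for this route --- the nested family and property~(4) fall out directly from the residual-graph argument, exactly as in Lemma~2.4 of Marx--O'Sullivan--Razgon --- but it is not wrong.

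There is, however, a genuine gap in the algorithmic part, and it is precisely the one piece of ``extra work'' this lemma adds over the cited Lemma~2.4. The sets you propose to output --- unions of prefixes of residual SCCs read off from the condensation --- are the sets $Y_i$ of the paper's proof, and these need not satisfy property~(2): $Y_i$ may contain vertices not reachable from $s$ inside $D[Y_i]$. You correctly note that one must replace each $Y_i$ by its $s$-reachable part $R(Y_i)$ and that this does not change the out-cut, but you give no algorithm for computing \emph{all} of the sets $R(Y_{i})\setminus R(Y_{i+1})$ within the claimed $\bigoh(\ell(\vert V\vert+\vert A\vert))$ bound. Running a fresh reachability search for each $i$ costs $\bigoh(q(\vert V\vert+\vert A\vert))$, and $q$ is bounded only by $\vert V\vert$, not by $\ell$, so the stated running time is not established. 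The paper closes this gap with an incremental BFS that carries a ``forbidden queue'' across levels; the load-bearing observation is that the vertices forbidden at level $i$ are exactly $N^+(R(Y_i))$, hence at most $\ell$ of them at any time, which is what keeps the membership tests and the whole sweep within $\bigoh(\ell(\vert V\vert+\vert A\vert))$. Without this (or an equivalent amortization), the furthermore-clause of the lemma does not follow from your argument.
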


\begin{proof}
This proof is  from \cite{MarxOR10}. However, the proof in \cite{MarxOR10} does not need that $X_i$ is reachable from $s$ in $D[X_i]$. Thus, we need to do a bit more extra work for the version presented here.  
 We first run $\ell$ iterations of the Ford-Fulkerson algorithm on the graph with unit capacities on all arcs to find a maximum $s$-$t$ flow. Let $D_1$ be the residual graph. Let $C_1,\dots, C_q$ be a topological ordering of the strongly connected components of $D_1$ such that $i<j$ if there is a path from $C_i$ to $C_j$. Recall that there is a $t$-$s$ path in $D_1$. Let $C_x$ and $C_y$ be the strongly connected components of $D_1$ containing $t$ and $s$ respectively. Since there is a path from $t$ to $s$ in $D_1$, $x<y$. For each $x<i\leq y$, let $Y_i=\bigcup_{j=i}^qC_j$ (see Figure~\ref{fig:marxlemmapic}). We first show that $\vert \delta^+(Y_i)\vert=\ell$. Since no arcs leave $Y_i$ in the graph $D_1$, no flow enters $Y_i$ and every arc in $\delta^+(Y_i)$ is saturated by the maximum flow. Therefore, $\vert\delta^+(Y_i)\vert=\ell$. 
 
 \begin{figure}[t] \begin{center}
\includegraphics[height=180 pt, width=340 pt]{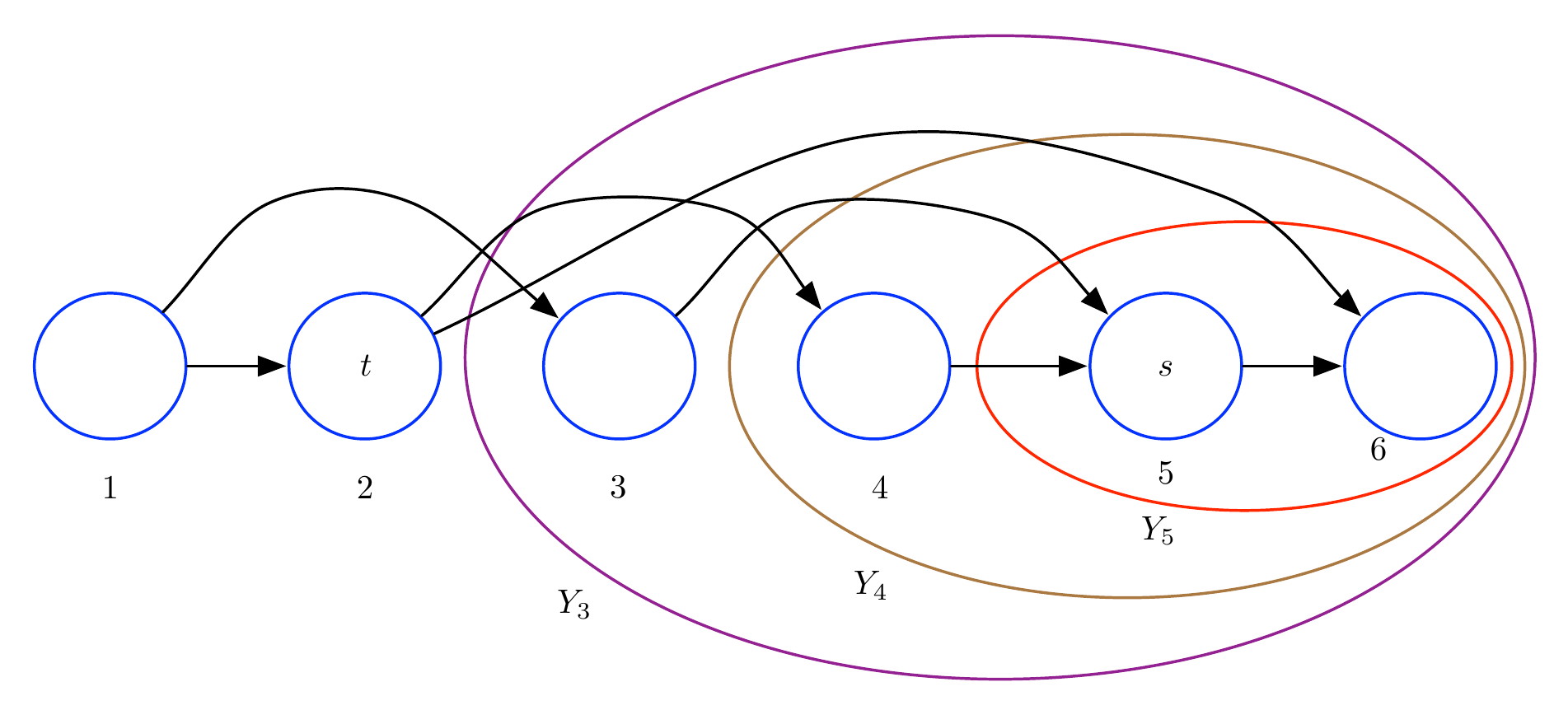}

 \caption{An illustration of the sets in the proof of Lemma~\ref{lem:separator layer}. The blue circles are the strongly connected components of $D_1$ and $\alpha(s)=5$ and $\alpha(t)=2$.}
\label{fig:marxlemmapic}
\end{center}
\end{figure}
 
 We now show that every arc which is part of a minimum $s$-$t$ separator is contained in $\bigcup_{i=1}^q \delta^+(Y_i)$. Consider a minimum $s$-$t$ separator $S$ and an arc $(a,b)\in S$. Let $Y$ be the set of vertices reachable from $s$ in $D\setminus S$. Since $F$ is a minimum $s$-$t$ separator, it must be the case that $\delta^+(Y)=F$ and therefore, $\delta^+(Y)$ is saturated by the maximum flow. Therefore, we have that $(b,a)$ is an arc in $D_1$. Since no flow enters the set $Y$, there is no cycle in $D_1$ containing the arc $(b,a)$ and therefore, if the strongly connected component containing $b$ is $C_{i_b}$ and that containing $a$ is $C_{i_a}$, then $i_b<i_a$. Furthermore, since there is flow from $s$ to $a$ from $b$ to $t$, it must be the case that $x<i_b<i_a<y$ and hence the arc $(a,b)$ appears in the set $\delta^+(Y_{i_a})$.
 
 Finally, we define the set $R(Y_i)$ to be the set of vertices of $Y_i$ which are reachable from $s$ in the graph $D[Y_i]$. Then, the sets 
 $R(Y_{y})\subset R(Y_{y-1}) \subset \cdots \subset R(Y_{x+1})$ form a collection of the kind described in the statement of the lemma.
 
In order to compute these sets, we first need to run the Ford-Fulkerson algorithm for $\ell$ iterations and perform a topological sort of the strongly connected components of $D_1$. This takes time $\bigoh(\ell(\vert V\vert+\vert A\vert))$. During this procedure, we also assign indices to the strongly connected components in the manner described above, that is, $i<j$ if $C_i$ occurs before $C_j$ in the topological ordering. In $\bigoh(\ell(\vert V\vert+\vert A\vert))$ time, we can assign indices to vertices such that the index of a vertex $v$ (denoted by $\alpha(v)$) is the index of the strongly connected component containing $v$. We then perform the following preprocessing for every vertex $v$ such that $\alpha(v)<\alpha(s)$. We go through the list of in-neighbors of $v$ and find 
$$\beta(v)=\max_{u\in N^-(v)} \Big\{ \alpha(u)~\vert~ \alpha(v)< \alpha(u) \Big\} \mbox{ and }$$  
$$\gamma(v)=\min_{u\in N^-(v)} \Big\{ \alpha(u)~\vert~ \alpha(v)< \alpha(u)\Big\}$$ 
and set $\beta'(v)=\min\{\beta(v),\alpha(s)\}$ and $\gamma'(v)=\max\{\gamma(v),\alpha(t)+1\}$.

 The meaning of these numbers is that the vertex $v$ occurs in each of the sets $N^+(Y_{\beta'(v)})$,  $N^+(Y_{\beta'(v)-1})$, $\dots$, $N^+(Y_{\gamma'(v)})$. 
  This preprocessing can be done in time $\bigoh(m+n)$ since we only compute the maximum and minimum in the adjacency list of each vertex. 
 A vertex $v$ is said to be $i$-forbidden for all $\gamma'(v)\leq i\leq \beta'(v)$. We now describe the algorithm to compute the sets in the collection.

\medskip

\noindent
{\bf Computing the collection.} We do a modified (directed) breadth first search (BFS) starting from $s$ by using only \emph{out-going} arcs. Along with the standard BFS queue, we also maintain an additional \emph{forbidden} queue.

We begin by setting $i=\alpha(s)$ and start the BFS by considering the out-neighbors of $s$.
We add a vertex to the BFS queue only if it is both unvisited and not $i$-forbidden. 
If a vertex is found to be $i$-forbidden (and it is not already in the forbidden queue), we add this vertex to the forbidden queue. Finally, when the BFS queue is empty and every unvisited out-neighbor of every vertex in this tree is in the forbidden queue, we return the set of vertices \emph{added} to the BFS tree in the current iteration as $R(Y_{i})\setminus R(Y_{i+1}) $.   Following this, the vertices in the forbidden queue which are not $(i-1)$-forbidden are removed and added to the BFS queue and the algorithm continues after decreasing $i$ by 1. 
The algorithm finally stops when $i=\alpha(t)$.

We claim that this algorithm returns each of the sets $R(Y_{\alpha(s)})$, $R(Y_{\alpha(s)-1})\setminus R(Y_{\alpha(s)})$, $\dots,R(Y_{\alpha(t)+1})\setminus R(Y_{\alpha(t)+2})$ and runs in time $\bigoh(\ell(\vert V\vert +\vert A\vert)$. In order to bound the running time, first observe that the vertices which are $i$-forbidden are exactly the vertices in the set 
$N^+(R(Y_i))$
 and therefore the number of $i$-forbidden vertices for each $i$ is at most $\ell$. This implies that the number of vertices in the forbidden queue at any time is at most $\ell$. Hence, testing if a vertex is $i$-forbidden or already in the forbidden queue for a fixed $i$ can be done in time $\bigoh (\ell)$. Therefore, the time taken by the algorithm is $\bigoh(\ell)$ times the time required for a BFS in $D$, which implies a bound of $\bigoh(\ell(\vert V\vert +\vert A\vert))$.

For the correctness, we prove the following invariant for each iteration. Whenever a set is returned in an iteration,
\begin{itemize} \setlength{\itemsep}{-2pt}
\item the set of vertices currently in the forbidden queue are exactly the $i$-forbidden vertices
\item the vertices in the current BFS tree are exactly the vertices in the set $R(Y_{i})$. 
\end{itemize}

\noindent
For the first iteration, this is clearly true. We assume that the invariant holds at the end of iteration $j\geq 1$ (where $i=i'$) and consider the $(j+1)$-th iteration (where $i$ is now set as $i'-1$).

Let $P_j$ be the vertices present in the BFS tree at the end of the $j$-th iteration and $P_{j+1}$ be the vertices present in the BFS tree at the end of the $(j+1)$-th iteration. We claim that the set $P_{j+1}=R(Y_{i'-1})$.
 
Since we never add a vertex to $P_{j+1}$ if it is $(i'-1)$-forbidden, the vertices in $P_{j+1}\setminus P_j$ are precisely those vertices which are reachable from $P_j$ via a path disjoint from $(i'-1)$-forbidden vertices.
Since the invariant holds for the preceding iteration, we know that $P_j=R(Y_{i'})$ and by our observation about $P_{j+1}\setminus P_j$, we have that $P_{j+1}$ is the set of vertices reachable from $R(Y_{i'})$ via paths disjoint from $(i'-1)$-forbidden vertices, which implies that $P_{j+1}=R(Y_{i'-1})$ since $R(Y_{i'-1})$ is precisely the set of vertices reachable from $R(Y_{i'})$ via paths disjoint from $(i'-1)$-forbidden vertices. We now show that the vertices in the forbidden queue are exactly the $(i'-1)$-forbidden vertices.
Since the BFS tree in iteration $j+1$ could not be grown any further, every out-neighbor of every vertex in the tree is in the forbidden queue. Since we have already shown that the vertices in the BFS tree, that is in $P_{j+1}$, are precisely the vertices in $R(Y_{i'-1})$, we have that every $(i'-1)$-forbidden vertex is already in the forbidden queue.
This proves that the invariant holds in this iteration as well and completes the proof of correctness of the algorithm.
\end{proof}


The main lemma of this section is the following. 

\begin{lemma}\label{lem:find imp set skewsym}
Let $D=((V,A),\sigma)$ be a skew-symmetric graph and $k\in \mathbb{N}$. Let $L\subseteq V$ be a regular set of vertices such that there is an $L$-$L'$ path in $D$.  
There is an algorithm which runs in time $\bigoh(k^3(m+n))$ and 
\begin{itemize}
\setlength{\itemsep}{-1pt}
 \item correctly concludes that no $(L,k)$-component exists or
 \item returns an $(L,k)$-component or
 \item returns an irregular minimum $L$-$L'$ separator
\end{itemize}
where $m=\vert A\vert$ and $n=\vert V\vert$.
\end{lemma}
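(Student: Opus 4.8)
The plan is to build the algorithm of Lemma~\ref{lem:find imp set skewsym} on top of the layered-separator structure provided by Lemma~\ref{lem:separator layer}. First I would compute a maximum $L$-$L'$ flow by running the Ford--Fulkerson algorithm; since we only care about $(L,k)$-components and these require the minimum $L$-$L'$ separator to have size at most $2k$, I would halt the augmenting-path search as soon as $2k+1$ arc-disjoint paths have been found, and in that case report that no $(L,k)$-component exists. This truncated flow computation costs $\bigoh(k(m+n))$. Let $\ell\le 2k$ be the resulting flow value (equivalently, the size of a minimum $L$-$L'$ separator). Now apply Lemma~\ref{lem:separator layer} with source side $L$ and sink side $L'$ — treating $L$, $L'$ as contracted super-vertices $s$, $t$ in the standard way — to obtain, in time $\bigoh(\ell(m+n))=\bigoh(k(m+n))$, the nested sets $X_1\subset\cdots\subset X_q$ with $|\delta^+(X_i)|=\ell$, each $X_i$ reachable from $L$ inside $D[X_i]$, and with every minimum $L$-$L'$ separator contained in $\bigcup_i\delta^+(X_i)$.

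The key idea for extracting an $(L,k)$-component is to scan these layers from the innermost $X_1$ outward and test, for each $i$, whether $X_i$ is \emph{regular}. By Observation~\ref{obs:restate} the reachability set $R(L,\delta^+(X_i))$ associated with a minimum separator $\delta^+(X_i)$ is always regular; the subtlety is that $X_i$ as produced by Lemma~\ref{lem:separator layer} need not literally equal this set, but since $X_i$ is reachable from $L$ in $D[X_i]$ and $\delta^+(X_i)$ is a minimum separator, one checks that $X_i=R(L,\delta^+(X_i))$, so each $X_i$ is in fact regular. More importantly, I would look at the \emph{last} layer $X_q$: I claim $Z:=X_q$ is the desired $(L,k)$-component. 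It contains $L$, is regular, is reachable from $L$ in $D[Z]$, and has $|\delta^+(Z)|=\ell$ equal to the minimum $L$-$L'$ separator size $\le 2k$; the size of a minimum $Z$-$Z'$ separator is at most $|\delta^+(Z)|=\ell$ and at least the minimum $L$-$L'$ separator size (because $L\subseteq Z$ and $L'\subseteq Z'$), hence exactly $\ell$. The only remaining property to verify is inclusion-wise maximality, which is exactly why $X_q$ is the \emph{outermost} such layer: if some larger regular reachable set $Z^*\supsetneq X_q$ had a minimum $Z^*$-$Z^{*\prime}$ separator of size $\ell$, then $\delta^+(Z^*)$ would be a minimum $L$-$L'$ separator not contained in any $\delta^+(X_i)$ with $X_i\supsetneq X_q$, contradicting property~4 of Lemma~\ref{lem:separator layer}. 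A little care is needed here because $\delta^+(Z^*)$ might cross some $X_i$; this is where I would invoke Lemma~\ref{lem:crossing} (Crossing-Uncrossing) to replace $Z^*$ by $Z^*\setminus Z^{*\prime}$ (which is regular and has no larger out-boundary) and then argue the maximality contradiction cleanly.

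The third possible output — an irregular minimum $L$-$L'$ separator — arises precisely when the above succeeds in producing the nested sets but one of the $\delta^+(X_i)$, or rather some minimum separator we encounter, turns out to be irregular: concretely, if at any layer the conjugate structure forces $\delta^+(X_i)\cap(\delta^+(X_i))'\ne\emptyset$. Since $R(L,\delta^+(X_i))$ is always regular, the irregular separator cannot be of the form $\delta^+(X_i)$; instead I would detect it during the flow computation itself — if the max-flow value $\ell$ is odd, or if the canonical minimum separator $\delta^+(R(L,X\cup X'))$ from Lemma~\ref{lem:lk component}(1) fails to have size matching the separator we found, then by Lemma~\ref{lem:lk component}(1) the "symmetrized" separator $\delta^+(R(L,X\cup X'))$ is a minimum $L$-$L'$ separator that is strictly smaller than $X\cup X'$, forcing some arc and its conjugate to both lie in it — i.e.\ it witnesses irregularity, and we return it. All of these checks (regularity of a set, computing $R(L,\cdot)$, comparing separator sizes) are linear-time reachability computations, repeated $\bigoh(q)=\bigoh(n)$ times naively but, by doing them incrementally along the layer decomposition exactly as in the proof of Lemma~\ref{lem:separator layer}, in total time $\bigoh(\ell(m+n))=\bigoh(k(m+n))$. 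Summed with the flow computation this gives $\bigoh(k(m+n))$; the extra factor $k^2$ in the claimed $\bigoh(k^3(m+n))$ bound leaves ample slack, so I would not optimize constants.

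The main obstacle, as I see it, is the maximality argument (property~5 of the $(L,k)$-component): showing that the outermost layer $X_q$ cannot be properly extended. The difficulty is genuinely that submodularity of cuts — the standard tool for such "uncrossing to get a unique maximal minimizer" arguments — behaves badly on skew-symmetric graphs, which is the very issue the paper's reduction rule is designed around. The resolution must go through Lemma~\ref{lem:crossing}: an extension $Z^*$ of $X_q$ with $|\delta^+(Z^*)|=\ell$ either has $\delta^+(Z^*)\cap(\delta^+(Z^*))'=\emptyset$, in which case $Z^*$ behaves like a classical minimizer and property~4 of Lemma~\ref{lem:separator layer} gives the contradiction, or it does not, in which case $Z^*\setminus Z^{*\prime}$ is a strictly smaller regular "cut set" — but it still contains $L$ and excludes $L'$, so $|\delta^+(Z^*\setminus Z^{*\prime})|<\ell$ contradicts minimality of $\ell$. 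Getting this case analysis airtight, and making sure the layer decomposition is traversed so that the whole procedure stays linear, is where the real work lies.
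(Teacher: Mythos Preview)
Your proposal has a genuine gap at its core. You claim that every layer $X_i$ is regular, arguing that $X_i=R(L,\delta^+(X_i))$ and then invoking Observation~\ref{obs:restate}. But Observation~\ref{obs:restate} applies only when the separator is \emph{self-conjugate}, and $\delta^+(X_i)$ is merely a minimum $L$-$L'$ separator, not a self-conjugate one. In fact the layers $X_i$ can be irregular: take $V=\{l,l',a,a'\}$ with arcs $(l,a),(a,l'),(l,a'),(a',l')$ (a skew-symmetric graph with $\sigma(l,a)=(a',l')$, $\sigma(a,l')=(l,a')$); here $X_1=\{l\}$ is regular but $X_2=\{l,a,a'\}$ is irregular, and $\delta^+(X_2)=\{(a,l'),(a',l')\}$ is a regular arc-set that is not self-conjugate. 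So your argument that ``$X_q$ is automatically an $(L,k)$-component'' only goes through in the easy case when every $X_i$ happens to be regular (this is the paper's Claim~3.2), and you have no mechanism for the case when some $X_i$ is not.

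This irregular-layer case is where essentially all the work in the paper's proof lies. The paper locates the largest index $a$ with $X_a$ regular and $X_{a+1}$ irregular, sets $A=X_a$, $B=X_{a+1}$, and then does a structural case analysis on how $A$ interacts with $B'$. Using the Crossing--Uncrossing Lemma and several ad hoc claims about paths through $B\cap B'$, it shows that either one can exhibit an irregular minimum $L$-$L'$ separator directly, or one can peel off $B\cap B'$ and \emph{recurse} on $D[V\setminus(B\cap B')]$ with a strictly smaller parameter $k'<k$; an $(L,k')$-component in the smaller graph is then shown to be an $(L,k)$-component in $D$. This recursion can iterate up to $2k$ times, and that (together with an $\bigoh(k^2 m)$ check inside one of the cases) is exactly where the extra $k^2$ in the $\bigoh(k^3(m+n))$ bound comes from --- it is not slack. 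Your discussion of how to detect an irregular separator (parity of $\ell$, comparing with $\delta^+(R(L,X\cup X'))$, etc.) does not connect to this structure and would not find one in the example above.
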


\begin{proof} 
The main idea of the algorithm is to start with the collection coming from Lemma~\ref{lem:separator layer} and then use this to either find an 
$(L,k)$-component or to return an irregular minimum $L$-$L'$ separator. In what follows we describe possible situations that could arise and how they could be handled. Finally, we use all this to describe the algorithm and prove its correctness. 

We can, in $\bigoh(k(m+n))$ time check if the size of the minimum $L$-$L'$ separator is at most $2k$ by running $2k$ iterations of the Ford-Fulkerson algorithm~\cite{FordFulkerson56}. If the size of the minimum separator exceeds $2k$, then we can conclude that no $(L,k)$-component exists. Therefore, we assume in the rest of this proof that the size of the minimum $L$-$L'$ separator is at most $2k$. Let ${\cal X}=\{X_1,\dots,X_q\}$ be a collection with the properties mentioned in Lemma~\ref{lem:separator layer} where ``$L$ acts as $s$ and $L'$ acts as $t$". We make this formal when we describe the algorithm later.

We begin by showing that not all $X_i$'s can be irregular.
\begin{claim}
\label{claim:firstisregular}
 There is an index $i\geq 1$ such that for all $j\leq i$, the set $X_j$ is regular.
\end{claim}
\begin{proof}
 We first show that $X_1$ is regular. Suppose that this is not the case and there are vertices $y,y'\in X_1$. Since no arc in $A[X_1]$ is part of a minimum $L$-$L'$ separator (by property 4 of the collection), Lemma~\ref{lem:lk component} implies that no arc in $A[X_1]$ is the conjugate of an arc in $S=\delta^+(X_1)$. Therefore, there is a path from $L$ to $y$ and a path from $L$ to $y'$ disjoint from $S\cup S'$. However, this implies the presence of a path from $L$ to $L'$ in $D\setminus (S\cup S')$, which is a contradiction. Therefore, $X_1$ is regular.
Since every subset of a regular set is regular, there is an index $i\geq 1$ such that for all $j\leq i$, $X_j$ is regular.
This completes the proof of the claim.
\end{proof}

Given the above claim, we first consider the case when $X_i$ is regular for every $1\leq i\leq q$. This brings us to the following claim. 
\begin{claim}
\label{claim:qiscomponent}
If $X_i$ is regular for every $1\leq i\leq q$ then $X_q$ itself is an $(L,k)$-component. 
\end{claim}
\begin{proof}
 Observe that $X_q$ satisfies the first four properties of an $(L,k)$-component and thus it suffices to prove that $X_q$ is inclusion-wise maximal with respect to these 4 properties. Suppose that $X_q$ is not maximal with respect to these properties and let $Z\supset X_q$ have the required properties and let $Y$ be a minimum $Z$-$Z'$ separator. Clearly, $Y$ is a minimum $L$-$L'$ separator. Since $Z\supset X_q$, there is an arc $y\in Y$ which is not in $\delta^+(X_q)$. Since $X_q$ strictly contains all other $X_i$'s, $y\notin \delta^+(X_i)$ for any $i$, which contradicts property 4 of the collection.
\end{proof}

Claims \ref{claim:firstisregular} and \ref{claim:qiscomponent} act like  base cases of our algorithm  that we describe later.


\medskip

We now suppose that there exists $i\leq q$ such that $X_i$  is irregular. 
Let $a$ be the highest index such that $X_a$ is regular and $X_{a+1}$ is irregular. Let $A=X_a$ and $B=X_{a+1}$. Since $\delta^+(B)$ is a minimum $L$-$L'$ separator, by the crossing-uncrossing lemma (Lemma~\ref{lem:crossing}), we have that $\vert \delta^+(B\setminus B')\vert=\vert \delta^+(B)\vert$ and $\delta^+(B)\cap \delta^+(B')=\emptyset$. That is, there is no arc which enters $V\setminus (B\cup B')$ from $B\cap B'$ and thus there is no arc which 
enters $B\cap B'$ from $V\setminus (B\cup B')$. 
Furthermore, if there is an arc $x\in \delta^+(B\setminus B')\cap \delta^-(B'\setminus B) $, then $x'\in \delta^+(B\setminus B')$, which implies that $x,x'$ are contained in a minimum $L$-$L'$ separator. Therefore, from this point on, we may assume that there is no arc in $\delta^+(B\setminus B')\cap \delta^-(B'\setminus B)$. Before we go further we summarize the sets and the various intersections they have. From now onwards the sets $B$ and $Q=B\setminus B'$ will always have the following intersection properties. 

 \medskip
  \begin{center}
 \begin{boxedminipage}{.55\textwidth}
 \small{
\begin{enumerate}
\setlength{\itemsep}{-2pt}
 \item $Q=B\setminus B'$ 
\item $\vert \delta^+(B\setminus B')\vert=\vert \delta^+(Q)\vert=\vert \delta^+(B)\vert$  
\item $\delta^+(B)\cap \delta^+(B')=\emptyset$
\item $\delta^-(B)\cap \delta^-(B')=\emptyset$
\item $\delta^+(B\setminus B')\cap \delta^-(B'\setminus B)$ =    $\delta^+(Q)\cap \delta^-(Q')=\emptyset$
\end{enumerate}
%
 }
\end{boxedminipage}
 \end{center}


\begin{figure}[t] \begin{center}
\includegraphics[height=200 pt, width=230 pt]{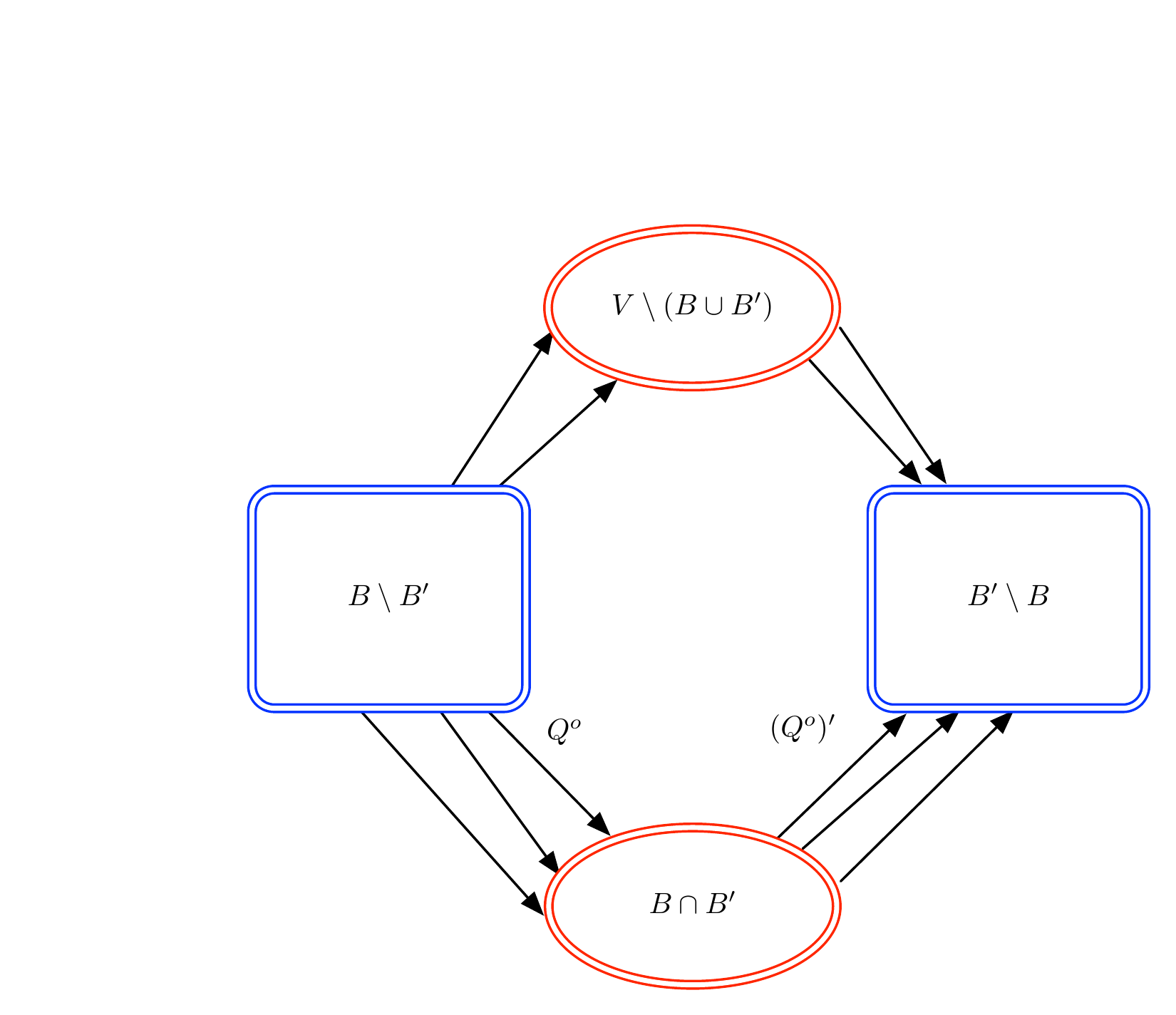}

 \caption{An illustration of the sets defined in Lemma~\ref{lem:find imp set skewsym}. Observe that there are no arcs between $B\setminus B'$ and $B'\setminus B$ and between $B\cap B'$ and $V\setminus (B\cup B')$}
\label{fig:crossing2}
\end{center}
\end{figure}

\noindent 
The proof of the next observation follows from the fact that there is neither an arc entering $B\cap B'$ from $V\setminus (B\cup B')$ nor an 
are leaving $B\cap B'$ to $V\setminus (B\cup B')$ (see Figure~\ref{fig:crossing2}). 
\begin{observation}
\label{obs:disjointpath}
 Any path from $Q$ to $Q'$ with the internal vertices disjoint from $Q\cup Q'$ is contained entirely in either $D\setminus (B\cap B')$ or $D[B\cup B']$.
\end{observation}

The next claim describes certain properties of paths exiting $Q$ and entering $B\cap B'.$ This will be used later in some of the arguments.

\begin{claim}
\label{claim:pathinbcapb}
 Let $B$ and $Q$ be defined as above. Then for every $q\in N^+(Q)\cap (B\cap B')$ there is a path from $q$ to $N^-(Q')\cap 
 (B\cap B')$ which completely lies in the graph $D[B\cap B']$.  
\end{claim}
\begin{proof}
Suppose this is not the case. That is, there exists an edge $x=(p,q)\in \delta^+(Q)$ such that $q\in (B\cap B')$ and there is no path from 
$q$ to $N^-(Q')\cap (B\cap B')$ which completely lies in the graph $D[B\cap B']$.  
Consider the graph $D_1=D\setminus (\delta^+(Q)\setminus \{x\})$. 
Since $\delta^+(Q)$ is a minimum $L$-$L'$ separator, there is  path from $L$ to $L'$ in $D_1$. Since this path contains the arc $x$, there is a 
subpath, say $W$, from $q$ to $N^-(Q')$ that does not intersect $Q'$. Furthermore, in $D_1$ the only arc that emanates from $Q$ is $x$ 
and thus we have that $W$ is disjoint from $Q$ as well. However, by Observation~\ref{obs:disjointpath} every path from $Q$ to $Q'$ with the internal vertices disjoint from $Q$ and $Q'$ is contained in $D[B\cup B']$ or $D[V\setminus (B\cap B')]$. Since $q\in B\cap B'$, we conclude that there is a path from $q$ to $N^-(Q')\cap (B\cap B')$ in $D[B\cap B']$. This is a contradiction to our assumption and thus 
concludes the proof. 
\end{proof}

\noindent
We are now ready to describe the cases that occur when we  {\em have} an irregular set. This case is divided into following three exhaustive subcases. 
\begin{description}
\setlength{\itemsep}{-3pt}
 \item[Case I:] $A\cap B'=\emptyset$. 
\item[Case II:] $A\cap B'\neq \emptyset$ and $A\setminus B'\neq \emptyset$. 
\item[Case III:] $A\subseteq B\cap B'$.
\end{description}

\noindent 
We now consider each case one by one and show how we will handle it algorithmically (later).

\begin{description}
\setlength{\itemsep}{-3pt}
 \item[Case I:] $A\cap B'=\emptyset$. We start by defining the set $Q^{o}=\delta^+(Q)\cap \delta^-(B\cap B')$.

 \end{description}

\noindent 
The next claim proves an interesting structural property that is crucial to the correctness of the algorithm. 
\begin{claim}
Either every $(L,k)$-component $Z\supseteq Q$ is such that $Q^{o}\subseteq \delta^+(Z)$ or there is an arc $y\in Q^o$ such that there is a minimum $L$-$L'$ separator containing $y$ and $y'$.
\end{claim}
\begin{proof}
Let $Z\supseteq Q$ be an $(L,k)$-component and $x\in Q^o$ an arc such that $x=(p,q)\notin \delta^+(Z)$. Note that this implies that $x\in A[Z]$. 
By Claim~\ref{claim:pathinbcapb}, we have that there is a path from $q$ to $N^-(Q')\cap (B\cap B')$ which lies in the graph $D[B\cap B']$.  
Observe that $N^-(Q')\cap (B\cap B')=\tail((Q^o)')$. 
Let $r\in B\cap B'$ be such that $q$ has a path in $D[B\cap B']$ from $q$ to $r$, where $(r,s)\in (Q^o)'$. We claim that there is a minimum $L$-$L'$ separator containing $(r,s)$ and $(s',r')$. Consider a minimum $Z$-$Z'$ separator $S$. Since every arc in $A[B\cap B']$ has both endpoints inside $B$ and both endpoints outside $A$,  none of these arcs appear in the set $\bigcup_{i=1}^q\delta^+(X_i)$, we conclude that $S$ is disjoint from $A[B\cap B']$ (by property 4 of the collection). Since $S$ is disjoint from $A[B\cap B']$, we have that $r\in Z$. Furthermore, since $s\in Q$, we have that $s\in Z'$. Consequently, we have that $r'\in Z'$ and $s'\in Z$. Therefore, both the arcs $(r,s)$ and $(s',r')$ are in both the sets $\delta^+(Z)$ and $\delta^-(Z')$, which implies that they are also present in $S$. Since $S$ is also a minimum $L$-$L'$ separator, this completes the proof of the claim.   
\end{proof}


\vspace{5 pt}
\noindent
The following claims allow our algorithm to use the above observations recursively in the graph $D\setminus (B\cap B')$ in the absence of $y\in Q^o$ such that there is no minimum $L$-$L'$ separator containing $y$ and its conjugate $y'$. 

%
\begin{claim}
Assume that every $(L,k)$-component $Z\supseteq Q$ is such that $Q^{o}\subseteq \delta^+(Z)$. 
(Recall $Q^{o}=\delta^+(Q)\cap \delta^-(B\cap B')$.)
\begin{enumerate}
\setlength{\itemsep}{-2pt}
\item If $S$ is a minimum $Q$-$Q'$ separator in $D_1=D[V\setminus (B\cap B')]$, then $S\cup Q^o$ is a minimum $Q$-$Q'$ separator in $D$.
\item If $Z$ is an $(L,k)$-component such that $Q^{o}\subseteq \delta^+(Z)$, 
then $Z$ is also an $(L,k-\vert Q^o\vert)$-component in $D_1=D[V\setminus (B\cap B')]$ and furthermore, any $(L,k-\vert Q^o\vert)$-component in $D_1$ is an $(L,k)$-component in $D$.
\end{enumerate}
 \end{claim}
%

\begin{proof}
Recall that  $\delta^+(B)\cap \delta^+(B')=\emptyset$ and $\delta^-(B)\cap \delta^-(B')=\emptyset$. Hence every $Q$-$Q'$ path in $D$ is contained entirely in the graph $D_1$ or $D_2=D[B\cup B']$. The last assertion implies that the size of the minimum  $Q$-$Q'$ separator in $D$ is the sum of the sizes of the minimum $Q$-$Q'$ separator in the graphs $D_1$ and $D_2$.  Since $\delta^+(Q)\setminus Q^o$ is a minimum $Q$-$Q'$ separator in $D_1$, $S$ is no larger than $\delta^+(Q)\setminus Q^o$. Therefore, $S\cup Q_o$ is no larger than $\delta^+(Q)$. Since $S\cup Q^o$ is clearly a $Q$-$Q'$ separator in $D$, this completes the proof of this statement.

Now we show the second part of the claim. We first show that $Z$ satisfies the first 4 properties of an $(L,k-\vert Q^o\vert)$-component in $D_1$.  Recall that the size of a minimum $Q$-$Q'$ separator in the graph $D_1$ is $\vert \delta^+(Q)\vert -\vert Q^o\vert$, which implies that $Z$ indeed satisfies the first 4 properties of an $(L,k-\vert Q^o\vert)$-component in $D_1$.

 Therefore, if $Z$ were not an $(L,k-\vert Q^o\vert)$-component in $D_1$, then there is a set $W\supset Z$ which satisfies these 4 properties and let $S$ be a minimum $W$-$W'$ separator in $D_1$. We claim that $W$ also satisfies the first 4 properties of an $(L,k)$-component in $D$, which contradicts our assumption of $Z$ as an $(L,k)$-component. From the first part of the claim, we have that $S\cup Q^o$ is a minimum $Q$-$Q'$ separator in $D$, which implies that $W$ indeed satisfies the first 4 properties of an $(L,k)$-component in $D$.

In the converse direction, let $Z$ be an $(L,k-\vert Q^o\vert)$-component in $D_1$. Since $S\cup Q^o$ is a minimum $Q$-$Q'$ separator in $D$, 
we have that $Z$ satisfies the first 4 properties of an $(L,k)$-component in $D$. For contradiction assume that $Z$ is not an $(L,k)$-component in $D$. Then there exists $W\supset Z$ that is a $(L,k)$ component in $D$. Since every $(L,k)$-component $W$ is such that 
$Q^{o}\subseteq \delta^+(W)$, we have that $W \cap \head(Q^{o}) =\emptyset$. But then it implies that  $W$ also satisfies the first 4 properties of 
an $(L,k-\vert Q^o\vert)$-component in $D_1$. However this is a contradiction to  our assumption that $Z$ is an $(L,k-\vert Q^o\vert)$-component in $D_1$. This completes the proof of the claim.
\end{proof} 

\noindent
This completes the study of the case when $A\cap B'=\emptyset$.
\vspace{10 pt}

\begin{description}
\item [Case II :]  $A\cap B'\neq \emptyset$ and $A\setminus B'\neq \emptyset$.
\end{description}
\noindent
Here, for a subcase we construct a new collection ${\cal X}'$ where this case is avoided.
Let $Q=B\setminus B'$ and $P=A\cup (B\setminus B')$. We have already observed that $\vert \delta^+(Q)\vert=\vert \delta^+(B)\vert$ and that the set $\delta^+(B)\cap \delta^+(B')$ is empty. We now show that $\vert \delta^+(Q)\vert=\vert \delta^+(P)\vert$.
Let $P_1^o=\delta^+(P)\setminus \delta^+(Q)$ and $Q_1^o=\delta^+(Q)\setminus \delta^+(P)$. We claim that $\vert P_1^o\vert=\vert Q_1^o\vert$. But this is true since $\vert P_1^o\vert <\vert Q_1^o\vert $ implies that $\vert \delta^+(P)\vert <\vert \delta^+(Q)\vert $, which is a contradiction since $\delta^+(P)$ is an $L$-$L'$ separator smaller than the minimum one and $\vert Q_1^o\vert <\vert P_1^o\vert $ implies that $\vert \delta^+(A\setminus B')\vert <\vert \delta^+(A)\vert $, which is a contradiction since $\delta^+(A\setminus B)$ is an $L$-$L'$ separator smaller than the minimum one. Therefore, we conclude that $\vert P_1^o\vert=\vert Q_1^o\vert$ and hence $\vert \delta^+(Q)\vert=\vert \delta^+(P)\vert$. 

By combining this with an application of the crossing-uncrossing lemma (Lemma~\ref{lem:crossing}) on the set $K=B\setminus A'$, we have that 
$\vert \delta^+(P)\vert = \vert \delta^+(K\setminus K')\vert=\vert \delta^+(K)\vert$. Furthermore, $A\subset K\subset B$, and $A\cap K'=\emptyset$.  

If $K$ is irregular, then consider the collection ${\cal X}'$ obtained by inserting the set $K$ between $X_a$ and $X_{a+1}$ in the collection $\cal X$. Clearly, ${\cal X}'$ is also a collection which satisfies the properties 1, 2, and 4 of Lemma~\ref{lem:separator layer}. It also satisfies property 3 since $\delta^+(K)$ is a minimum $L$-$L'$ separator. However, if we consider the collection ${\cal X}'$, $A$ would still be the regular set with the highest index, while $K$ would be the irregular set with the least index. Since $A$ is disjoint from $K'$, we fall back into the previous case when we consider this collection.

If $K$ is regular, then $(B\cap B')\setminus (A\cup A')=\emptyset$.  Observe that $N^+(Q)\cap (B\cap B') \subseteq A\cap B'$ and 
$N^-(Q')\cap  (B\cap B') \subseteq A'\cap B$. Since $A\cap B' \neq \emptyset$ and every vertex in $A$ is reachable from $L$, we have that $N^+(Q)\cap (B\cap B') \neq \emptyset$ and thus $N^-(Q')\cap  (B\cap B') \neq \emptyset$. This together with 
Claim~\ref{claim:pathinbcapb} 
implies that there is a path, say $W$, from $A\cap B'$ to $A'\cap B$ which lies in the graph $D[B\cap B']$. Let $p$ be the last vertex on $W$ from 
$A\cap B'$. Since $A\cap B'$ and $A'\cap B$ partition $B\cap B'$, either there is an arc $(p,q)$ such that $p\in A\cap B'$ and $q\in A'\cap B$ or  
$p\in A\cap B'$ and $q\in B'$. In either case this implies that there is an arc $(q',p')$ where $q'\in A\cap B'$  in the former case and $q\in B'$ in the latter case and $p'\in A'\cap B$. Since both $(p,q)$ and $(q',p')$ are in $\delta^+(K)\cap \delta^-(K')$, and $\delta^+(K)$ is a minimum $L$-$L'$ separator, we have that $\delta^+(K)$ is a minimum $L$-$L'$ separator containing arcs $y$ and $y'$ where $y=(p,q)$.

\medskip
\begin{description}
\item[Case III:] $A\subseteq B\cap B'$. This case is non-existent since $L\subseteq A$ and $L\cap B'=\emptyset$. 

\end{description}


\medskip

\myparagraph{The Algorithm}
We begin by applying the Ford Fulkerson algorithm to compute a minimum $L$-$L'$ separator in the graph. If we require more than $2k$ iterations of the Ford Fulkerson algorithm, then we return that there is no $(L,k)$-component. We then apply the algorithm of Lemma~\ref{lem:separator layer} to compute the collection ${\cal X}=\{X_1,\dots,X_q\}$ which can be computed in time $\bigoh(k(m+n))$. In order to apply Lemma~\ref{lem:separator layer}, we need vertices $s$ and $t$. Therefore, we add vertices $s$ and $t$ to the graph, add $2k+1$ arcs from $s$ to each vertex in $L$ and $2k+1$ arcs from each vertex in $L$ to $t$. It is clear from this construction that no $s$-$t$ separator of size at most $2k$ will contain any of these newly added arcs and therefore, the $s$-$t$ separators of size at most $2k$ are in one to one correspondence with the $L$-$L'$ separators of size at most $2k$.
This allows us to use Lemma~\ref{lem:separator layer} in the form it is stated in.

We then simply need to examine each $X_{i+1}\setminus X_i$ once to compute the index $a$ such that $X_a$ is the highest regular set and $X_{a+1}$ is the least irregular set. After computing the index $a$, in $\bigoh(m)$ time, we can compute the case we are currently in by computing the intersection of the sets $A$ and $B$. If we are in case (b), in time $\bigoh(k^2m)$, we can compute $K$ and either find an irregular minimum $L$-$L'$ separator by testing if there is a minimum $L$-$L'$ separator containing $y,y'$ for some $y\in \delta^+(K)$ or move to case (a) where we already have computed the required sets-- the regular set with the highest index $A$ and the irregular set with the least index $K$. Finally, if we are in case (a), in $\bigoh(m)$ time, we compute the set $Q$ and iteratively compute a $(Q,k')$-component in $D\setminus (B\cap B')$ (which we have already shown is an $(L,k)$-component)  where $k'<k$ or an irregular minimum $Q$-$Q'$ separator. In the latter case, we add $Q^o$ to this minimum separator to get an irregular minimum $Q$-$Q'$ separator in $D$.
The correctness of our algorithm follows from the structural claims preceding the description. Furthermore, each time we iterate, we attempt to compute an $(L,k')$-component where $k'<k$. Therefore,  we can have at most $2k$ such iterations and hence, the running time of our algorithm is bounded by $\bigoh(k^3(m+n))$. This completes the proof of the lemma.
\end{proof}


\section{Algorithm for {\skewsymmc}}\label{sec:algo}
In this section we design our linear time parameterized algorithm for {\skewsymmc}. We first give a lemma which allows us to find a solution that is 
disjoint from some part of the solution.  More formally we show the following. 

\begin{lemma}\label{lem:main branching skewsym}
Let $(D=(V,A),\sigma,{\cal T},k)$ be a {\Yes} instance of {\skewsymmc} and let $L$ be a regular set of vertices such that there is an $L$-$L'$ path in $D$. If there is a solution for the given instance which is an $L$-$L'$ self-conjugate separator in $D$, then the following hold. 

\begin{enumerate}

\item An $(L,k)$-component exists.

\item Let $Z\subseteq V$ be a regular set of vertices containing $L$ such that the size of the minimum $Z$-$Z'$ separator is the same as the size of the minimum $L$-$L'$ separator. Then, there is a solution for the given instance disjoint from $A[Z]$.

\end{enumerate}

\end{lemma}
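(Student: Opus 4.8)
The first statement is essentially immediate from Lemma~\ref{lem:find imp set skewsym}. Let $S^*$ be a solution that is an $L$-$L'$ self-conjugate separator, so $|S^*|\le 2k$ and $S^*$ separates $L$ from $L'$. First I would observe that $R(L,S^*)$ is a regular set (Observation~\ref{obs:restate}), contains $L$, is reachable from $L$ inside $D[R(L,S^*)]$, and its out-boundary $\delta^+(R(L,S^*))$ is a self-conjugate $L$-$L'$ separator of size at most $2k$; in particular the minimum $L$-$L'$ separator has size at most $2k$. Now the collection $\mathcal{X}=\{X_1,\dots,X_q\}$ of Lemma~\ref{lem:separator layer} (with $L$ acting as $s$, $L'$ as $t$) exists, and by Claim~\ref{claim:firstisregular} at least $X_1$ is regular. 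If every $X_i$ is regular then $X_q$ is an $(L,k)$-component by Claim~\ref{claim:qiscomponent}; otherwise the case analysis in the proof of Lemma~\ref{lem:find imp set skewsym} either produces an $(L,k)$-component directly or recurses to a strictly smaller instance $D\setminus(B\cap B')$ with parameter $k'<k$ on the regular set $Q$, and the recursion bottoms out (the parameter strictly decreases), so an $(L,k)$-component always exists. Since $D$ itself is a $\Yes$-instance the budget is genuinely available, so this establishes item~1. Alternatively, one can simply note that $R(L,S^*)$ already satisfies properties 1--4 of an $(L,k)$-component, and any inclusion-maximal superset of it with those properties is an $(L,k)$-component.

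For item~2 the approach is an exchange argument. Fix a solution $S^*$ that is an $L$-$L'$ self-conjugate separator, chosen to minimise $|S^*\cap A[Z]|$; I want to show this intersection is empty. Suppose not. Let $R=R(L,S^*)$ and let $W=Z\cup R$. The key point is that the minimum $Z$-$Z'$ separator and the minimum $L$-$L'$ separator have the same size, and $S^*$ contains $\delta^+(R)$-sized portions, so I can compare $\delta^+(W)$, $\delta^+(Z)$, $\delta^+(R)$ using the submodularity-style / crossing-uncrossing machinery available in the skew-symmetric setting. Concretely, I would argue that because $L\subseteq Z$ and $L\subseteq R$, the set $W$ is regular (any irregularity would give an $L$-$L'$ path avoiding $S^*$), reachable from $L$ inside $D[W]$, and $\delta^+(W)$ is again a self-conjugate $L$-$L'$ separator of size at most that of $\delta^+(R)$. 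Then I replace $S^*$ by a new self-conjugate separator supported on $\delta^+(W)\cup(S^*\setminus A[W])$ — roughly, "push the cut out to the boundary of $W$" — and check that this new set (i) is still self-conjugate, (ii) still has size at most $2k$, (iii) is still a solution, i.e. for every $d$-set in $\mathcal{T}$ some $v_i$ is still separated from $\sigma(v_i)$, and (iv) meets $A[Z]$ strictly less often than $S^*$ did, contradicting minimality. Point (iii) is where I would lean on the structural observations about skew-symmetric separators: the strongly connected components of $D\setminus S^*$ that witness separation of the $d$-sets are not merged by moving the cut outward, since any new path created by deleting arcs of $S^*\cap A[W]$ and inserting arcs of $\delta^+(W)$ stays within $W\cup W'$ and hence cannot reconnect a vertex to its conjugate without creating an $L$-$L'$ path.

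The main obstacle, and the part requiring the most care, is verifying (iii): that rerouting the solution to the boundary of $W$ preserves the separation of \emph{every} $d$-set in $\mathcal{T}$, not just the $L$-$L'$ separation. This is exactly the place where naive digraph submodularity fails on skew-symmetric graphs, and the proof must instead use that $S^*$ and its replacement are both self-conjugate together with Observation~\ref{obs:implication graphs} and Observation~\ref{obs:restate} to control how strongly connected components can merge. I expect the clean way to handle it is to show that $R(L, \text{new solution}) = W$ and that, outside $W\cup W'$, the graph $D\setminus(\text{new solution})$ has exactly the same reachability structure as $D\setminus S^*$, from which the $d$-set conditions transfer verbatim; the bookkeeping inside $W\cup W'$ is then the technical heart. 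A secondary subtlety is ensuring the size bound survives — here one uses the hypothesis that $\delta^+(Z)$ and the minimum $L$-$L'$ separator have equal size, so that $|\delta^+(W)|\le|\delta^+(R)|\le|S^*\cap(\text{arcs cut on the }R\text{ side})|$, keeping the total at most $2k$, and one invokes Lemma~\ref{lem:lk component}(2) / the crossing-uncrossing lemma to keep everything self-conjugate.
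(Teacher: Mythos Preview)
Your argument for item~1 is fine; in particular your ``alternative'' (that $R(L,S^*)$, or indeed $L$ itself, already satisfies the first four properties of an $(L,k)$-component, so a maximal such set exists) is exactly what the paper does. The detour through Lemma~\ref{lem:find imp set skewsym} is unnecessary and in fact does not work as stated, since that lemma may return an irregular minimum separator rather than an $(L,k)$-component.

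For item~2 your plan diverges from the paper and has a real gap. You form $W=Z\cup R(L,S^*)$ and propose to push the cut out to $\delta^+(W)$, then run a minimal-counterexample argument. But you never verify step~(iii) --- you explicitly flag it as ``the main obstacle'' and only sketch the hoped-for mechanism. Moreover the construction is more delicate than you indicate: $\delta^+(W)$ is not self-conjugate, so the replacement set must be $\delta^+(W)\cup\delta^-(W')$ together with $S^*\setminus(A[W]\cup A[W'])$, and then the size comparison requires controlling $|S^*\cap(A[W]\cup A[W'])|$ against $2|\delta^+(W)|$, which is not what you wrote. The minimal-counterexample framing is also superfluous, since your replacement (once made precise) is already disjoint from $A[Z]$.

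The paper's approach is simpler and avoids $R(L,S^*)$ entirely. Take $Y=X\cap A[Z]$ and let $P\subseteq\delta^+(Z)$ be those boundary arcs whose tails are \emph{not} reachable from $L$ in $D[Z]\setminus Y$. Set $\widehat X=(X\setminus(Y\cup Y'))\cup(P\cup P')$; this is self-conjugate and disjoint from $A[Z]$ in one step. The size bound $|P|\le|Y|$ follows from Menger's theorem inside $D[Z]$: since $\delta^+(Z)$ is a minimum $L$-$L'$ separator, there are $|\delta^+(Z)|$ arc-disjoint $L$--$\tail(\delta^+(Z))$ paths in $D[Z]$, and $Y$ must hit one such path for each of the $|P|$ tails it cuts off. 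Correctness is by contradiction: a violated $d$-set in $D\setminus\widehat X$ gives a closed walk through some $y\in Y$ containing conjugate vertices $t,t'$; since $\widehat X$ is disjoint from $A[Z]$, $\tail(y)\in Z$ is reachable from $L$ in $D\setminus\widehat X$, hence so are $t,t'$, yielding an $L$-$L'$ walk in $D\setminus\widehat X$. One then reroutes the $Z$- and $Z'$-portions of this walk along the $Y$-avoiding paths that witness reachability of $\tail(\delta^+(Z)\setminus P)$ from $L$, producing an $L$-$L'$ path in $D\setminus X$, a contradiction. This rerouting inside $Z$ (and its conjugate inside $Z'$) is the missing idea in your sketch.
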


\begin{proof}

For the first statement, observe that since there is a solution for the given instance which is an $L$-$L'$ self-conjugate separator, the size of the minimum $L$-$L'$ separator is at most $2k$. Therefore the set $L$ itself satisfies the first 4 properties of an $(L,k)$-component and therefore an $(L,k)$-component exists. This completes the proof for the first statement.

 \begin{figure}[t] \begin{center}
\includegraphics[height=140 pt, width=370 pt]{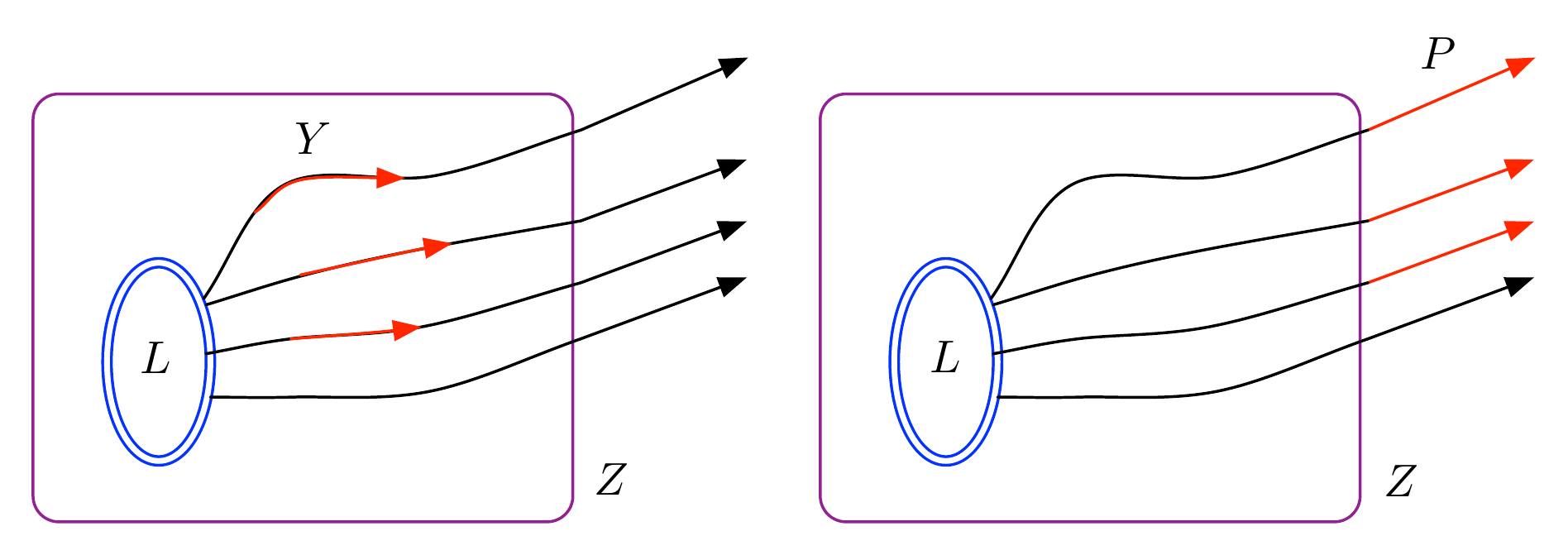}

 \caption{An illustration of the sets $Z$, $Y$ and $P$ in the proof of Lemma~\ref{lem:main branching skewsym}.}
\label{fig:disjointsolution}
\end{center}
\end{figure}

\smallskip
Now we prove the second part of the lemma. 
Let $X$ be the solution defined above, that is, $X$ is an $L$-$L'$ self-conjugate separator. If $X$ is disjoint from $A[Z]$, then we are done. Therefore, suppose that $X$ intersects $A[Z]$ and let $Y$ be the set of arcs in $A[Z]$ such that $Y \cup Y'\subseteq X$.  Let 
$P\subseteq \delta^+(Z)$ be such that $\tail(P)$ is not reachable from $L$ in $D[Z]\setminus Y$. That is,  $Y$ is an $L$-$\tail(P)$ 
separator in the digraph $D[Z]$.  It might be possible that  $\tail(P)=\emptyset$. 
 We now claim that the set $\widehat X=(X\setminus (Y\cup Y'))\cup (P\cup P'))$ (see Figure~\ref{fig:disjointsolution}) is also a solution for the given instance.
 \begin{claim}
  $\widehat X=(X\setminus (Y\cup Y'))\cup (P\cup P'))$ is a solution for the given instance and $|\widehat{X}|\leq |X|$.
 \end{claim}
\begin{proof}
We first show that $|\widehat{X}|\leq |X|$. 
Since $\delta^+(Z)$ is a minimum $L$-$L'$ separator, there are $\vert \delta^+(Z)\vert$ arc disjoint paths from $L$ to $\tail(\delta^+(Z))$ in $D[Z]$. Therefore, $\vert Y\vert \geq \vert P\vert$. Clearly, $\widehat X$ is no larger than $X$. Therefore, it remains to show that $\widehat X$ is a skew-symmetric multicut for $D$. If this were not the case, then there is a closed walk in the graph $D\setminus \widehat X$ intersecting an arc $y\in Y$ and containing vertices $t$ and $t'$. Here $t\in J$ where $J\in \cal T$ and for every vertex $v_i \in J$,  $v_i$ and $\sigma(v_i)$ lie in the same strongly connected components of $D\setminus \widehat{X}$.  That is, $J$ is a violated constraint.

Since $\tail(y)$ is reachable from $L$ in the graph $D\setminus \widehat X$, both $t$ and $t'$ are reachable from $L$ in the graph $D\setminus \widehat X$, which implies the presence of a path, say $W$, from $L$ to $L'$ in $D\setminus \widehat X$. Now using this path we construct another path $W'$ from  $L$ to $L'$ in $D\setminus X$. This will contradict our assumption that $X$ is an $L$-$L'$ self-conjugate separator in $D$.

Observe that  $W$ must also intersect an arc in $\delta^+(Z)$ since $L\subseteq Z$ and $L'$ is disjoint from $Z$. However, since $P\cup P'\subseteq \widehat X$, this arc, say $(p,z_1)$,  is in $\delta^+(Z)\setminus P$. Furthermore, this path also contains a subpath from a vertex $z_1\in \head(\delta^+(Z)\setminus P)$ to a vertex $z_2\in \tail(\delta^-(Z')\setminus P')$ whose arcs are disjoint from $A[Z\cup Z']\cup \delta^+(Z)\cup \delta^-(Z')$.  We call this path $W_{z_1z_2}$.  Let $(z_2,q)$ be an arc in $\delta^-(Z')\setminus P'$ such that the arc $(z_2,q) $ is on $W$. Observe that $p$, $q'$ are in $Z$ and there are paths from $L$ to both $p$ and $q'$ that avoids arcs of $Y$. The last assertion follows from the fact that 
the vertices of $ \delta^+(Z)\setminus P$ are  reachable from $L$ in $D[Z]\setminus Y$.  Let these paths avoiding arcs of $Y$ be called $W_{Lp}$ and $W_{Lq'}$. Then observe that $W_{Lp}W_{z_1z_2}(W_{Lq'})'$ forms a path in $D\setminus X$ -- a contradiction. Here, $(W_{Lq'})'$  is the path that is conjugate to $W_{Lq'}$. This completes the proof of the claim. 
%
%
%
%
%
%
%
%
%
\end{proof}
The above claim  completes the proof of  the lemma. 
\end{proof}

\noindent
From this point on, we assume that an instance of {\skewsymmc} is of the form $(D=(V,A),\sigma,T,k,L)$ where $L$ is a regular set of vertices and the question is to check {\em if there is a solution for the given instance which is an $L$-$L'$ self-conjugate separator.} To solve the problem on the given input instance, we simply solve it on the instance $(D=(V,A),\sigma,{\cal T},k,\emptyset)$.

\begin{lemma}\label{lem:rule correctness} Let $(D=(V,A),\sigma,{\cal T},k,L)$ be an instance of {\skewsymmc} and let $S$ be an irregular minimum $L$-$L'$ separator. Then, there are arcs $y,y'$ such that  $y,y'\in \delta^+(R(L,S\cup S')))$ and there is a solution for the given instance containing $y,y'$.  
\end{lemma}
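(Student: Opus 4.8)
The plan is to combine the ``crossing-uncrossing'' structure of minimum $L$-$L'$ separators (Lemma~\ref{lem:lk component} and Lemma~\ref{lem:crossing}) with the disjoint-solution argument of Lemma~\ref{lem:main branching skewsym}. First I would set up notation: let $Z = R(L, S\cup S')$, so that by Lemma~\ref{lem:lk component}(1) the set $\delta^+(Z)$ is itself a minimum $L$-$L'$ separator (note $S$ being an irregular $L$-$L'$ separator of minimum size means $S\cup S'$ contains a minimum one, and applying the lemma with $X$ a minimum separator inside $S\cup S'$ gives the claim; alternatively one argues directly that $|\delta^+(Z)|$ equals the min-cut value). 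The first thing to establish is that $\delta^+(Z)$ is \emph{irregular}: since $S$ is irregular, there is an arc $a$ with $a,a'\in S\cup S'$; using Observation~\ref{obs:restate} ($\sigma(R(L,S\cup S')) = \bar R(L', S\cup S')$) and the fact that $Z$ separates $L$ from $L'$, one shows a conjugate pair must actually appear in $\delta^+(Z)$ — otherwise all the ``images'' of cut arcs would sit inside $A[Z]$ or $A[Z']$, and by Lemma~\ref{lem:lk component}(2) that would force a smaller separator, contradicting minimality or the irregularity of $S$. Fix such a pair $y, y' \in \delta^+(Z)$.

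The core of the argument is then to show some solution contains $y$ \emph{and} $y'$. Let $X$ be a solution that is an $L$-$L'$ self-conjugate separator (we are in the regime, by the convention fixed after Lemma~\ref{lem:main branching skewsym}, where such a solution is sought). Since $|Z|$ has the property that the min $Z$-$Z'$ separator equals the min $L$-$L'$ separator, Lemma~\ref{lem:main branching skewsym}(2) gives a solution $\widehat X$ disjoint from $A[Z]$. Now $\widehat X$ must still be an $L$-$L'$ separator, so it must hit $\delta^+(Z)$; in fact, because $\widehat X \cap A[Z] = \emptyset$ and $Z$ is reachable from $L$ inside $D[Z]$, the ``frontier'' $\delta^+(Z)$ behaves like a single cut layer that $\widehat X$ must cover. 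The plan is to push $\widehat X$ onto $\delta^+(Z)$: replace the part of $\widehat X$ that lies ``beyond'' $Z$ by arcs of $\delta^+(Z)$, using a max-flow / Menger argument (there are $|\delta^+(Z)|$ arc-disjoint $L$-to-$\tail(\delta^+(Z))$ paths in $D[Z]$, mirroring the computation in the claim inside Lemma~\ref{lem:main branching skewsym}), obtaining a solution contained in $\delta^+(Z)$ — hence equal to $\delta^+(Z)$ if $|\delta^+(Z)|\le 2k$, and in any case a self-conjugate solution $S^\star \subseteq \delta^+(Z)$ with $S^\star = \sigma(S^\star)$. The final step: argue $y \in S^\star$ (and therefore $y' \in S^\star$ by self-conjugacy). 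If $y \notin S^\star$, then since $y \in \delta^+(Z)$ and $y' \in \delta^+(Z) = $ (a superset of) $S^\star \cup(\text{arcs left in } D\setminus S^\star)$, one produces an $L$-$L'$ path through $y$ in $D\setminus S^\star$: the tail of $y$ is reachable from $L$ in $D[Z]\setminus S^\star$, the head of $y$ lies outside $Z$, and by Observation~\ref{obs:restate} the head of $y$ can reach $L'$, contradicting that $S^\star$ separates $L$ from $L'$. Hence $y\in S^\star$, and by symmetry $y'\in S^\star$, so $S^\star$ is the desired solution containing $y, y'$.

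The main obstacle I anticipate is the middle step — rigorously converting an arbitrary self-conjugate solution into one \emph{contained in} $\delta^+(Z)$ while preserving both the skew-symmetry ($S^\star = \sigma(S^\star)$) and the multicut property for \emph{every} set $J \in \mathcal T$. The subtlety is the same one handled in the claim inside Lemma~\ref{lem:main branching skewsym}: after rerouting, a previously-separated conjugate pair $t, t'$ inside some $J$ could become reconnected via a closed walk passing through the newly-freed arcs of $A[Z]$; ruling this out requires the ``reachability-from-$L$'' bookkeeping (if $t$ were reachable from $L$ after rerouting then so would $t'$, yielding an $L$-$L'$ walk, contradicting that the rerouted set is still an $L$-$L'$ separator). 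So the honest write-up will essentially invoke, or re-run, the argument of that claim with $Z$ in place of the set there and with $P = \delta^+(Z)\setminus S^\star$ being the arcs of the frontier not used. Everything else — irregularity of $\delta^+(Z)$, the Menger count $|Y|\ge|P|$, and the final contradiction-path — is routine given the lemmas already proved.
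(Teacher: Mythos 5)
Your skeleton matches the paper's: set $Z=R(L,S\cup S')$, show $\delta^+(Z)$ is a minimum $L$-$L'$ separator containing a conjugate pair $y,y'$, invoke Lemma~\ref{lem:main branching skewsym}(2) to obtain a solution disjoint from $A[Z]$, and argue that this solution must contain $y$ and $y'$. Two steps, however, do not go through as written. First, your justification that $\delta^+(Z)$ is irregular is not the right argument: Lemma~\ref{lem:lk component}(2) only says that an arc lies in \emph{some} minimum separator iff its conjugate does; it does not prevent $\delta^+(Z)$ from being a regular minimum separator even though $S$ is irregular (both regular and irregular minimum separators can coexist). The correct argument is a short count: $\delta^+(Z)\subseteq S\cup S'$ with $|\delta^+(Z)|=|S|$, while $S\cup S'$ meets at most $|S|-1$ distinct conjugate pairs because $S$ already contains a pair $a,a'$; by pigeonhole two arcs of $\delta^+(Z)$ must be conjugates of each other.

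Second, the middle step --- ``pushing $\widehat X$ onto $\delta^+(Z)$'' to obtain a solution $S^\star\subseteq\delta^+(Z)$ --- is a step that would fail. A solution is a skew-symmetric multicut for the entire family $\mathcal T$, and terminal sets living far from $Z$ may force solution arcs nowhere near $\delta^+(Z)$; the rerouting in the claim inside Lemma~\ref{lem:main branching skewsym} only moves arcs \emph{outward} from $A[Z]$ to the frontier $\delta^+(Z)$, it never lets you discard the part of the solution beyond $Z\cup Z'$. Fortunately your closing argument never actually uses $S^\star\subseteq\delta^+(Z)$: it only uses that the solution is self-conjugate and disjoint from $A[Z]$, which is exactly what Lemma~\ref{lem:main branching skewsym}(2) already provides. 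So drop the rerouting and run the contradiction directly on $\widehat X$: since $\widehat X\cap A[Z]=\emptyset$ and every vertex of $Z$ is reachable from $L$ inside $D[Z]$, both $\tail(y)$ and $\tail(y')$ are reachable from $L$ in $D\setminus\widehat X$; if $y\notin\widehat X$, then $\head(y)=\sigma(\tail(y'))$ is reached from $L$ and, by Observation~\ref{obs:implication graphs} applied to the skew-symmetric graph $D\setminus\widehat X$, reaches $L'$, contradicting that $\widehat X$ is an $L$-$L'$ separator. Hence $y\in\widehat X$ and, by self-conjugacy, $y'\in\widehat X$. With these two repairs your proof coincides with the paper's.
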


\begin{proof} Let $Z=R(L,S\cup S')$. By Lemma~\ref{lem:lk component}, $\delta^+(Z)$ is a minimum $L$-$L'$ separator and hence $\vert \delta^+(Z)\vert=\vert S\vert$. Since $S$ is irregular, $S\cup S'$ contains arcs from at most $\vert S\vert-1$ conjugate pairs. Thus  
there are arcs $y,y'\in \delta^+(Z)$.

Let $X$ be a solution for the given instance. Since there is no path from $L$-$L'$ in $D\setminus X$, it must be the case that $D\setminus X$ cannot contain paths from $L$ to both $\tail(y)$ and $\tail(y')$. Therefore, it must be the case that $X$ intersects $A[Z]$ and intersects all paths from $L$ to $\tail(y)$ or $\tail(y')$. However, by Lemma~\ref{lem:main branching skewsym} we know that there exists a solution that does not intersect $A[Z]$. This implies that there is also a solution containing the arcs $y,y'$.
\end{proof}

\noindent
The above lemma gives us the following reduction rule.

\begin{redrule}\label{red:regularity}
 Let $(D=(V,A),\sigma,T,k,L)$ be an instance of {\skewsymmc} and let $S$ be an irregular minimum $L$-$L'$ separator. Let $y,y'$ be the arcs given by Lemma~\ref{lem:rule correctness}. Then return the instance $(D=(V,A\setminus \{y,y'\}),\sigma,{\cal T},k-1)$.  
\end{redrule}

\medskip

\noindent
Therefore, by combining this reduction rule with the algorithm of Lemma~\ref{lem:find imp set skewsym}, we can, in linear time either reduce the parameter or compute an $(L,k)$-component with a regular neighborhood.
We are now ready to prove Theorem~\ref{thm:skewsymmc} by giving an algorithm for {\skewsymmc}.

\medskip

\noindent
{\bf Description of Algorithm.} The input to our algorithm for {\skewsymmc} is an instance $(D=(V,A),\sigma,{\cal T}=\{J_1,\dots,J_r\},k,L)$ where $J_i=\{v_{i_1},\dots,v_{i_d}\}$ and 
the algorithm either returns a skew-symmetric multicut of size at most $2k$ which is an $L$-$L'$ self-conjugate separator in $D$ or concludes correctly that no such set exists. In order to solve the problem on the given instance of {\skewsymmc}, the algorithm is invoked on the input $(D=(V,A),\sigma,{\cal T},k,\emptyset)$.

\medskip
\noindent
{\bf 1.} If $L=\emptyset$ or if there is no path from $L$ to $L'$ in $D$, then the algorithm
checks if there is a set $J_i\in {\cal T}$ such that for all $1\leq s\leq d$, $v_{i_s}$ and $v_{i_s}'$ lie in the same strongly connected component in $D$, that is, a violated set. If there is no such set, then the algorithm returns the emptyset. Otherwise, the algorithm picks such a set $J_i$, and branches in $2d$ ways. In the first $d$ branches, it recurses on the instances $\{(D=(V,A),\sigma,{\cal T},k,\{v_{i_j}\})\}_{1\leq j\leq d}$ and in the next $d$ branches, it recurses on the instances $\{(D=(V,A),\sigma,{\cal T},k,\{v_{i_j}'\})\}_{1\leq j\leq d}$.

\medskip
\noindent
{\bf 2.} Suppose $L\neq \emptyset$ and there is an $L$-$L'$ path in $D$. 
Then, the algorithm of Lemma~\ref{lem:find imp set skewsym} is first used on the instance $(D=(V,A),\sigma,{\cal T},k,L)$ to either compute an $(L,k)$-component (if it exists) or an irregular minimum $L$-$L'$ separator. If an $(L,k)$-component does not exist, then we return {\No}. If an irregular minimum $L$-$L'$ separator is returned, then we apply Reduction Rule~\ref{red:regularity}. Suppose that an $(L,k)$-component $Z$ is returned. We check if Reduction Rule~\ref{red:regularity} applies on any arc in $\delta^+(Z)$ and if it does, apply the rule. Therefore, at this point, we may assume that an $(L,k)$-component $Z$ is returned and that the rule is not applicable on any arc in $\delta^+(Z)$.  Observe that $\delta^+(Z)\neq \emptyset $ since there is an $L$-$L'$ path in $D$. The algorithm then picks an arc $a\in \delta^+(Z)$ and branches in 2 ways as follows. In the first branch, the algorithm deletes $\{a,a'\}$ and recurses on the resulting instances, that is, the algorithm recurses on the instance $(D=(V,A\setminus \{a,a'\}),\sigma,{\cal T},k-1,L)$. In the next branch, the algorithm recurses on the instance assuming that $a$ is in $A[R(L,X)]$ where $X$ is the hypothetical solution, that is, the algorithm recurses on the instance $(D=(V,A),\sigma,{\cal T},k,L\cup \{\head(a)\})$. 

\medskip

\noindent
{\bf Correctness.} The correctness of the algorithm is proved by induction on a measure defined on the instance  $I$. Let this measure be denoted 
by  $\mu(I)=\mu((D,k),L)=2k-\lambda(L,L')$ where $\lambda(L, L')$ is the size of the smallest $L$-$ L'$  separator. In the base case, if $\lambda(L, L')>2k$, then the algorithm of Lemma~\ref{lem:find imp set skewsym} returns {\No} on input $(D=(V,A),\sigma,k,L)$, and hence this algorithm returns {\No} as well, which is correct since the solution we require contains an $L$-$L'$ separator of size at most $2k$. Similarly, the case when $k<0$ is also clearly correct. We now assume as induction hypothesis that the algorithm is correct on all instances $I$ such that $\mu(I)\leq \mu-1$ and consider an instance $I=(D=(V,A),\sigma,{\cal T},k,L)$ such that $\mu(I)=\mu$ and $k\geq 0$.  

We first show that an application of Reduction Rule~\ref{red:regularity} does not decrease this measure. 
Since deleting an arc and its conjugate from an irregular minimum $L$-$L'$ separator reduces the size of the minimum size $L$-$L'$ separator by 2 and the budget $k$ by 1, the measure $2k-\lambda(L,L')$ remains unchanged.

We now consider the branching rules. If $L=\emptyset$ or there is no $L$-$L'$ path in $D$, then $\lambda (L,L')=0$. Consider an instance $I'=(D=(V,A),\sigma,{\cal T},k,\{v\})$ resulting from a branch here. Although the parameter has not decreased here, since there is a path from $v$ to $v'$, $\lambda(v,v')>0$, which implies that $\mu(I')<\mu(I)$. Therefore, by combining the exhaustiveness of the branching along with the induction hypothesis, we obtain the correctness of the algorithm on the instance $I$ as well.

We now suppose that $L\neq \emptyset$ and there is an $L$-$L'$ path in $D$.
The branching is exhaustive due to Lemma~\ref{lem:main branching skewsym}. 
We now show that for each of the resulting instances $I'$ from any of the branches, $\mu(I')\leq \mu-1$ in which case we can apply the induction hypothesis on these instances, thus proving the correctness of the algorithm.\\

\noindent
{\bf (a)} We begin with the branch where we recurse on the instance $I'=(D=(V,A),\sigma,{\cal T},k,Z\cup \{\head(a)\})$. Since $Z$ is an $(L,k)$-component, by definition, the size of the smallest separator from $Z\cup \{\head(a)\}$ to $Z'\cup \{(\head(a))'\}$ is strictly larger than $\lambda(L,L')$, which implies that $\mu(I')<\mu(I)$. 

\medskip
\noindent
{\bf (b)} We now consider the instance resulting from the other branch, that is the instance $I'=(D=(V,A\setminus \{a,a'\}),\sigma,{\cal T},k-1,L)$ where $a\in \delta^+(Z)$. In this case, the parameter has decreased by 1. Since Reduction Rule~\ref{red:regularity} is not applicable on $\delta^+(Z)$, removing $\{a,a'\}$ from the graph reduces $\lambda(L,L')$ by at most $1$ and therefore, we have that $\mu(I')<\mu$. This completes the proof of correctness of the algorithm.

\begin{figure}[t] \begin{center}
\includegraphics[height=200 pt, width=230 pt]{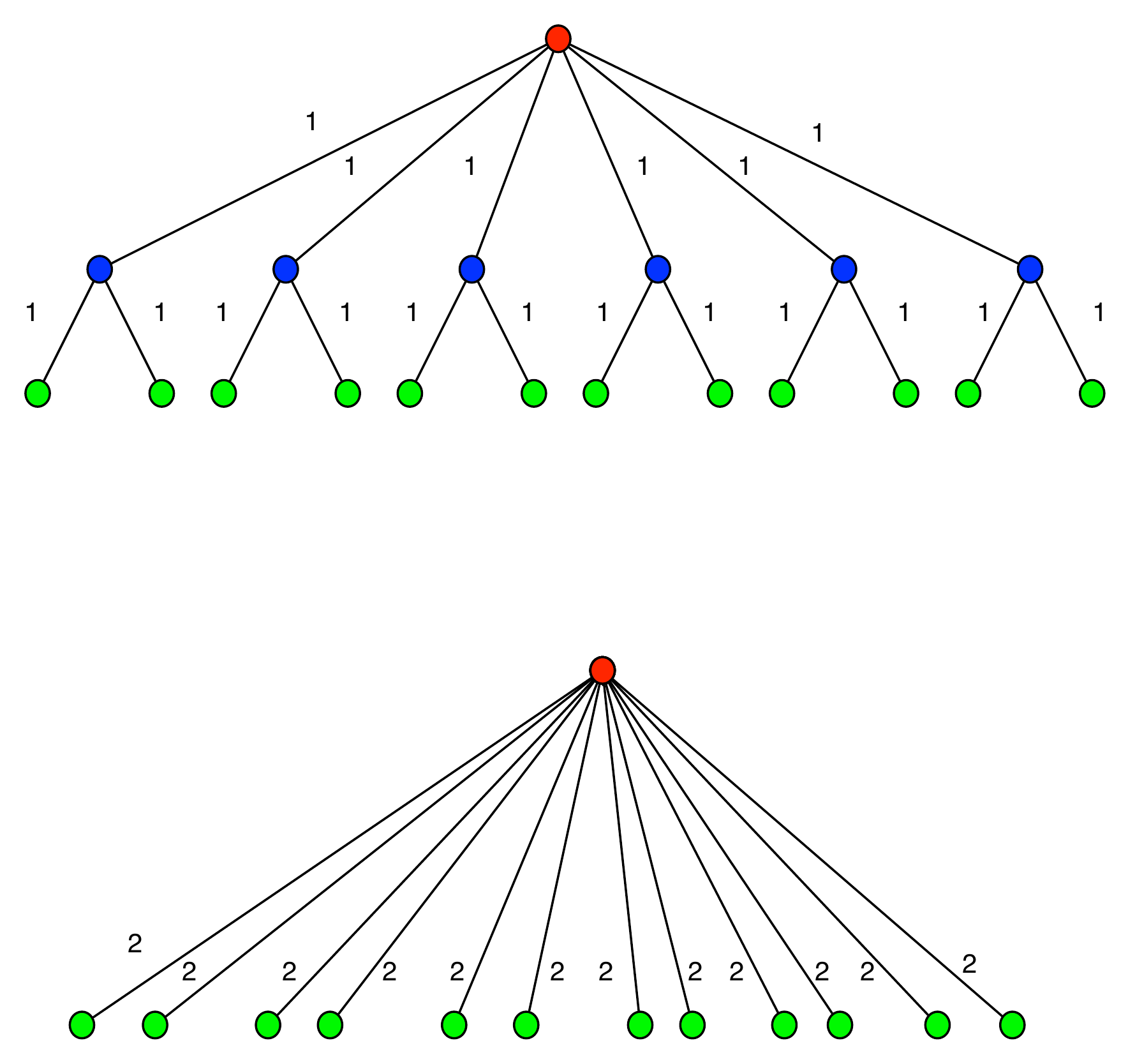}

 \caption{An illustration of the tighter analysis of the search tree in the algorithm.}
\label{fig:searchtree}
\end{center}
\end{figure}

\medskip
\noindent
{\bf Running time.} 
We prove that on an instance $I=(D=(V,A),\sigma,{\cal T},k,L)$, the algorithm computes a search tree with at most  $(2\sqrt{d})^{\mu(I)}$ leaves. 
We have already proved that in each branch, the measure $\mu(I)$ decreases by at least 1 and since we only have $2d$-way branchings, the number of nodes of the search tree is clearly bounded by $(2d)^{\mu(I)}$. However, we analyze more closely a branch which occurs in a $2d$-way branching where $\mu(I)$ decreases by exactly 1. Suppose that $J$ was the violating set computed in this step and suppose $v\in J$ be such that $\lambda(v,v')=1$. Consider the branch where we recurse on the instance $(D=(V,A),\sigma,{\cal T},k,\{v\})$. In this recursion, we first observe that the reduction rule will not be applied. This is because the reduction rule requires a minimum $\{v\}$-$\{v'\}$ separator of size at least 2 while $\lambda(v,v')=1$. Therefore, the algorithm of Lemma~\ref{lem:find imp set skewsym} will find a $(\{v\},k)$-component $Z$.  
\medskip

\begin{claim} $\vert\delta^+(Z)\vert=1$. \end{claim}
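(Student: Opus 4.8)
The claim asserts that when $\lambda(v,v')=1$, the $(\{v\},k)$-component $Z$ returned by Lemma~\ref{lem:find imp set skewsym} satisfies $\vert\delta^+(Z)\vert=1$. The plan is to argue directly from the defining properties of an $(\{v\},k)$-component. By property~4 of an $(L,k)$-component (instantiated with $L=\{v\}$), the size of a minimum $Z$-$Z'$ separator equals the size of a minimum $\{v\}$-$\{v'\}$ separator, which by hypothesis is $\lambda(v,v')=1$. So a minimum $Z$-$Z'$ separator has exactly one arc.

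\textbf{Key steps.} First I would invoke Lemma~\ref{lem:lk component}(1), which tells us that for a minimum $Z$-$Z'$ separator $X$ (here of size $1$, and automatically self-conjugate-free since it is a single arc --- we may take $X$ itself or work with $X\cup X'$), the set $W=R(Z,X\cup X')$ has the property that $\delta^+(W)$ is again a minimum $Z$-$Z'$ separator. Since $Z\subseteq W$ and $W$ is regular (Observation~\ref{obs:restate}), and since $\delta^+(W)$ is a minimum $Z$-$Z'$ separator of size $1$, the set $W$ satisfies the first four properties of a $(\{v\},k)$-component: it contains $v$, it is regular, it is reachable from $v$ in $D[W]$ (as $W$ is exactly the vertices reachable from $Z$, hence from $v$, in $D\setminus(X\cup X')$, and these are reachable in $D[W]$ since no arc leaves $W$ inside that graph), and the minimum $W$-$W'$ separator has size $1=\lambda(v,v')$. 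Now I would appeal to the inclusion-wise maximality (property~5) of the $(\{v\},k)$-component $Z$: since $Z\subseteq W$ and $W$ satisfies properties 1--4, maximality forces $W=Z$. Hence $\delta^+(Z)=\delta^+(W)$ is a minimum $Z$-$Z'$ separator of size exactly $1$, which gives $\vert\delta^+(Z)\vert=1$.

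\textbf{Main obstacle.} The only delicate point is checking that $W=R(Z,X\cup X')$ really does satisfy all of properties 1--4 so that maximality of $Z$ can be applied --- in particular that $W$ is regular (which needs Observation~\ref{obs:restate}, applied with $Z$ playing the role of the regular set $L$) and that the minimum $W$-$W'$ separator has size exactly $1$ rather than something smaller. For the latter, note any $W$-$W'$ separator is a $Z$-$Z'$ separator since $Z\subseteq W$ and $Z'\subseteq W'$, so its size is at least $\lambda(v,v')=1$; combined with $\delta^+(W)$ being a $W$-$W'$ separator of size $1$, equality follows. An alternative, perhaps cleaner, route avoids Lemma~\ref{lem:lk component} entirely: since $\lambda(v,v')=1$, there is a single arc $a^\star$ forming a minimum $\{v\}$-$\{v'\}$ separator; let $W$ be the set of vertices reachable from $v$ in $D\setminus\{a^\star,(a^\star)'\}$. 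One checks $W$ is regular and satisfies properties 1--4 with minimum $W$-$W'$ separator of size $1$, so by maximality $Z\supseteq$ any such set, and in fact $Z=W$ up to the choice of $a^\star$; either way $\delta^+(Z)$ is a minimum separator of size $1$. I expect the writeup to take essentially this maximality-plus-Lemma~\ref{lem:lk component} form, with the verification of the four properties for $W$ being the part requiring care.
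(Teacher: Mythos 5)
Your argument is correct but takes a genuinely different route from the paper's. The paper argues by direct contradiction: supposing $(a,b),(p,q)\in\delta^+(Z)$ and taking a minimum $Z$-$Z'$ separator $S$ of size one, at least one of the two arcs, say $(a,b)$, lies outside $S$; then $S$ is still a minimum $(Z\cup\{b\})$-$(Z'\cup\{b'\})$ separator of size one, so $Z\cup\{b\}$ satisfies properties 1--4 of a $(\{v\},k)$-component, contradicting the maximality of $Z$. You instead invoke Lemma~\ref{lem:lk component}(1) to pass to the reachability set $W=R(Z,X\cup X')$, verify that $W$ satisfies properties 1--4 (regularity via Observation~\ref{obs:restate}, reachability inside $D[W]$, and that the minimum $W$-$W'$ separator has size exactly one), and then use maximality to force $W=Z$, whence $\delta^+(Z)=\delta^+(W)$ has size one. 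Both proofs hinge on maximality; the paper's version is more elementary, augmenting $Z$ by a single vertex and avoiding the extra bookkeeping of checking all four properties for $W$, while your version is more systematic and in effect shows $\delta^+(Z)$ is itself the canonical minimum separator produced by the reachability construction. You correctly identified the verification of properties 1--4 for $W$ as the delicate step, and your handling of it (in particular the sandwich argument pinning the minimum $W$-$W'$ separator size at one) is sound.
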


\begin{proof}
  Suppose not and let $(a,b),(p,q)\in \delta^+(Z)$. Furthermore, let $S$ be a minimum $Z$-$Z'$ separator. Then, since $\vert S\vert=1$ by our assumption, either $(a,b)\notin S$ or $(p,q)\notin S$. Suppose that $(a,b)\notin S$. Then, $S$ is also a minimum $Z\cup \{b\}$-$Z'\cup \{b'\}$ separator, which contradicts the maximality of $Z$ as a $(\{v\},k)$-component.
  \end{proof}
  
  \noindent
 Consider the branching performed on the single arc in $\delta^+(Z)$. We have already shown that the measure decreases by 1 in each of these branchings. Therefore, we combine these 2 branches with the branch where we decided to recurse on the instance $(D=(V,A),\sigma,{\cal T},k,\{v\})$. This leads to 2 branches where the measure decreases by 2 in each. For each branch in the $2d$-way branching where the measure decreases by 1, we can do the same to obtain at most $4d$ branches in each of which the measure decreases by 2 (see Figure~\ref{fig:searchtree}).
Therefore, our worst case branching is a $4d$-way branching, where the measure drops by 2 in each branch. Thus, we obtain a bound of $(2\sqrt{d})^{\mu(I)}\leq (4d)^k$ on the number of leaves of the search tree.

Since finding a violating set can be done in time $\bigoh(m+n+\ell)$, computing an $(L,k)$-component or an irregular minimum $L$-$L'$ separator can be done in time $\bigoh(k^3(m+n))$ and there can be at most $k$ applications of Reduction Rule~\ref{red:regularity} along any root to leaf path of this search tree, we have the claimed bound of $\bigoh((4d)^{k}k^4(m+n+\ell))$ on the running time, completing the proof of Theorem~\ref{thm:skewsymmc}.\qed

\vspace{10 pt}

\noindent
We observe here that $\ell$ occurs in the running time simply because the time taken to compute a violating set is $\bigoh(\ell+m)$. However, in some cases, the set ${\cal T}$ may be given in the form of a violation oracle, in which case, the running time bound remains the same if violation oracle runs in time $\bigoh(\ell)$. Therefore, the theorem below follows from the proof of Theorem~\ref{thm:skewsymmc}.

\begin{theorem}\label{thm:skewsymmcimplicit}
  There is an algorithm for {\skewsymmc} that, given a tuple $(D=(V,A),\sigma,k)$ 
 along with a violation oracle for a family ${\cal T}$, runs in time $\bigoh((4d)^{k}k^4(\ell+m+n)$ and either returns a skew-symmetric multicut of size at most $2k$ or correctly concludes that no such set exists, where $\ell$ is the time required for the violation oracle to compute a violated set in the family ${\cal T}$, $m=\vert A\vert$ and $n=\vert V\vert$.
\end{theorem}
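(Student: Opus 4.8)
The plan is to observe that this is not a new theorem at all, but an immediate corollary of the proof of Theorem~\ref{thm:skewsymmc} once one identifies exactly where the family ${\cal T}$ is used in that algorithm. First I would revisit the description of the {\skewsymmc} algorithm and note that ${\cal T}$ enters in exactly one place: Step~1, where the algorithm must decide whether there exists a \emph{violated set} $J_i\in{\cal T}$, i.e.\ a $d$-set all of whose vertices $v_{i_s}$ satisfy that $v_{i_s}$ and $v_{i_s}'$ lie in the same strongly connected component of the current digraph $D$, and, if so, produce one such $J_i$. No other part of the algorithm inspects ${\cal T}$ — the branching in Step~1 and the separator machinery of Step~2 only use the returned set $J_i$ and the skew-symmetric graph itself. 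So the only dependence on having ${\cal T}$ explicitly listed is through this subroutine.

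Next I would make precise the claim, implicit in the running-time analysis of Theorem~\ref{thm:skewsymmc}, that in the explicit-input setting finding a violated set costs $\bigoh(\ell+m+n)$: one computes the strongly connected components of $D$ in time $\bigoh(m+n)$, labels each vertex with its component, and then scans the list representing ${\cal T}$ (of total length $\ell=d\cdot|{\cal T}|$), checking each $d$-set in time $\bigoh(d)$, for a total of $\bigoh(\ell)$. The point is that this is the \emph{only} cost in the whole algorithm that grows with $\ell$. Therefore, if instead we are handed a violation oracle that, given the current $D$, returns a violated set (or reports that none exists) in time $\bigoh(\ell)$ for some quantity $\ell$, we may simply substitute a call to this oracle for the explicit search in Step~1. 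The correctness of Step~1 only requires that the object returned genuinely be a violated set (and that ``none'' is returned only when none exists), which is exactly the contract of the oracle; so the correctness proof of Theorem~\ref{thm:skewsymmc}, which is by induction on the measure $\mu(I)=2k-\lambda(L,L')$, goes through verbatim.

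Then I would redo the running-time bookkeeping, which is essentially a copy of the bound in Theorem~\ref{thm:skewsymmc} with the phrase ``finding a violating set can be done in time $\bigoh(m+n+\ell)$'' replaced by ``the violation oracle runs in time $\bigoh(\ell)$'' (plus the $\bigoh(m+n)$ needed to compute strongly connected components and hand the oracle the component data, which is absorbed into the $(m+n)$ term already present). The search tree still has at most $(2\sqrt d)^{\mu(I)}\le (4d)^k$ leaves by the same argument (the $4d$-way worst-case branching in which $\mu$ drops by $2$), there are still at most $k$ applications of Reduction Rule~\ref{red:regularity} on any root-to-leaf path, and each node still costs $\bigoh(k^3(m+n))$ for the $(L,k)$-component computation of Lemma~\ref{lem:find imp set skewsym} plus $\bigoh(\ell+m+n)$ for the oracle call and strongly-connected-component computation. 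Multiplying gives $\bigoh((4d)^k k^4(\ell+m+n))$, as claimed.

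I do not expect a genuine obstacle here: the statement is deliberately phrased as ``follows from the proof of Theorem~\ref{thm:skewsymmc}'', so the entire content is the encapsulation argument that isolates the unique use of ${\cal T}$. The one point deserving a sentence of care is that the oracle is called \emph{once per node of the search tree} on the \emph{current} digraph (which changes as arcs are deleted), so the oracle must be an algorithm that takes the current $D$ as input rather than a precomputed answer; this is consistent with how the paper later applies it (e.g.\ to {\delqbackdoor}, where the ``formula after deleting the chosen arcs'' changes). As long as that is stated, the proof is a short reduction to the already-established Theorem~\ref{thm:skewsymmc} and nothing further is needed.
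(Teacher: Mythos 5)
Your proposal is correct and matches the paper's own justification: the paper likewise observes that the family ${\cal T}$ is consulted only when Step~1 searches for a violated set, so substituting an $\bigoh(\ell)$-time oracle call for the explicit $\bigoh(\ell+m)$-time scan leaves the correctness proof and the $\bigoh((4d)^k k^4(\ell+m+n))$ bookkeeping of Theorem~\ref{thm:skewsymmc} unchanged. Your added remark that the oracle must operate on the \emph{current} digraph at each node of the search tree is a worthwhile clarification, but it does not alter the argument.
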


\section{Applications}\label{sec:apps}
In this section we use the algorithm developed for {\skewsymmc} to obtain linear time parameterized algorithms for several other problems.

\subsection{Linear time algorithm for {\almsatc}}
Algorithms for the {\almsatc} problem (defined in the introduction) together with its variable version have turned out to be extremely useful as a subroutine in several parameterized algorithms (see~\cite{MarxR09,MarxRazmulticut,LokshtanovR12}). The variable version of the problem is formally defined as follows.

\medskip

\defparproblem{{\almsat}}{A {\twocnf} formula $F$, integer $k$}{$k$}{Does there exist a set $S_v$ of at most $k$ variables such that by deleting all the clauses which they occur in from $F$ we get a satisfiable formula?}

\vspace{10 pt}


\noindent
It is  known that the \emph{variable} version {\almsat} can be reduced to the \emph{clause} version {\almsatc} via a linear time reductions~\cite{MarxRazmulticut}. Therefore, it suffices for us to give a reduction from the clause version of {\almsatc} to {\skewsymmc}. In order to give this reduction, we begin by recalling the notion of implication graphs of a {\twocnf} formula.

\begin{definition}\label{def:implgraph}
  Given a {\twocnf} formula $F$, the {\bf implication graph} of $F$ is
  denoted by \D{$F$} and is defined as follows. The vertex set of the
  graph is the set of literals of $F$ and for every clause $\{l_1,
  l_2\}$ in $F$, we have arcs $(\bar l_1,l_2)$ and $(\bar
  l_2,l_1)$.
\end{definition}

\noindent
Clearly, implication graphs are skew-symmetric where the involution $\sigma$ is defined as $\sigma(l)=\bar l$ and $\sigma(l_1,l_2)=(\bar l_2,\bar l_1)$.

\begin{theorem}{\sc \cite{AspvallPT79}}\label{thm:2sat testing} A {\twocnf} formula $F$ is satisfiable iff no literal and its complement are contained in the same strongly connected component of \D{$F$}.
\end{theorem}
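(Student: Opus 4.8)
Looking at the final statement, this is Theorem~\ref{thm:2sat testing}, the classical Aspvall-Plass-Tarjan characterization of 2-SAT satisfiability via strongly connected components of the implication graph.

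\medskip

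\noindent
\textbf{Proof proposal.} The plan is to prove both directions by exploiting the structure of the implication graph $\D{F}$ together with its skew-symmetry. Recall that in $\D{F}$ a clause $\{l_1,l_2\}$ contributes the arcs $(\bar l_1,l_2)$ and $(\bar l_2,l_1)$, so a directed path $l \to m$ encodes the logical implication ``$l$ true forces $m$ true'' for any satisfying assignment; this is the key invariant I would establish first as a small lemma (by induction on the length of the path, using that each arc corresponds directly to a clause that must be satisfied). For the easy direction, suppose some literal $x$ and its complement $\bar x$ lie in the same strongly connected component, so there are paths $x \to \bar x$ and $\bar x \to x$. If $F$ had a satisfying assignment $\tau$, then whichever of $x,\bar x$ is true under $\tau$ would, by the implication invariant, force the other to be true as well — a contradiction since they cannot both be true. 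Hence $F$ is unsatisfiable, which gives the contrapositive of one direction.

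\medskip

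\noindent
For the converse — the substantive direction — I would assume that no literal and its complement share a strongly connected component and construct a satisfying assignment explicitly. Consider the condensation of $\D{F}$ (the DAG of strongly connected components) and fix a reverse topological order on it. The assignment is built greedily: process components in this reverse topological order; for a component $C$ not yet assigned a truth value, set all its literals to true (and, by the implication invariant forced on us, all literals in the component $\sigma(C)$ containing their complements to false), then continue. The crucial point making this well-defined is the skew-symmetry of $\D{F}$: by Observation~\ref{obs:implication graphs}, there is a path from $u$ to $v$ iff there is a path from $\bar v$ to $\bar u$, so the involution $\sigma$ maps the strongly connected component of $x$ to that of $\bar x$, and reverses the topological order. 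Because $x$ and $\bar x$ are in \emph{different} components by hypothesis, $\sigma$ has no fixed component, so this pairing of components is consistent; and because $\sigma$ reverses topological order, when we reach $C$ in reverse topological order, $\sigma(C)$ has not yet been processed, so there is no conflict in setting $C$ true and $\sigma(C)$ false.

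\medskip

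\noindent
It then remains to verify that the resulting assignment $\tau$ satisfies every clause. Take a clause $\{l_1,l_2\}$ and suppose for contradiction $\tau(l_1)=\tau(l_2)=0$, i.e.\ $\tau(\bar l_1)=\tau(\bar l_2)=1$. The arcs $(\bar l_1,l_2)$ and $(\bar l_2,l_1)$ are present in $\D{F}$, so $l_2$ is reachable from $\bar l_1$. Now $\bar l_1$ is true means its component was assigned true; but $l_2$ is false means its component was assigned false, i.e.\ processed \emph{after} the component of $\bar l_1$ was assigned — yet $l_2$ is reachable from $\bar l_1$, so the component of $l_2$ comes no later than that of $\bar l_1$ in topological order, hence is processed no later; combined with the fact that a true-assigned component cannot reach a false-assigned component (a false component's complement is true and was processed earlier, and reachability would force a true literal to imply a false one through the invariant), we get a contradiction. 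I expect the main obstacle to be making this last consistency argument fully rigorous — precisely tracking how the topological order, the greedy assignment, and the implication invariant interact so that no clause is left unsatisfied and no component is forced into an inconsistent value; the skew-symmetry (Observation~\ref{obs:implication graphs}) is exactly the tool that keeps the complement-pairing of components coherent and prevents such conflicts.
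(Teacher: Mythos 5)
The paper does not prove this statement; it is imported verbatim as a known result of Aspvall, Plass and Tarjan, so there is no in-paper proof to compare against. Your proposal is the standard (and correct) proof of that classical theorem: the implication invariant gives the forward direction, and the reverse-topological greedy assignment on the condensation, made consistent by the skew-symmetry of \D{$F$} (your Observation~\ref{obs:implication graphs}), gives the converse.

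The one place you flag as a potential obstacle — verifying that the constructed $\tau$ satisfies every clause — closes cleanly if you phrase the assignment non-procedurally: let $\mathrm{comp}(l)$ denote the position of the strongly connected component of $l$ in a fixed topological order of the condensation, and set $\tau(l)=1$ iff $\mathrm{comp}(\bar l)<\mathrm{comp}(l)$. This is well defined since $l$ and $\bar l$ lie in distinct components, and it is consistent ($\tau(l)=1-\tau(\bar l)$) because $\sigma$ reverses the order. If a clause $\{l_1,l_2\}$ were falsified, then $\mathrm{comp}(l_1)<\mathrm{comp}(\bar l_1)$ and $\mathrm{comp}(l_2)<\mathrm{comp}(\bar l_2)$, while the arcs $(\bar l_1,l_2)$ and $(\bar l_2,l_1)$ force $\mathrm{comp}(\bar l_1)\leq\mathrm{comp}(l_2)$ and $\mathrm{comp}(\bar l_2)\leq\mathrm{comp}(l_1)$; chaining these yields $\mathrm{comp}(l_1)<\mathrm{comp}(l_1)$, a contradiction. (Note also that in your sketch the sentence ``the component of $l_2$ comes no later than that of $\bar l_1$ in topological order'' has the inequality reversed — reachability from $\bar l_1$ to $l_2$ puts $\mathrm{comp}(\bar l_1)$ no later than $\mathrm{comp}(l_2)$ — though your subsequent conclusion about processing order in the reverse sweep is the right one.)
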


\begin{observation}\label{obs:commute}
Given a {\twocnf} formula $F$, let $D$ be the implication graph of this formula and let $C$ be a subset of the clauses of $F$. Let $C_D$ be the corresponding set of arcs in the graph $D$. Then, the implication graph of $F-C$ is the same as the graph $D\setminus C_D$.
\end{observation}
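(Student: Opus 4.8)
The plan is to unfold Definition~\ref{def:implgraph} on both sides of the claimed equality and observe that the arc set of an implication graph is built up clause by clause in a way that is \emph{local} to each clause. Concretely, for a clause $c=\{l_1,l_2\}\in\mathscr{C}(F)$ write $A_c=\{(\bar l_1,l_2),(\bar l_2,l_1)\}$ for the pair of arcs that Definition~\ref{def:implgraph} associates to $c$. Then, by definition, the arc set of $D=\D{$F$}$ is $\bigcup_{c\in\mathscr{C}(F)}A_c$, and, by the way $C_D$ is defined, $C_D=\bigcup_{c\in C}A_c$. Since $F-C$ is the {\twocnf} formula whose clause set is $\mathscr{C}(F)\setminus C$, its implication graph $\D{$F-C$}$ has arc set $\bigcup_{c\in\mathscr{C}(F)\setminus C}A_c$ over the same literal vertex set. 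Hence it remains to verify that $\bigcup_{c\in\mathscr{C}(F)\setminus C}A_c=\big(\bigcup_{c\in\mathscr{C}(F)}A_c\big)\setminus\big(\bigcup_{c\in C}A_c\big)$, i.e. that deleting the arcs of the clauses in $C$ does not inadvertently delete an arc that also belongs to a surviving clause.

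The one point that needs a short argument is therefore that the sets $A_c$ are pairwise disjoint, equivalently that each arc of $D$ is contributed by a unique clause. For this I would observe that an arc $(a,b)$ lies in $A_c$ for $c=\{l_1,l_2\}$ precisely when $(a,b)=(\bar l_1,l_2)$ or $(a,b)=(\bar l_2,l_1)$, and in both cases this forces $c=\{\bar a,b\}$. Thus the assignment $(a,b)\mapsto\{\bar a,b\}$ sends every arc of $D$ to the unique clause that produced it, which gives $A_c\cap A_{c'}=\emptyset$ whenever $c\neq c'$; here we use that clauses have two literals, and the degenerate unit-clause case $\bar a=b$ does not arise. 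Combined with the decomposition of the previous paragraph, this yields $A(\D{$F-C$})=A(D)\setminus C_D$.

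For the vertex sets, one takes the vertex set of an implication graph to be the full set of literals of the ambient formula $F$; the literals of $F-C$ form a subset of these, the only possible discrepancy being isolated vertices corresponding to variables that occurred exclusively in deleted clauses, which is immaterial for everything implication graphs are used for here (strong connectivity and separators, cf. Theorem~\ref{thm:2sat testing}). Putting the two pieces together, $\D{$F-C$}$ and $D\setminus C_D$ have the same vertex set and the same arc set, hence they are equal. I do not expect any genuine obstacle: the proof is pure definition-chasing, and the only substantive step is the per-arc uniqueness of the originating clause, which is immediate from the shape of the arcs in Definition~\ref{def:implgraph}.
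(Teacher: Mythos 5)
The paper offers no proof for Observation~\ref{obs:commute}; it is stated as immediate from Definition~\ref{def:implgraph}. Your proposal supplies exactly the definition-unfolding that the paper leaves implicit, and it is correct: the decomposition $A(D)=\bigcup_c A_c$, the observation that $(a,b)\mapsto\{\bar a,b\}$ recovers the unique originating clause so that the $A_c$ are pairwise disjoint, and the remark about possibly-isolated vertices being immaterial for strong connectivity are all sound and address the only points where something could conceivably go wrong. One small inaccuracy worth noting: you claim the degenerate unit-clause case $\bar a=b$ ``does not arise,'' but the paper's Krom formulas are allowed to have clauses with fewer than two literals; this does not damage your argument, since the map $(a,b)\mapsto\{\bar a,b\}$ still identifies a unique clause (the unit clause $\{b\}$) in that case, so pairwise disjointness of the $A_c$ survives regardless.
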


\begin{lemma}Let $F$ be a {\twocnf} formula on $n$ variables $x_1,\dots, x_n$. Then, $(F,k)$ is a {\Yes} instance of {\almsatc} iff $($\D{$F$}$,{\cal T}=\{\{x_1\},\dots,\{x_n\}\},k)$ is a {\Yes} instance of {\oneskewsymmc}.
 \end{lemma}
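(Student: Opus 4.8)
The plan is to prove the biconditional by translating, in both directions, between deletion of a set of clauses making $F$ satisfiable and a self-conjugate arc-separator in the implication graph $\D{F}$ that handles every singleton family member $\{x_i\}$. The key fact linking the two worlds is Theorem~\ref{thm:2sat testing}: a $\twocnf$ formula is satisfiable exactly when no variable $x_i$ has $x_i$ and $\bar x_i$ in the same strongly connected component of its implication graph; together with Observation~\ref{obs:commute}, which says that deleting clauses commutes with taking the implication graph.

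For the forward direction, suppose $(F,k)$ is a \Yes{} instance of \almsatc{}, witnessed by a set $S_c$ of at most $k$ clauses with $F\setminus S_c$ satisfiable. Let $S$ be the set of arcs of $\D{F}$ arising from the clauses in $S_c$. Each clause $\{l_1,l_2\}$ contributes the conjugate pair $(\bar l_1, l_2)$ and $(\bar l_2, l_1) = \sigma(\bar l_1,l_2)$, so $S$ is self-conjugate and $|S|\le 2k$. By Observation~\ref{obs:commute}, $\D{F}\setminus S = \D{F\setminus S_c}$, and since $F\setminus S_c$ is satisfiable, Theorem~\ref{thm:2sat testing} tells us that for every variable $x_i$, the literals $x_i$ and $\bar x_i = \sigma(x_i)$ lie in distinct strongly connected components of $\D{F}\setminus S$. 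Hence for each singleton $\{x_i\}\in{\cal T}$ the (unique) vertex $v_1 = x_i$ satisfies the required separation condition, so $S$ is a skew-symmetric multicut of size at most $2k$, and $(\D{F},{\cal T},k)$ is a \Yes{} instance of \oneskewsymmc{}.

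For the converse, suppose $(\D{F},{\cal T},k)$ is a \Yes{} instance of \oneskewsymmc{}, witnessed by a self-conjugate arc set $S$ with $|S|\le 2k$ such that for each $i$, $x_i$ and $\sigma(x_i)$ lie in distinct strongly connected components of $\D{F}\setminus S$. Here I would be a little careful: an arbitrary self-conjugate $S$ need not be the arc set of a set of clauses, but every arc of $\D{F}$ comes from some clause, so I can take $S_c$ to be the set of all clauses of $F$ that contribute at least one arc of $S$, and then the arc set $S_c^D$ generated by $S_c$ satisfies $S \subseteq S_c^D$ and $|S_c|\le |S|\le 2k$; one then notes that $|S_c| \le k$ because each clause contributes a full conjugate pair of arcs, so $S$ contains $|S|/2 \ge |S_c|$ pairs, giving $|S_c| \le k$. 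Since deleting more arcs only breaks more paths, $x_i$ and $\bar x_i$ remain in distinct strongly connected components of $\D{F}\setminus S_c^D = \D{F\setminus S_c}$ (using Observation~\ref{obs:commute}), for every $i$; by Theorem~\ref{thm:2sat testing}, $F\setminus S_c$ is satisfiable, so $(F,k)$ is a \Yes{} instance of \almsatc{}.

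The main subtlety — the step I expect to require the most care — is the clause/arc bookkeeping in the converse direction: making sure that passing from an arbitrary self-conjugate arc set back to an honest set of clauses does not inflate the budget past $k$, and that enlarging $S$ to $S_c^D$ preserves the separation property. Both are easy once phrased correctly (enlarging the deleted set can only disconnect more, and each clause accounts for exactly one conjugate pair of arcs), but they are the places where one could accidentally lose a factor of $2$ or break the reduction, so I would state them explicitly rather than leave them implicit.
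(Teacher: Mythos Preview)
Your proof is correct and follows essentially the same approach as the paper: both directions are handled via Theorem~\ref{thm:2sat testing} and Observation~\ref{obs:commute}, translating between clause deletions and self-conjugate arc deletions. Your converse direction is in fact slightly more careful than the paper's, which simply writes ``let $C$ be the set of clauses of $F$ corresponding to $C_D$'' and asserts $\vert C\vert\le k$ without explicitly addressing the possibility that the arc set generated by $C$ might strictly contain $C_D$; your observation that enlarging the deleted arc set can only help, together with the one-clause-per-conjugate-pair count, makes this step airtight.
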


\begin{proof} Suppose that $C$ is a set of clauses such that $\vert C\vert \leq k$ and $F-C$ is satisfiable and let $C_D$ be the corresponding set of arcs in $D$. Then, by Theorem~\ref{thm:2sat testing}, no literal of $F$ appears in the same strongly connected component as its complement in the implication graph of $F-C$. However, by Observation~\ref{obs:commute}, we have that $C_D$ is a set of arcs such that no vertex and its conjugate lie in the same strongly connected component of $D\setminus C_D$, which implies that $C_D$ is a solution for the instance of {\oneskewsymmc}.

Conversely, let $C_D$ be a self-conjugate set of arcs such that $\vert C_D\vert\leq 2k$ and no vertex in ${\cal T}$ lies in the same strongly connected component as its conjugate in $D\setminus C_D$. Let $C$ be the set of clauses of $F$ corresponding to $C_D$. Since $C_D$ is self-conjugate, we have that $\vert C\vert\leq k$. Then, by Observation ~\ref{obs:commute} and Theorem~\ref{thm:2sat testing}, the formula $F-C$ is satisfiable, which implies that $C$ is indeed a solution for the instance of {\almsatc}. This completes the proof of the lemma. 
\end{proof}

\noindent
Since the graph $D(F)$ can be constructed in time $\bigoh(\vert F\vert)$ and has $\bigoh(\vert F\vert)$ arcs, we have the following theorem.

\begin{theorem}\label{thm:2sat}
  There is an algorithm that, given an instance $(F,k)$ of
  {\almsatc} ({\almsat}), runs in time $\bigoh(4^{k}k^4\ell)$ and either returns an assignment satisfying all but at most $k$ clauses of $F$ or correctly concludes that no such assignment exists.

\end{theorem}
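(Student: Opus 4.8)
The plan is to combine the two ingredients already in place: the linear‑time parameter‑preserving reduction from the variable version \almsat{} to the clause version \almsatc{} (cited from \cite{MarxRazmulticut}), and the lemma just proved showing that $(F,k)$ is a \Yes{} instance of \almsatc{} if and only if $(\D{F},{\cal T}=\{\{x_1\},\dots,\{x_n\}\},k)$ is a \Yes{} instance of \oneskewsymmc{}. Since $d=1$ here, Theorem~\ref{thm:skewsymmc} gives a running time of $\bigoh((4\cdot 1)^k k^4(\ell+m+n))=\bigoh(4^k k^4(\ell+m+n))$ on the produced skew‑symmetric instance, so the bulk of the work is bookkeeping to check that the size parameters $m$, $n$, and $\ell$ of that instance are all $\bigoh(\ell)$, where $\ell=\vert F\vert$ is the length of the input formula.

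First I would invoke the remark immediately preceding the theorem: the implication graph $\D{F}$ has one vertex per literal, hence $n=2\cdot\vert\var(F)\vert=\bigoh(\vert F\vert)$, and two arcs per clause of size two (and none from unit clauses, or we may assume clauses are cleaned up in linear time), so $m=\bigoh(\vert F\vert)$; the construction itself takes $\bigoh(\vert F\vert)$ time. The family ${\cal T}=\{\{x_1\},\dots,\{x_n\}\}$ consists of $\vert\var(F)\vert$ singletons, so its length $\ell_{\cal T}=d\cdot\vert{\cal T}\vert=\vert\var(F)\vert=\bigoh(\vert F\vert)$; alternatively one can supply a linear‑time violation oracle (check strongly connected components of $\D{F}$ once and report any variable whose two literals are in the same component) and appeal to Theorem~\ref{thm:skewsymmcimplicit} instead. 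Feeding $(\D{F},{\cal T},k)$ into the algorithm of Theorem~\ref{thm:skewsymmc} therefore costs $\bigoh(4^k k^4\ell)$ time and returns either a skew‑symmetric multicut $S$ of size at most $2k$ or a correct \No{} answer.

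Next I would translate the output back. By the lemma, a self‑conjugate arc set $S$ with $\vert S\vert\le 2k$ separating every variable from its negation in $\D{F}\setminus S$ corresponds, via Observation~\ref{obs:commute} and Theorem~\ref{thm:2sat testing}, to a set $C$ of at most $k$ clauses whose removal leaves $F-C$ satisfiable; a satisfying assignment of $F-C$ can be read off in linear time from the strongly connected component structure of $\D{F}\setminus S$ by the standard \twosat{} algorithm \cite{AspvallPT79}. Conversely the lemma guarantees that if no multicut of size $\le 2k$ exists then no set of $\le k$ clauses works, so the \No{} answer is faithful. For the variable version \almsat{}, one first applies the linear‑time reduction of \cite{MarxRazmulticut} to \almsatc{} (preserving the parameter) and then the above, adding only an $\bigoh(\ell)$ overhead.

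The one point that needs genuine care — the "main obstacle," though it is mild — is verifying that the reduction to \almsatc{} from \almsat{} really is parameter‑preserving and linear‑time in the stated sense, and that all three size quantities of the constructed skew‑symmetric instance are simultaneously linear in $\vert F\vert$ (in particular that we are allowed to charge the $\ell$ in Theorem~\ref{thm:skewsymmc} to $\vert F\vert$ rather than to $\vert{\cal T}\vert$ written out explicitly); both follow from the cited results and the structure of $\D{F}$, but they are the steps where an off‑by‑a‑polynomial‑factor error could creep in. Everything else is a direct composition of Theorem~\ref{thm:skewsymmc} (with $d=1$) and the preceding lemma, so the final running time $\bigoh(4^k k^4\ell)$ follows.
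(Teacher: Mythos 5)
Your proposal is correct and follows essentially the same route as the paper: combine the preceding lemma (reduction of \almsatc{} to \oneskewsymmc{} via the implication graph) with Theorem~\ref{thm:skewsymmc} at $d=1$, after checking that $n$, $m$, and the length of ${\cal T}$ for $\D{F}$ are all $\bigoh(\vert F\vert)$. The paper states this more tersely, but the bookkeeping you spell out (including the linear-time extraction of a satisfying assignment from the SCC structure and the citation of the linear-time \almsat{}-to-\almsatc{} reduction) is exactly what is implicitly being invoked.
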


\noindent
Furthermore, there are known linear time parameter preserving reductions from {\sc Edge Bipartization} to {\oct} and from {\oct} to {\almsatc} 
(see \cite[Page Number -- 72 ]{wernicke2003algorithmic} and \cite{KhotR02}). The reduction from {\edgebip} to {\oct} increases the the number of edges and vertices in the graph by a factor of $\bigoh(k)$ and the reduction from {\oct} to {\almsatc} is both parameter as well as size preserving.
Therefore, have the following corollaries.

\begin{theorem}
 {\sc Edge Bipartization} and {\sc Odd Cycle Transversal}  can be solved in time $\bigoh(4^kk^5(m+n))$ and $\bigoh(4^kk^4(m+n))$ respectively where $m$ and $n$ are the number of edges and vertices in the input graph respectively. \end{theorem}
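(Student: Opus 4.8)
The plan is to obtain this as a direct corollary of Theorem~\ref{thm:2sat}, by composing the two known linear-time parameter-preserving reductions recalled just above (from {\edgebip} to {\oct}, and from {\oct} to {\almsatc}) and then carefully tracking how the instance size propagates through the composition, since the size blow-up is what accounts for the $k^4$ versus $k^5$ discrepancy in the two stated bounds.

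First I would treat {\oct}. The standard reduction from {\oct} to {\almsatc} (see~\cite{KhotR02}) maps $(G=(V,E),k)$ to a {\twocnf} formula $F$ with one variable per vertex and $\bigoh(1)$ clauses per edge, so that a vertex set $S$ with $|S|\le k$ and $G\setminus S$ bipartite corresponds exactly to a deletable clause set making $F$ satisfiable, with the parameter preserved and $\ell = \bigoh(m+n)$, and the formula is produced in time $\bigoh(m+n)$. Running the algorithm of Theorem~\ref{thm:2sat} on $(F,k)$ then costs $\bigoh(4^k k^4 (m+n))$, which is the claimed bound for {\oct}; the linear-time cost of the reduction itself is absorbed.

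For {\edgebip} I would prepend the known linear-time reduction to {\oct} (see~\cite[p.~72]{wernicke2003algorithmic}), which replaces each edge by a small gadget, preserves the parameter, but inflates the number of vertices and edges by a factor $\bigoh(k)$; thus it outputs an {\oct} instance of size $\bigoh(k(m+n))$ in time $\bigoh(k(m+n))$. Composing with the {\oct}-to-{\almsatc} reduction gives an {\almsatc} instance with parameter $k$ and length $\ell=\bigoh(k(m+n))$, so Theorem~\ref{thm:2sat} runs in time $\bigoh(4^k k^4\cdot k(m+n)) = \bigoh(4^k k^5(m+n))$. There is no genuine combinatorial obstacle here — the only thing requiring care is the bookkeeping: checking that both cited reductions leave the parameter equal to $k$ (so no extra $c^k$ overhead appears) and that the $\bigoh(k)$ size blow-up in the {\edgebip}-to-{\oct} step is precisely what upgrades the $k^4$ of {\oct} to the $k^5$ of {\edgebip}.
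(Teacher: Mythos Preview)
Your proposal is correct and matches the paper's own argument essentially line for line: the paper also derives the theorem as a corollary of Theorem~\ref{thm:2sat} via the two cited reductions, noting explicitly that the {\oct}-to-{\almsatc} reduction is parameter- and size-preserving while the {\edgebip}-to-{\oct} reduction blows up the instance size by a factor of $\bigoh(k)$, which is exactly what turns $k^4$ into $k^5$. There is nothing to add.
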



\subsection{Linear time algorithm for {\delqbackdoor}}

A CNF formula $F$
is $\qhorn$ if there is a \emph{certifying function} $\beta : \var(F) \cup \overline{\var(F)}
\rightarrow \{0,\frac{1}{2},1\}$ with $\beta(x)=1-\beta(\bar x)$ for
every $x \in \var(F)$ such that $\sum_{l \in C}\beta(l) \leq 1$ for
every clause $C$ of $F$. In this subsection, we prove Theorem~\ref{thm:delapproxqhorn}.  We begin by recalling the notion of a quadratic cover given by Boros et al. \cite{BorosHammerSun94}.

\begin{definition}
Given a CNF formula $F$, the {\bf quadratic cover} of $F$, is a Krom formula denoted by $F_2$ and is defined as follows. Let $x_1,\dots,x_n$ be the variables of $F$. For every clause $C$, we have $\vert C\vert-1$ new variables $y^C_1,\dots, y^C_{\vert C\vert -1}$. We order the literals in each clause according to their variables, that is, a literal of $x_i$ will occur before a literal of $x_j$ if $i<j$. Let $l^C_1,\dots, l^C_{\vert C\vert}$ be the literals of the clause $C$ in this order. The quadratic cover is defined as. 

\medskip

\hspace{-15 pt}$F_2=\hspace{3 pt} \bigcup_{C\in F}\bigcup_{1\leq i\leq \vert C\vert-1}\{\{l^C_i, y^C_i\}, \{\bar y^C_i, l^C_{i+1}\}\}\hspace{5 pt}$ 
$\cup \hspace{5 pt}\bigcup_{C\in F}\bigcup_{1\leq i\leq \vert C\vert -2}\{\{\bar y^C_i,y^C_{i+1}\}\}$.
\end{definition}

\begin{observation}
\label{simpleobs}
 If $F$ is a CNF formula of length $\ell$, then the number of arcs in \D{$F_2$} is $\bigoh(\ell)$.
\end{observation}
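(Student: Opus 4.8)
The plan is to bound the number of clauses of $F_2$ first, and then use the fact that the implication graph of a Krom formula has a number of arcs linear in its number of clauses.

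First I would count the clauses of $F_2$. By the definition of the quadratic cover, each clause $C$ of $F$ contributes $2(\vert C\vert-1)$ clauses coming from the first two families displayed in the definition together with $\vert C\vert-2$ clauses from the third family, that is, $\bigoh(\vert C\vert)$ clauses in total, each of size at most two; for clauses $C$ with $\vert C\vert \le 2$ some of these index ranges are empty and the count is even smaller, and unit clauses of $F$ (if present) can simply be kept as they are, contributing $\bigoh(1)$ each. Summing over all clauses of $F$, the number of clauses of $F_2$ is $\sum_{C \in F} \bigoh(\vert C\vert) = \bigoh(\sum_{C \in F} \vert C\vert) = \bigoh(\ell)$.

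Second, I would invoke Definition~\ref{def:implgraph}: a two-literal clause $\{l_1,l_2\}$ of a Krom formula gives rise to exactly the two arcs $(\bar l_1, l_2)$ and $(\bar l_2, l_1)$ in its implication graph (and a unit clause $\{l\}$ to the single arc $(\bar l, l)$). Hence the number of arcs of \D{$F_2$} is at most twice the number of clauses of $F_2$, which by the previous paragraph is $\bigoh(\ell)$, as claimed.

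I do not expect any real obstacle here: the whole argument is a direct enumeration of the pieces produced by the quadratic-cover construction. The only thing needing a little care is the bookkeeping for small clauses (sizes $1$ and $2$), where some of the index ranges in the definition degenerate; these contribute only lower-order terms and do not affect the linear bound.
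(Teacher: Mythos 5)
Your proof is correct and is precisely the straightforward counting argument the paper leaves implicit: the observation is stated without proof, and the intended justification is exactly your two steps (each clause $C$ of $F$ spawns $\bigoh(\vert C\vert)$ Krom clauses in $F_2$, and each Krom clause yields at most two arcs in the implication graph). Your side remark about degenerate index ranges for clauses of size one or two is a harmless detail that does not affect the $\bigoh(\ell)$ bound.
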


\noindent
We require the following characterization of $\qhorn$ formulas.

\begin{lemma}[\cite{BorosHammerSun94}]
\label{lem:qhorn testing} A CNF formula $F$ is $\qhorn$ iff no clause of $F$ has three literals $l_1$, $l_2$, $l_3$ such that each $l_i$ and $\bar l_i$ are  in the same strongly connected component of \D{$F_2$}.
\end{lemma}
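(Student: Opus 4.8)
The plan is to deduce both implications from one structural observation about \D{$F_2$}. Within a single clause $C=\{l_1^C,\dots,l_m^C\}$ of $F$, with the literals ordered by variable index as in the definition of the quadratic cover, the chain of auxiliary variables $y_1^C,\dots,y_{m-1}^C$ creates a directed path $\overline{l_i^C}\leadsto l_j^C$ in \D{$F_2$} for \emph{every} $i\ne j$: for $i<j$ this is $\overline{l_i^C}\to y_i^C\to y_{i+1}^C\to\cdots\to y_{j-1}^C\to l_j^C$, each arc coming from one of the three families of $2$-clauses of $F_2$, and for $i>j$ it follows by applying Observation~\ref{obs:implication graphs} to the previous case with the two indices swapped. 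I assume $F$ has no tautological clause (so the literals of any clause lie on pairwise distinct variables); tautologies are dealt with separately. Call a strongly connected component of \D{$F_2$} \emph{mixed} if it contains some literal together with its complement; since $\sigma$ permutes components, it fixes exactly the mixed ones.

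For the forward implication, from a certifying function $\beta$ of $F$ I would first extend $\beta$ to the auxiliary variables by $\beta(y_i^C):=\max_{j>i}\beta(l_j^C)$ and check that then every $2$-clause $\{a,b\}$ of $F_2$ satisfies $\beta(a)+\beta(b)\le 1$; the only non-routine case, the clause $\{l_i^C,y_i^C\}$, reduces to $\beta(l_i^C)+\max_{j>i}\beta(l_j^C)\le 1$, which follows from $\sum_{l\in C}\beta(l)\le 1$. Since each arc $(\bar a,b)$ of \D{$F_2$} comes from such a $2$-clause and gives $\beta(b)\le 1-\beta(a)=\beta(\bar a)$, the extended $\beta$ is non-increasing along every arc, hence constant on every component; in particular every literal lying in a mixed component gets $\beta$-value $\tfrac12$. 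A clause of $F$ carrying three literals each in a mixed component would then have $\beta$-sum at least $\tfrac32$, contradicting the certifying inequality.

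For the converse, assume no clause of $F$ has such a triple. Using the within-clause paths and conjugacy, I would first show that two literals of one clause that lie in mixed components must lie in the \emph{same} mixed component, so each clause of $F$ has at most two literals in mixed components. Next I would construct a function $\beta$ on all of \D{$F_2$} that is constant on components, non-increasing along arcs, and satisfies $\beta(\bar l)=1-\beta(l)$: put $\beta\equiv\tfrac12$ on the mixed components; observe that no non-mixed component can lie on a directed path joining two mixed components (conjugating such a path would drop the component and its $\sigma$-image into a common closed walk, making it mixed), so each non-mixed component falls into exactly one of three classes --- reachable from a mixed component (set $\beta:=0$), reaching a mixed component (set $\beta:=1$), or neither; the first two classes are disjoint and interchanged by $\sigma$, and the third is $\sigma$-invariant, so on the skew-symmetric, mixed-component-free subgraph it induces I would apply Theorem~\ref{thm:2sat testing} to get a $\{0,1\}$-labelling that is non-increasing along arcs and complementary on conjugate pairs. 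A short case analysis over the endpoints of an arc then confirms $\beta$ has the required three properties. Finally, for a clause $C=\{l_1^C,\dots,l_m^C\}$ of $F$, the non-increasing property of $\beta$ along $\overline{l_i^C}\leadsto l_j^C$ gives $\beta(l_i^C)+\beta(l_j^C)\le 1$ for all $i\ne j$, so either some literal of $C$ has value $1$ (forcing all others to $0$) or every literal has value in $\{0,\tfrac12\}$ with value $\tfrac12$ only for literals in mixed components, of which $C$ has at most two; either way $\sum_{l\in C}\beta(l)\le 1$, so $\beta$ certifies that $F$ is \qhorn.

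The hard part is the converse, specifically manufacturing one \emph{globally} consistent $\beta$ on \D{$F_2$} from the merely local constraints attached to the components: the classification of non-mixed components by their reachability relation with the mixed ones, together with the appeal to Theorem~\ref{thm:2sat testing} on the portion untouched by mixed components, is the technical core. Turning the pairwise bounds $\beta(l_i^C)+\beta(l_j^C)\le 1$ into the single clause bound $\sum_{l\in C}\beta(l)\le 1$ is the other spot where the ``at most two literals per clause in mixed components'' fact is used.
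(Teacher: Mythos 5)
The paper does not prove this lemma: it is quoted verbatim from Boros, Hammer, and Sun~\cite{BorosHammerSun94} and used as a black box. There is therefore no in-paper argument to compare against, and I will evaluate your reconstruction on its own merits.

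Your proof is correct. The within-clause path observation ($\bar{l}_i^C\leadsto l_j^C$ for all $i\ne j$, via the $y$-chain one way and Observation~\ref{obs:implication graphs} the other) is exactly the right engine, and the verification that the extension $\beta(y_i^C):=\max_{j>i}\beta(l_j^C)$ certifies all three families of $2$-clauses of $F_2$ is routine but complete (for $\{l_i^C,y_i^C\}$ it uses nonnegativity of the $\beta$-values together with $\sum_{l\in C}\beta(l)\le 1$). The converse is the substantive part, and your three-way classification of non-mixed components by reachability from/to mixed components, plus the key observation that a non-mixed component cannot sit on a path between two mixed components, is a clean way to glue the local $\tfrac12$-values into a global certifying function; the case analysis over arc types (A-to-B, A-to-C, C-to-B, mixed-to-C, etc.) all forbidden as you indicate, so monotonicity holds across the classification. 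Two small remarks. First, the sentence ``two literals of one clause that lie in mixed components must lie in the same mixed component, \emph{so} each clause of $F$ has at most two literals in mixed components'' is a non sequitur: the ``at most two'' is a direct restatement of the hypothesis and does not follow from the ``same component'' observation (which, as far as I can see, is not actually used anywhere in your argument and can be dropped). Second, when you invoke Theorem~\ref{thm:2sat testing} on the class-C subgraph you need to take $\beta:=1-\tau$ for a satisfying assignment $\tau$ to get the non-\emph{increasing} orientation and the complementarity $\beta(\bar l)=1-\beta(l)$; this is implicit but worth saying. The tautology caveat you flag is genuine (the ordering in the definition of $F_2$ implicitly assumes distinct variables within a clause) but harmless, since such clauses contribute trivially to the $q$-Horn inequality.
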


Recall that a \emph{deletion $\CCC$\hy backdoor set} of $F$
is a set $B$ of variables such that $F \justminus B \in \CCC$.  These characterizations allow us to give a  linear time reduction from  {\delqbackdoor} to {\threeskewsymmc}.

\begin{theorem}\label{thm:delapproxqhorn}
  There is an algorithm that, given an instance $(F,k)$ of
  {\delqbackdoor}, runs in time $\bigoh(12^{k}k^5\ell)$ and either returns a deletion \qhorn\hy backdoor set of $F$ of size at
  most $k$ or correctly concludes that no such set exists, where $\ell$ is the length of $F$.
\end{theorem}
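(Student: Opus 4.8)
The goal is to reduce {\delqbackdoor} to {\threeskewsymmc} in linear time so that Theorem~\ref{thm:skewsymmcimplicit} applies with $d=3$, giving the claimed running time of $\bigoh((4\cdot 3)^k k^5 \ell) = \bigoh(12^k k^5 \ell)$. The plan is to build, from the input CNF formula $F$, the implication graph $D = \D{F_2}$ of its quadratic cover $F_2$, together with the natural involution $\sigma(l)=\bar l$ on literals. By Observation~\ref{simpleobs}, $D$ has $\bigoh(\ell)$ arcs and $\bigoh(\ell)$ vertices, so $D$ can be constructed in linear time. The family $\cal T$ will consist of the $3$-sets $\{l_1,l_2,l_3\}$ of literals, one for each triple of literals occurring together in a single clause of $F$; by Lemma~\ref{lem:qhorn testing}, $F$ is {\qhorn} if and only if no such triple has all three of its literals lying in the same strongly connected component as their conjugates in $D$, which is exactly the ``no violated $3$-set'' condition of {\threeskewsymmc}.

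First I would make the reduction precise on the ``deletion'' side. Deleting a variable $x$ from $F$ removes all literals of $x$ from every clause of $F$; this changes $F_2$ (and hence $\D{F_2}$) in a controlled way. The key bookkeeping step is to identify, for each variable $x\in \var(F)$, a self-conjugate set $A_x$ of arcs of $D=\D{F_2}$ such that deleting $A_x\cup\sigma(A_x)$ from $D$ produces (up to isolated/irrelevant vertices) the implication graph of the quadratic cover of $F\setminus\{x\}$. Concretely, when $x$ is removed from a clause $C$, the two chain-clauses of $F_2$ incident to the literal of $x$ in $C$, namely $\{l^C_{i-1},y^C_{i-1}\}$-type and $\{\bar y^C_{i-1}, l^C_i\}$, $\{l^C_i, y^C_i\}$-type edges, get ``spliced'' together; one must check that the effect on strong connectivity of literal/conjugate pairs is faithfully captured by deleting the corresponding small bundle of arcs. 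I would then argue: if $B=\{x_{i_1},\dots,x_{i_r}\}$ with $r\le k$ is a deletion {\qhorn} backdoor set, then $S = \bigcup_{j} (A_{x_{i_j}}\cup \sigma(A_{x_{i_j}}))$ is a self-conjugate arc set whose deletion destroys every violated $3$-set (because $F\setminus B$ is {\qhorn}, apply Lemma~\ref{lem:qhorn testing}), and $|S|$ is $\bigoh(k)$ — but here is a subtlety, so see the next paragraph. Conversely, given a skew-symmetric multicut $S$, one extracts the variables whose ``gadget arcs'' it touches and shows this variable set is a deletion {\qhorn} backdoor set.

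The main obstacle — and the reason a direct reduction does not run in linear time, as the authors hint in the introduction — is the size blow-up: the number of arcs associated with a single variable $x$ can be $\Theta(\ell)$ (if $x$ occurs in many long clauses), so a naive reduction would map a backdoor set of size $k$ to a skew-symmetric multicut of size $\Theta(k\ell)$, which is far too large for the $(4d)^{k}$ factor to be meaningful. The fix, which I would carry out, is \emph{not} to hand {\cal T} to the algorithm explicitly, but to exploit Theorem~\ref{thm:skewsymmcimplicit}: provide a \emph{linear-time violation oracle} for $\cal T$. Given the current graph, this oracle computes strongly connected components of $D$ in $\bigoh(\ell)$ time (Tarjan), marks each literal $l$ with whether $l$ and $\bar l$ are co-SCC, then scans the clauses of $F$ once, returning the first clause containing three such ``bad'' literals as the violated $3$-set. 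This keeps the oracle cost at $\bigoh(\ell)$, so the running time from Theorem~\ref{thm:skewsymmcimplicit} is $\bigoh(12^k k^4 \ell)$ for the {\threeskewsymmc} instance; the extra factor of $k$ in the statement (yielding $k^5$) comes from the need to translate the returned arc-solution back into a variable backdoor set and, if necessary, to re-run or post-process over the $\bigoh(k)$ conjugate pairs in the solution.

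Finally I would assemble the pieces: run the algorithm of Theorem~\ref{thm:skewsymmcimplicit} on $(D=\D{F_2}, \sigma, k)$ with the violation oracle above; if it reports {\No}, then by the equivalence no deletion {\qhorn} backdoor set of size $\le k$ exists and we output that; if it returns a skew-symmetric multicut $S$ with $|S|\le 2k$, read off the set $B$ of variables of $F$ ``responsible'' for the arcs of $S$ (each conjugate pair of arcs in $S$ lies in some variable gadget $A_x$; take the union of those $x$'s), verify $|B|\le k$ and that $F\setminus B$ is {\qhorn} via Lemma~\ref{lem:qhorn testing} in linear time, and output $B$. The correctness of the ``read off'' step is where the structural care from the second paragraph is spent — one must show that a minimal self-conjugate multicut cannot ``waste'' arcs and therefore corresponds to at most $k$ variables — but granting the reduction lemma, the running-time bound is immediate from Theorem~\ref{thm:skewsymmcimplicit} and Observation~\ref{simpleobs}.
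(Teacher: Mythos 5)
There is a genuine gap. You correctly identify the central obstacle --- a variable $x$ occurring in many clauses corresponds to $\Theta(\ell)$ arcs of $\D{F_2}$, so deleting $x$ from $F$ cannot be simulated by deleting $\bigoh(1)$ arcs of $\D{F_2}$ --- but the fix you propose (supplying ${\cal T}$ via a violation oracle, i.e.\ Theorem~\ref{thm:skewsymmcimplicit}) addresses a different and independent problem, namely that the number of triples in ${\cal T}$ can be superlinear in $\ell$. The oracle does nothing to repair the size mismatch: in your reduction a deletion backdoor set $B$ with $\vert B\vert\le k$ maps to an arc set of size $\Theta(k\ell)$, so the {\threeskewsymmc} instance with budget $k$ can be a {\No}-instance even when $(F,k)$ is a {\Yes}-instance, and the forward direction of the equivalence fails outright. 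The converse direction is equally broken: a skew-symmetric multicut of $2k$ arbitrary arcs of $\D{F_2}$ (e.g.\ arcs incident to the auxiliary $y^C_i$ vertices, or arcs covering only some occurrences of a variable) need not correspond to the deletion of any set of variables, so the ``read off $B$ from $S$'' step and the claim that a minimal multicut ``cannot waste arcs'' have no justification.

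The missing idea is a gadget modification of $\D{F_2}$ (the graph $D_1$ in the paper) that makes the variable--arc correspondence exact. Each literal vertex $l_i$ is split into $l_i^-$ and $l_i^+$ joined by a single arc, with all in-arcs rerouted to $l_i^-$ and all out-arcs to $l_i^+$; deleting a variable then costs exactly one conjugate pair of arcs, regardless of how often the variable occurs. Simultaneously, each auxiliary vertex $y^C_i$ of the quadratic cover is replaced by $2k+1$ parallel copies, so that no solution of size at most $2k$ can afford to cut through a non-literal vertex; this forces every returned multicut to consist solely of literal arcs and hence to encode a set of at most $k$ variables. The violation oracle is still needed, but only for the reason the paper gives (listing ${\cal T}$ explicitly is too expensive), and the extra factor of $k$ in the running time comes from the graph $D_1$ having $\bigoh(k\ell)$ arcs due to the duplication, not from any post-processing of the solution.
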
 

\begin{proof}
The proof is by a reduction to {\threeskewsymmc}. We first construct the graph $D(F_2)$. We now define a graph $D_1$ which is a modification of $D(F_2)$ as follows. For every vertex $l_i$ in $D(F_2)$ corresponding to a positive literal in $F$, we have two vertices $l_i^+$ and $l_i^-$ and an arc $(l_i^-,l_i^+)$ and for every vertex $l_i$ in $D(F_2)$ corresponding to a negative literal we have two vertices $\bar l_i^+$ and $\bar l_i^-$ and an arc $(\bar l_i^+,\bar l_i^-)$. We say that an arc $(l_i^-,l_i^+)$ \emph{corresponds} to a (positive) literal $l_i$ and an arc $(\bar l_i^+,\bar l_i^-)$ \emph{corresponds} to a (negative) literal $l_i$.

Now, for every vertex $y$ in $D(F_2)$ which does not correspond to a literal of $F$, we add vertices $y_1,\dots, y_{k+1}$ and for every arc $(y,l_i)$ in $D(F_2)$, if $l_i$ is a positive literal, then we add arcs $(y_1,l_i^-),\dots,(y_{2k+1},l_i^-)$
and if $l_i$ is a negative literal, then we add arcs $(y_1,\bar l_i^-),\dots,(y_{2k+1},\bar l_i^-)$. For every arc $(l_i,y)$ in $D(F_2)$, if $l_i$ is a positive literal then we add arcs $(l_i^+,y_1),\dots,(l_i^+,y_{2k+1})$ and if $l_i$ is a negative literal then we add arcs $(\bar l_i^-,y_1),\dots,(\bar l_i^-,y_{2k+1})$. This completes the construction of $D_1$. Clearly, $D_1$ is also skew-symmetric. The purpose of modifying the graph $D(F_2)$ is simply to map literals of the input formula to arcs in the skew-symmetric graph and conversely to ensure that arcs which do not correspond to literals of the formula $F$ are unlikely to participate in skew-symmetric multicuts of size at most $k$.
We note that $\{l_1^+,l_2^+\}$ are contained in the same strongly connected component of $D_1$ if and only if $\{l_1,l_2\}$ are in the same strongly connected component of $D(F_2)$.

We now claim that $(F,k)$ is a {\Yes} instance of {\delqbackdoor} iff $(D_1,{\cal T},k,\emptyset)$ is a {\Yes} instance of {\threeskewsymmc} where ${\cal T}$ is the set of all triples of literals $\{l_1^+,l_2^+,l_3^+\}$ in $F$ such that $l_1,l_2,l_3$ occur in a clause in $F$.

Consider a solution $S$ for the instance of {\delqbackdoor} and let $S_D$ be the set of those arcs in $D_1$ which correspond to the literals of the variables in $S$. Clearly, $S_D$ is self-conjugate and $\vert S_D\vert\leq 2k$. We claim that $S_D$ is a skew-symmetric multicut for the given instance. If this were not the case, then there is a clause $C\in \mathscr{C}(F)$  and literals $l_1,l_2,l_3\in C$ such that $l_1^+,l_2^+,l_3^+$ each lie in the same strongly connected component of $D_1\setminus S_D$ as their complements.  Recall that by $\mathscr{C}(F)$ we denote the set of clauses of a CNF formula $F$.

However, by Lemma~\ref{lem:qhorn testing}, there is no violating triple in the graph $D(F_2) \setminus \lit(S)$ and therefore, there cannot be a violated set in the graph $D_1\setminus S_D$.

Conversely, consider a solution $S_D$ for the instance of {\threeskewsymmc}. It is easy to see from the construction of $D_1$ that $S_D$ is disjoint from arcs incident on any $y^C_i$. Therefore, the arcs in $S_D$ correspond to literals and hence variables in $F$. Let $S$ be the this set of variables. We claim that $S$ is a {\qhorn} deletion backdoor set. If this were not the case, then by Lemma~\ref{lem:qhorn testing}, there is a clause $C\in \mathscr{C}(F)$ and literals $l_1,l_2,l_3\in C$ such that $l_1,l_2,l_3$ each lie in the same strongly connected component of $D(F_2)\setminus \lit(S)$ as their complements. However, this implies that  $l_1^+,l_2^+,l_3^+$ each lie in the same strongly connected component of $D_1\setminus S_D$ as their complements, which is a contradiction. This completes the proof of correctness of the reduction.

Though this reduction is parameter preserving, it is not a linear time reduction since the number of triples we need to give as input to the {\threeskewsymmc} instance could be super linear in the length of the formula. However, by using an algorithm by Boros et al.~\cite{BorosHammerSun94} that runs in time $\bigoh(\ell)$ and returns a violated triple, we can use the above reduction which runs in time $\bigoh(k\ell)$ and returns a skew-symmetric graph with $\bigoh(k\ell)$ arcs, along with our algorithm for {\threeskewsymmc} to get an algorithm which runs in time $\bigoh(12^{k}k^5\ell)$. This completes the proof of the theorem.
\end{proof}

\noindent
Since every deletion $\qhorn$ backdoor set is also a strong $\qhorn$ backdoor set, Theorem~\ref{thm:delapproxqhorn} has the following corollary.

\begin{corollary}

 There is an algorithm for {\sc Satisfiability} that runs in time $\bigoh(12^kk^5\ell)$, where $k$ is the size of the smallest $\qhorn$ deletion backdoor set of the input formula.

\end{corollary}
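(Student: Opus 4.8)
The plan is to use the fixed-parameter tractable detection algorithm of Theorem~\ref{thm:delapproxqhorn} to find a small deletion $\qhorn$ backdoor set and then exploit it: once such a set $B$ is in hand, satisfiability of $F$ reduces to $2^{\vert B\vert}$ satisfiability tests on $\qhorn$ formulas, each of which runs in polynomial (indeed linear) time by Boros et al.~\cite{BorosHammerSun94}. The only genuine wrinkle is that $k$, the size of a smallest deletion $\qhorn$ backdoor set, is not given as part of the input, so the algorithm must discover it by iterative deepening.

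Concretely, I would proceed as follows. First, invoke the algorithm of Theorem~\ref{thm:delapproxqhorn} on $(F,0), (F,1), (F,2), \dots$ in order, halting at the first value $i$ for which it returns a deletion $\qhorn$ backdoor set $B$. By the correctness guarantee of that theorem, no such set exists for $i < k$ and one of size at most $k$ is found for $i = k$, so this halts with $\vert B\vert \le k$. The cost of this phase is $\sum_{i=0}^{k}\bigoh(12^i i^5 \ell)$, which is $\bigoh(12^k k^5 \ell)$ because consecutive terms grow by a factor of at least $12$, so the sum is dominated by its last term.

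Second, recall (as noted just before the corollary, using that $\qhorn$ is clause-induced) that every deletion $\qhorn$ backdoor set is a strong $\qhorn$ backdoor set; hence $F[\tau] \in \qhorn$ for every truth assignment $\tau : B \to \{0,1\}$, and $F$ is satisfiable if and only if $F[\tau]$ is satisfiable for some such $\tau$. I would therefore enumerate the at most $2^k$ assignments of $B$, form each reduced formula $F[\tau]$ (of length at most $\ell$, built in $\bigoh(\ell)$ time), and decide its satisfiability with the linear-time $\qhorn$-satisfiability procedure of~\cite{BorosHammerSun94}, answering \Yes{} exactly when one of these tests succeeds. This phase takes $\bigoh(2^k \ell)$ time, which is absorbed into the $\bigoh(12^k k^5 \ell)$ bound, yielding the claimed running time.

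I do not expect a serious obstacle here: the argument is essentially a packaging of results already established. The one point deserving a sentence of care is the unknown-parameter bookkeeping — verifying that the iterative deepening does not inflate the running time (it does not, by the geometric-series estimate above) and that the backdoor-detection calls for $i < k$ all legitimately report failure (guaranteed by Theorem~\ref{thm:delapproxqhorn}). A minor secondary point is to confirm that $\qhorn$-satisfiability testing is indeed available in $\bigoh(\ell)$ time, for which we rely on~\cite{BorosHammerSun94}.
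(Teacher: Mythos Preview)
Your proposal is correct and matches the paper's intended argument; the paper itself offers no proof beyond the sentence ``Since every deletion $\qhorn$ backdoor set is also a strong $\qhorn$ backdoor set, Theorem~\ref{thm:delapproxqhorn} has the following corollary,'' so you have simply spelled out the standard backdoor-evaluation machinery (iterative deepening to discover $k$, then $2^{\vert B\vert}$ calls to the linear-time $\qhorn$ satisfiability tester) that the paper leaves implicit.
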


 \section{Conclusions}
 
We introduced the {\skewsymmc} problem, a general graph separation problem which generalizes a large number of well-studied problems, and described an {\FPT} algorithm for this problem with a linear dependence on the input size and a moderate dependence on the parameter. This result gives the first linear time FPT algorithms for {\oct}, {\almsatc} and {\delqbackdoor}. We believe that there are more graph separation problems which can be reduced to {\skewsymmc} and that our algorithm can be used as a ``tool''  to give (linear time) {\FPT} algorithms for other problems which have graph separation at their core. We would like to remark that, to keep our analysis simple, we have not optimized the polynomial dependence of the running times on $k$.

We would also like to point out that the algorithms for variants of {\edgebip} and {\oct} studied in the paper of Marx et al.~\cite{MarxOSR13} use the almost linear time algorithm for {\oct} of Kawarabayashi and Reed or the quadratic time algorithm of Reed et al.~\cite{ReedSV04}. Therefore, using our algorithm instead of these algorithms results in linear time {\FPT} algorithms for these variants studied by Marx et al.

Finally, we leave open the kernelization complexity of this problem. Given that a (randomized) polynomial kernel for {\almsatc} exists~\cite{KratschW12}, it would be a natural goal to see if such a result extends to this much more general problem.

\longversion{
\bibliographystyle{siam}
\bibliography{literature}
}

\end{document}